\numberwithin{equation}{section}
\newcommand{\kla}{\left ( }
\newcommand{\mer}{\right ) }
\newcommand{\for}{\begin{eqnarray*}}
\renewcommand{\mel}{\end{eqnarray*}}
\def\fr{\begin{align*}}
\newcommand{\kl}{\pl \le \pl}
\newcommand{\gl}{\pl \ge \pl}
\newcommand{\lel}{\pl = \pl}
\newcommand{\wt}{\widetilde}
\newcommand{\ten}{\otimes}
\newcommand{\pl}{\hspace{.1cm}}
\newcommand{\ran}{\rangle}
\newcommand{\lan}{\langle}
\newcommand{\al}{\alpha}
\newcommand{\la}{\lambda}
\newcommand{\adm}{{\mathrm{ad}}}
\newcommand{\E}{{\mathcal E}}
\newcommand{\Bb}{{\mathbb{B}}}
\newcommand{\D}{{\mathcal D}}
\newcommand{\Ha}{{\mathcal H}}
\newcommand{\U}{{\mathcal U}}
\newcommand{\zd}[1]{{\color{blue}[ZD: #1]}}
\newcommand{\mj}[1]{{\color{magenta}[MJ: #1]}}
\newcommand{\pw}[1]{{\color{red}[PW: #1]}}
\newcommand{\ps}[1]{{\color{purple}[PS: #1]}}
\newcommand{\mc}{\mathcal}
\newcommand{\mb}{\mathbb}
\newcommand{\ez}{{\mathbb E}\vspace{0.1cm}}
\newtheorem{lemma}{Lemma}[section]
\newtheorem{prop}[lemma]{Proposition}
\newtheorem{theorem}[lemma]{Theorem}
\newtheorem{cor}[lemma]{Corollary}
\newtheorem{rmk}[lemma]{Remark}
\newtheorem{rem}[lemma]{Remark}
\newcommand{\re}{\begin{rem}\rm}
  \newcommand{\mar}{\end{rem}}
\newtheorem{exam}[lemma]{Example}
\newtheorem{defi}[lemma]{Definition}
\newcommand{\qd}{\end{proof}\vspace{0.5ex}}
\newcommand{\prf}{\begin{proof}[\bf Proof:]}
\newcommand{\xspace}{\hbox{\kern-2.5pt}}
\newcommand{\mm}{\mathrm{M}}
\DeclareMathOperator{\uu}{\mathcal{U}}
\begin{document}
\title{Lower bound for simulation cost of open quantum systems: Lipschitz continuity approach}
\author[,1]{Zhiyan Ding\footnote{zding.m@berkeley.edu, part. supported by Quantum Systems Accelerator}\,\;} 
\author[,2]{Marius Junge\footnote{mjunge@illinois.edu, part. supported by DMS 2247114}\,\;}
\author[,3]{Philipp Schleich\footnote{philipps@cs.toronto.edu}\,\;}
\author[,4,5]{Peixue Wu\footnote{p33wu@uwaterloo.ca, supported by the Canada First Research
Excellence Fund (CFREF)}\,\;}
\affil[1]{\emph{Department of Mathematics, University of California, Berkeley, USA}}
\affil[2]{\emph{Department of Mathematics, University of Illinois at Urbana and Champaign, USA}}
\affil[3]{\emph{Department of Computer Science, University of Toronto, Canada}}
\affil[4]{\emph{Department of Applied Mathematics, University of Waterloo, Canada}}
\affil[5]{\emph{Institute for Quantum Computing, University of Waterloo, Canada}}

\abstract{Simulating quantum dynamics is one of the most promising applications of quantum computers. While the upper bound of the simulation cost has been extensively studied through various quantum algorithms, much less work has focused on establishing the lower bound, particularly for the simulation of open quantum system dynamics. In this work, we present a general framework to calculate the lower bound for simulating a broad class of quantum Markov semigroups. Given a fixed accessible unitary set, we introduce the concept of convexified circuit depth to quantify the quantum simulation cost and analyze the necessary circuit depth to construct a quantum simulation scheme that achieves a specific order. Our framework can be applied to both unital and non-unital quantum dynamics, and the tightness of our lower bound technique is illustrated by showing that the upper and lower bounds coincide in several examples.}

\maketitle
\tableofcontents

\section{Introduction}\label{sec:intro}
Simulating physics was one of the primary motivations for the development of quantum computers, as initially suggested by~\cite{manin_computable_1980,benioff_computer_1980,feynman_simulating_1982}. The feasibility of such computations was first discussed in the context of the physical Church-Turing theorem~\cite{deutsch1985quantum}, which focuses on simulating closed quantum dynamics. In this direction, Hamiltonian simulation has long been considered the most promising application of quantum computers. Significant efforts have been dedicated to uncovering the potential of quantum computers for Hamiltonian evolution and developing efficient quantum algorithms. Several efficient quantum algorithms have been proposed for Hamiltonian evolution~\cite{berry2007efficient,berry2014exponential,berry2015hamiltonian,
Low2019hamiltonian,GilyenSuLowEtAl2019,PhysRevX.11.011020,childs2012hamiltonian}.

Recently, there has been increasing interest in using quantum computers to capture and simulate more general quantum dynamics, specifically open quantum system dynamics. Unlike closed quantum systems, open quantum systems involve interactions with an environment that affect the system during its evolution. Understanding and capturing the effect of the environment not only helps us track the evolution of physical systems but also serves as a new tool for state preparation, such as thermalizing quantum systems~\cite{rall2022thermal,chifang2023quantum,ding2023thermal} and preparing ground states~\cite{ding2023single}. In addition, the concept of the ``environment'' can also represent the noise that occurs in current quantum devices. The ability to simulate noise may also serve as a verifier for existing experiments and as a testing platform for understanding various aspects of quantum devices. Along with increasing interest in open quantum systems, efficient simulation algorithms have been developed in this area. Similarly to the general argument for closed quantum systems,~\cite{kliesch2011dissipative} argue that quantum computers in a gate-based model representing pure states are sufficiently resourceful to efficiently simulate open-system dynamics as well. They refer to this as the dissipative quantum Church-Turing theorem. 
Subsequently, several quantum algorithms have been developed for simulating the Lindblad quantum master equation~\cite{kliesch2011dissipative,cleve_et_al:LIPIcs.ICALP.2017.17,li2022simulating,chen2023quantum,PRXQuantum.5.020332,pocrnic2024quantumsimulationlindbladiandynamics}.

Although the upper bound of quantum cost for simulating quantum dynamics has been well studied through various quantum algorithms under different oracle assumptions, much less work has been done on establishing the lower bound. Specifically, we ask:

\emph{Given a set of accessible unitary gates and a quantum dynamics $T_t$, what is the minimum number of gates (minimal circuit depth) needed to simulate $T_t$?}

\noindent Studying the lower bound of circuit depth can not only help us understand the efficiency of our quantum simulation algorithms but also illustrate the inherent limitations of a predetermined accessible unitary gate set. The latter is crucial for understanding the capabilities of practical quantum computers, given the often limited availability of unitary gates.

In this work, our main contribution is that we present a general framework to provide a lower bound on the simulation cost of a broad class of quantum dynamics given by $T_t = \exp(tL)$. Here, 
\begin{equation}\label{Lindblad generator: general}
    L(\rho)= -i[H,\rho] + \sum_{j=1}^m L_{V_j}(\rho),\quad L_{V}(\rho)=V\rho V^* - \frac{1}{2}(V^*V \rho+\rho V^*V)
\end{equation}
is a typical Lindblad generator \cite{lindblad1976generators}. With $V^*$ we denote the adjoint of $V$, which is also written as $V^{\dagger}$. In our analysis, In our analysis, we provide a general framework for calculating the necessary circuit depth of a quantum simulation scheme using a predetermined accessible unitary gate set that achieves a specific order. Detailed discussion of the main result can be found in the next section. Additionally, our results indicate a lower bound on the simulation cost with a fixed target time and precision. Our lower bound is established via a resource-dependent convexified gate length introduced in \cite{li2022wasserstein, araiza2023resource}, where the resource set is chosen according to the predetermined accessible unitary gate set. Different from the worst-case analysis studied in \cite{berry2015hamiltonian, childs2016efficient}, our framework works for any given Lindbladian generators and accessible unitary set. Thus our framework can be seen as an efficiency test for a given simulation scheme. We illustrate the tightness of our lower bound in some cases by establishing an upper bound for the simulation cost of some simple models.

The remaining part of the paper is organized as follows: The main result of this paper is summarized in Section \ref{sec:main_result}. We discuss related works in Section \ref{sec:related_works}. In Section \ref{sec:sc}, we introduce the definitions and notations used throughout this paper, focusing primarily on unital dynamics. We establish the upper bound of simulation cost for unital dynamics in Section \ref{sec:upper} and its lower bound in Section \ref{sec:lower_bound}. Section \ref{sec:non-unital} extends these definitions from Section \ref{sec:sc} to the non-unital case and establishes corresponding upper and lower bounds. Finally, Section \ref{sec:conclusion} provides a conclusion and summary of this paper.
\begin{figure}[ht]
    \centering\includegraphics[width=.6\textwidth]{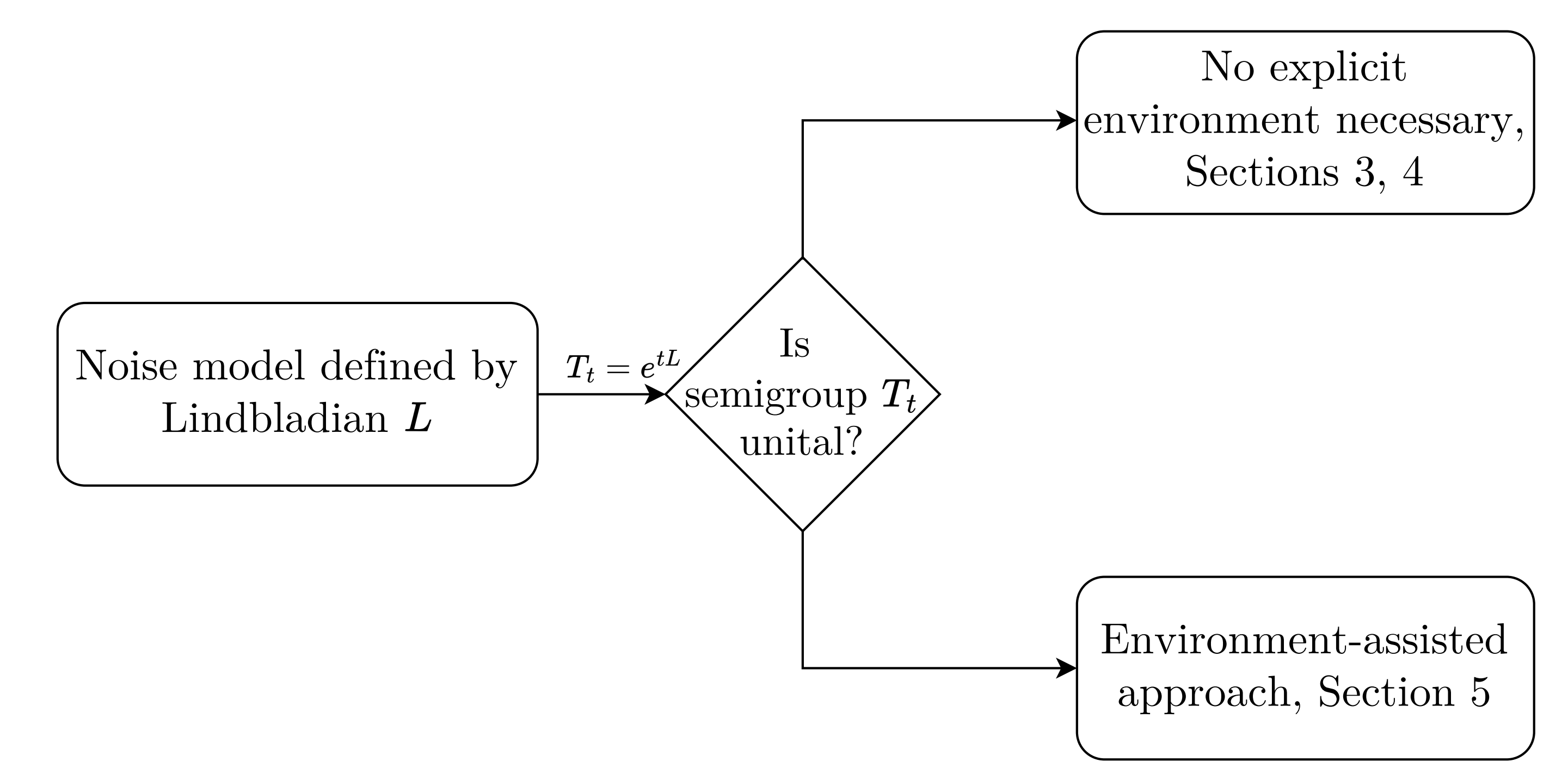}
    \caption{Unital noise models can be treated without allocating an explicit environment as quantum states, which is described in Sections~\ref{sec:upper}~and~\ref{sec:lower_bound}. The case of non-unital noise with environment and correspondingly adjusted Lipschitz complexity is discussed in Section~\ref{sec:non-unital}.}
    \label{fig:unitalono}
\end{figure}

\newpage
\subsection{Main results}\label{sec:main_result}
We begin our discussion with unital quantum dynamics, meaning $L(I)=0$ where $I$ is the identity operator. In this case, we consider the approximation of $T_t$ using mixed unitary channels. Suppose $\adm_u(\rho):=u\rho u^*$. The random channels $\left\{\adm_{u_1},\cdots,\adm_{u_m}\right\}$ are sampled from a probability distribution $\mu^m(u_1,\cdots, u_m)$ on $\mc U^m$ where $\mc U$ is an accessible unitary set, such that $T_t$ is close to $\int_{u_1,\cdots, u_m \in \uu}\adm_{u_1}\cdots \adm_{u_m} d\mu^m(u_1,\cdots, u_m)$. We define $m$ as the circuit depth of the approximation and consider this as the quantum cost. To quantify the capability of the unitary set $\mathcal{U}$ and the simulation difficulty of $T_t$, we recall that most quantum simulation algorithms are based on a high-order local approximation scheme. Specifically, for any positive and small enough $\tau$, we find a distribution $\mu^{m}_\tau\in\mathcal{P}(\mathcal{U}^m)$ such that 
\begin{equation}\label{eqn:high_order}
\left\|T_\tau-\Phi_{\mu^{m}_\tau}\right\|_{\diamond}=\mathcal{O}(\tau^\beta),\quad \Phi_{\mu^{m}_\tau}:=\int_{u_1,\cdots, u_m \in \uu}\adm_{u_1}\cdots \adm_{u_m} d\mu^{m}_\tau(u_1,\cdots, u_m).
\end{equation}
with some integer $\beta>1$ and $\mathcal{O}(\tau^\beta)$ means there exists a universal constant $c>0$ such that $\mathcal{O}(\tau^\beta) \le c \tau^{\beta}$. We note that $\beta \leq 1$ is trivial by using the identity channel as the approximation scheme.
Using the contraction property of semigroups, $\left\|T_t-(\Phi_{\mu^{m}_\tau})^{t/\tau}\right\|_{\diamond}=\mathcal{O}(t\tau^{\beta-1})$. This gives a $\beta-1$-th order simulation scheme for $T_t$. In accordance with this framework, we introduce a definition called convexified circuit depth in terms of $\alpha>0,\beta >1$~(see general definition in Equation \eqref{eqn:convex_depth}):
\[
l^\U_{\alpha \tau^\beta}(T_\tau): = \inf\left\{m \ge 1 \pl | \pl \exists \mu^m_{\tau} \in \mc P(\uu^m) : \left\|T_\tau-\Phi_{\mu^m_{\tau}}\right\|_{\diamond}<\alpha \tau^\beta \right\},
\]
where $\mathcal{U}$ is a set of accessible unitary gates, $\mathcal{P}(\uu^m)$ is the space of probability measure on $\mathcal{U}^m$, and $\alpha \tau^\beta$ is the target precision.
We use $l^\U_{\alpha \tau^\beta}(T_\tau)$ to quantify the minimal circuit depth of the high-order local approximation scheme for $T_\tau$. Given an accessible set of unitaries $\uu$, the convexified circuit depth quantifies the number of unitaries in a single coherent realization of a randomly mixed unitary channel to achieve a specified target error. This new definition serves as a generalization of deterministic circuit depth defined in \eqref{eqn:depth}, encompassing randomized Hamiltonian simulation techniques as discussed in \cite{campbell2019random} and below, which provides a lower bound for the depth of the deterministic circuit. Higher-order numerical schemes need to satisfy \eqref{eqn:high_order} for any small $\tau$, \footnote{Note that for $\tau$ greater than $\left(\frac{2}{\alpha}\right)^{1/\beta}$, we always have $\|T_{\tau} - id\|_{\diamond} \le 2 \le \alpha \tau^{\beta}$.} thus in order to characterize the quantum cost of higher-order schemes, we consider the supremum of $l^\U_{\alpha \tau^\beta}(T_\tau)$ and introduce the uniform maximal simulation cost (see also Definition \ref{def:M}):
\begin{align*}
    \mm_{\alpha,\beta}=\sup_{\tau\in (0,t_0]} l^{\mathcal{U}}_{\alpha \tau^{\beta}}(T_{\tau})\,,\quad t_0 = \left(\frac{2}{\alpha}\right)^{1/\beta}
\end{align*}
which indicates the necessary circuit depth (cost) of a quantum simulation scheme that achieves a specific order. Our first result, with full details given in Section \ref{sec:lower_bound} and the proof given in Theorem \ref{main: lower bound mixing}, is given as follows:
\begin{theorem}\label{Intro: main theorem unital}
    Suppose the accessible unitary gate set $\mc U$ is given. Under the ergodic assumption 
    \begin{align*}
        \lim_{t\to \infty} \frac{1}{t}\int_0^{t} T_s ds = E_{\infty},
    \end{align*}
    and that we can choose a compatible resource set $S$~(see Definition \ref{compatible: U and S} and the discussion after that), the lower bound of $\mm_{\alpha,\beta}$ is given as
    \begin{equation}
        \mm_{\alpha,\beta} \ge \frac{C_S^{cb}(E_{\infty})}{F(\alpha,\beta,t_{\rm mix})},
    \end{equation}
    where $C_S^{cb}(E_{\infty})$ is a convex cost function of the quantum channel $E_{\infty}$, defined in Definition \ref{eqn:C_cb_phi}, $F$ is a function of $\alpha>0,\beta>1$ and the mixing time of $T_t$ introduced in Definition \ref{def:mixing_time}. 
\end{theorem}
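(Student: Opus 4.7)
My plan is to lift any local mixed-unitary approximation of $T_\tau$ to a global approximation of the ergodic projection $E_\infty$, and then invoke the resource-dependent convexified cost $C_S^{cb}$ as a lower bound on the circuit depth. The execution has three steps: iteration of the local scheme together with mixing, a Lipschitz-type estimate for $C_S^{cb}$, and optimization over $\tau$.

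First I would fix $\tau \in (0, t_0]$, set $m = l^{\uu}_{\alpha\tau^\beta}(T_\tau) \le \mm_{\alpha,\beta}$, and select a near-optimal $\mu^m_\tau \in \mc P(\uu^m)$ so that $\|T_\tau - \Phi_{\mu^m_\tau}\|_{\diamond} < \alpha \tau^\beta$. Since both $T_\tau$ and $\Phi_{\mu^m_\tau}$ are contractions in diamond norm, a telescoping argument gives $\|T_{k\tau} - \Phi_{\mu^m_\tau}^k\|_{\diamond} \le k \alpha \tau^\beta$ for every $k \ge 1$. Choosing $k = \lceil t_{\rm mix}/\tau \rceil$ and using the definition of mixing time to bound $\|T_{k\tau} - E_\infty\|_{\diamond} \le c_{\rm mix}$ for some fixed constant $c_{\rm mix}$, the triangle inequality yields
\begin{equation*}
\|E_\infty - \Phi_{\mu^m_\tau}^k\|_{\diamond} \le k \alpha \tau^\beta + c_{\rm mix}.
\end{equation*}

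Next, since $\Phi_{\mu^m_\tau}^k$ is itself a convex combination of products of $mk$ unitaries drawn from $\uu$, the compatibility of $S$ with $\uu$ gives $C_S^{cb}(\Phi_{\mu^m_\tau}^k) \le mk$. I would then invoke a Lipschitz-type continuity of $C_S^{cb}$ in diamond norm, grounded in the convexity of $C_S^{cb}$ and the assumption that perturbations of channels can be absorbed by a controlled number of resources from $S$, to obtain
\begin{equation*}
C_S^{cb}(E_\infty) \le C_S^{cb}(\Phi_{\mu^m_\tau}^k) + L \|E_\infty - \Phi_{\mu^m_\tau}^k\|_{\diamond} \le mk + L\bigl(k \alpha \tau^\beta + c_{\rm mix}\bigr)
\end{equation*}
for a Lipschitz constant $L$ determined by $S$. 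Substituting $k \le t_{\rm mix}/\tau + 1$ and rearranging gives $\mm_{\alpha,\beta} \ge m \ge (C_S^{cb}(E_\infty) - L c_{\rm mix})/k - L \alpha \tau^\beta$, and then optimizing $\tau \in (0, t_0]$ (balancing the $\tau/t_{\rm mix}$ prefactor against the $\tau^{\beta-1}$ discretization error) produces a bound of the announced form $\mm_{\alpha,\beta} \ge C_S^{cb}(E_\infty)/F(\alpha,\beta,t_{\rm mix})$, with $F$ an explicit function of $t_{\rm mix}, c_{\rm mix}, L, \alpha, \beta$.

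The main obstacle will be the Lipschitz continuity step for $C_S^{cb}$ in the diamond norm. Since $C_S^{cb}(\Phi)$ is defined as an infimum over resource-dependent convex representations of $\Phi$, continuity is not automatic: one must use the compatibility of $S$ with $\uu$ to show that any diamond-norm-$\varepsilon$ perturbation of a channel can be realized with only $L \varepsilon$ additional resources from $S$. Obtaining an explicit Lipschitz constant $L$ with mild dependence on $S$ and the ambient dimension is precisely where the compatible resource set hypothesis enters, and is the technical heart of the argument; the remainder of the proof is a clean iteration-and-mixing calculation of the type sketched above.
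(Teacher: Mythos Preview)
Your overall architecture---iterate the local scheme up to the mixing time, invoke a Lipschitz bound for $C_S^{cb}$, then optimize over $\tau$---is exactly the paper's. But two points need correction, and the first one is substantive.

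First, $C_S^{cb}(\Phi)$ is \emph{not} an infimum over resource-dependent representations; it is an operator norm,
\[
C_S^{cb}(\Phi)=\|\Phi^*-id:(\mb B(\mc H),\||\cdot|\|_S)\to\mb B(\mc H)\|_{cb}.
\]
With this definition the Lipschitz step is a one-line algebraic identity, not a perturbation-absorption argument. Since both $\Phi^*$ and $\Psi^*$ fix the range of the conditional expectation $E_{\rm fix}$ (this is exactly what Assumption~(B) on $\mc U$ and the hypothesis $T_t^*\circ E_{\rm fix}=E_{\rm fix}$ provide), one has $(\Phi^*-\Psi^*)E_{\rm fix}=0$, hence
\[
\Phi^*-id=(\Phi^*-\Psi^*)(id-E_{\rm fix})+(\Psi^*-id),
\]
and submultiplicativity of the operator norm gives $C_S^{cb}(\Phi)\le C_S^{cb}(\Psi)+C_S^{cb}(E_{\rm fix}^*)\,\|\Phi-\Psi\|_\diamond$. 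So the Lipschitz constant is \emph{precisely} $L=\kappa^{cb}(S)=C_S^{cb}(E_{\rm fix}^*)=C_S^{cb}(E_\infty)$, the very quantity you want to lower bound. This is not an obstacle but the mechanism that makes the bound close up: your displayed inequality becomes $C_S^{cb}(E_\infty)\le D\,mk+C_S^{cb}(E_\infty)(k\alpha\tau^\beta+c_{\rm mix})$, and choosing $c_{\rm mix}$ and $k\alpha\tau^\beta$ each below a fixed constant less than $1$ lets you absorb the right-hand $C_S^{cb}(E_\infty)$ into the left and solve cleanly. (You also dropped the compatibility constant: $C_S^{cb}(\Phi_{\mu_\tau^m}^k)\le D\,mk$, not $mk$.)

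Second, and less essential: the paper works with $C_S^{cb}(T_t)$ as the pivot rather than $C_S^{cb}(E_\infty)$, and uses the complexity-induced mixing time $t_{\rm mix}^S(\varepsilon)=\inf\{t:C_S^{cb}(T_t)\ge(1-\varepsilon)\kappa^{cb}(S)\}$ in place of the diamond-norm mixing time. Since $t_{\rm mix}^S(\varepsilon)\le t_{\rm mix}(\varepsilon)$ always, this yields a sharper constant; your route through $E_\infty$ and the diamond mixing time is correct but slightly lossy.
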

The above theorem can be proved via a new framework for the lower bound $\mm_{\alpha,\beta}$ in Section \ref{sec:lower_bound}. We sketch the rough idea here, which contains two steps: 
\begin{itemize}
\item We use convex cost functions $ C^{cb}_S $ (see Definition \ref{eqn:C_cb_phi}) defined on the set of quantum channels. Intuitively, the cost function quantifies a certain distance between the quantum channel and the identity map. These functions are also referred to as Lipschitz complexity in \cite{araiza2023resource}. We demonstrate Lipschitz continuity: 
\[
\left|C^{cb}_S(\Phi)-C^{cb}_S(\Psi)\right| \leq \kappa^{cb}(S)\left\|\Phi-\Psi\right\|_{\diamond},
\]
where the Lipschitz constants $\kappa^{cb}(S)$ depends on the resource set $S$. See Lemma \ref{dd} for a rigorous statement. We can use the subadditivity and convexity properties of $ C^{cb}_S $ to derive that, for any $\delta$-approximation channel with circuit depth $ m $ to $ T_t $,
\[ 
m \geq \Omega\left(C^{cb}_S(T_t) - \delta \kappa^{cb}(S)\right),
\]
where $\Omega\left(C^{cb}_S(T_t) - \delta \kappa^{cb}(S)\right)$ means there exists a universal constant $c>0$ such that $\Omega\left(C^{cb}_S(T_t) - \delta \kappa^{cb}(S)\right) \ge c \left(C^{cb}_S(T_t) - \delta \kappa^{cb}(S)\right)$. The rigorous statement is stated in Lemma \ref{dd}.

\item According to the inequality above, it suffices to establish a lower bound for $C^{cb}_S(T_t) $. In this paper, this lower bound is derived by establishing connections with the stationary (or ergodic average) $ E_{\infty} = \lim_{t \to \infty} \frac{1}{t} \int_0^t T_s \, ds $. Intuitively, for sufficiently large $t$, $T_t$ approaches $ E_{\infty} $, hence $C^{cb}_S(T_t)$ can be lower bounded by $C^{cb}_S(E_{\infty})$ which is the exact Lipschitz constant $\kappa^{cb}(S)$. This argument can be made rigorously by using the semigroup property and introducing the $S$-induced mixing time of $T_t$. Note that the mixing time defined here can be smaller than the mixing time via diamond, which provides a better lower bound. In conclusion, we can prove
\[
\mm_{\alpha,\beta}\geq C^{cb}_S(E_{\infty})/F(\alpha,\beta,t_{\rm mix})\,,
\]
where the formula of $F$ is calculated in Theorem \ref{main: lower bound mixing}. The inequality above suggests that finding a lower bound for $ \mm_{\alpha,\beta} $ can be simplified into two subproblems: 1. Calculating $ C^{cb}_S(E_{\infty}) $; 2. Determining the mixing time of $ T_t $.
\end{itemize}
In real applications, there are many cases where $ E_{\infty} $ is known and the mixing time can be calculated. As a simple application, we can derive a lower bound for Pauli noise on $n$-qubit system as follows:
\begin{prop}
    Suppose the Lindblad generator $L(\rho):= \sum_{j=1}^n (L_{X_j} + L_{Y_j})(\rho)$, where $X_j,Y_j$ are Pauli $X,Y$ operators acting on the $j$-th system. Then the lower bound of $\mm_{\alpha,\beta}$ given accessible unitary set $\mc U = \{\exp(itX_j),\, \exp(itY_j): |t|\le D, 1\leq j \le n\}$ is as follows:
    \begin{equation}
\mm_{\alpha,\beta} \ge C_{\alpha,\beta} \frac{n}{D},\quad C_{\alpha,\beta}:= \frac{1}{2}\min \left\{ \alpha^{-\frac{1}{\beta -1}} \left( \frac{1}{6\ln 2} \right)^{\frac{\beta}{\beta - 1}} , \frac{1}{8} \right\}.        
    \end{equation}
\end{prop}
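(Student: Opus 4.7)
The plan is to apply Theorem \ref{Intro: main theorem unital} with a suitably chosen compatible resource set $S$, after computing explicitly both the ergodic projection $E_\infty$ and the $S$-induced mixing time for this separable Pauli noise model.

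First I would compute the ergodic projection $E_\infty$. Since the generator $L=\sum_{j=1}^n(L_{X_j}+L_{Y_j})$ is a sum of commuting single-qubit terms, $T_t=\bigotimes_{j=1}^n T_t^{(j)}$. On a single qubit, a direct calculation in the Pauli basis (using $X^2=Y^2=I$, $XYX=-Y$, $YXY=-X$, $XZX=YZY=-Z$) yields eigenvalues $0,-2,-2,-4$ on $I,X,Y,Z$ respectively; the unique fixed point is a multiple of the identity, so globally $E_\infty(\rho)=\operatorname{tr}(\rho)\,I/2^n$ is the completely depolarizing channel. The single-qubit spectral gap is $2$, which will be the key input when bounding the mixing time.

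Second I would select the compatible resource set $S$ in the sense of Definition \ref{compatible: U and S}. Every $u\in\mc U$ is of the form $\exp(itH)$ with $H\in\{X_j,Y_j\}$ and $|t|\le D$, so the natural choice is the resource generated by $\{X_j,Y_j:1\le j\le n\}$ with time budget $D$. This is where the factor $1/D$ enters the final bound, because a single accessible unitary can already pay up to $D$ units of Lipschitz displacement against $S$.

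Third I would lower bound $C_S^{cb}(E_\infty)$ and estimate the mixing time. Exploiting the tensor product form $E_\infty=\bigotimes_j E_\infty^{(j)}$ together with the tensorization/additivity properties of the convexified cost in Definition \ref{eqn:C_cb_phi}, the problem reduces to a single-qubit lower bound $C_S^{cb}(E_\infty^{(j)})\gtrsim 1/D$; summing over the $n$ independent qubits gives $C_S^{cb}(E_\infty)\ge c\,n/D$ for an explicit constant $c$. The spectral gap of $2$ then yields an $n$-independent $S$-induced mixing time (which, as the authors emphasize after Theorem \ref{Intro: main theorem unital}, can be strictly smaller than the diamond-norm version); substituting this into the explicit formula for $F(\alpha,\beta,t_{\mathrm{mix}})$ from Theorem \ref{main: lower bound mixing} produces, after balancing the two regimes in $\alpha$, the stated constant $C_{\alpha,\beta}$ with the numerical factor $6\ln 2$.

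The hard part will be the third step, namely pinning down the sharp constants in the lower bound $C_S^{cb}(E_\infty)\gtrsim n/D$: this requires careful control of both the tensor product behaviour of the Lipschitz complexity and the explicit dependence of the Lipschitz constant $\kappa^{cb}(S)$ on the time budget $D$. The remaining steps are either routine spectral calculations on a single qubit or direct substitution into the general formula of Theorem \ref{main: lower bound mixing}.
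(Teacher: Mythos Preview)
Your overall structure is right, and the paper follows it, but there is one conceptual confusion and one substantive methodological difference.

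\textbf{The confusion about $D$.} You write that $C_S^{cb}(E_\infty^{(j)})\gtrsim 1/D$ and that the hard part is controlling ``the explicit dependence of the Lipschitz constant $\kappa^{cb}(S)$ on the time budget $D$.'' In the paper's setup the resource set is simply $S=\{X_j,Y_j:1\le j\le n\}$, with no time parameter built in; $\kappa^{cb}(S)=C_S^{cb}(E_\infty)$ depends only on $S$ and the channel, and satisfies $\tfrac{n}{2}\le\kappa^{cb}(S)\le n$ (Example~\ref{example: ergodic}). The time budget $D$ enters \emph{separately}, as the compatibility constant in Definition~\ref{compatible: U and S}: Proposition~\ref{Assumption: Trotter example} gives $C_S^{cb}(\operatorname{ad}_{\exp(itX_j)})\le |t|\le D$. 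The final bound from Theorem~\ref{main: lower bound mixing} is $\mm_{\alpha,\beta}\ge C_{\alpha,\beta}\,\kappa^{cb}(S)/D$, so the $n/D$ is a \emph{ratio} of two independent quantities, not a property of $\kappa^{cb}(S)$ alone. Nothing in your ``hard third step'' actually needs to be done.

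\textbf{The certificate versus tensorization.} For both the lower bound on $\kappa^{cb}(S)$ and the $S$-induced mixing time, the paper does not use tensorization or additivity. It exhibits a single global certificate $W=\sum_{j=1}^n X_j$. One computes directly $\|W\|_\infty=n$, $\||W|\|_S=2$, and $T_t^*(W)=e^{-2t}W$ (since $W$ is an eigenvector of $L^*$ with eigenvalue $-2$). This simultaneously gives $\kappa^{cb}(S)\ge n/2$ and
\[
C_S^{cb}(T_t)\ge(1-e^{-2t})\frac{n}{2}\ge\frac{1-e^{-2t}}{2}\,\kappa^{cb}(S),
\]
hence $t_{\mathrm{mix}}^S(\tfrac34)\le\tfrac{\ln 2}{2}$; plugging $\varepsilon=\tfrac34$ and $t_{\mathrm{mix}}^S=\tfrac{\ln 2}{2}$ into Theorem~\ref{main: lower bound mixing} yields the constant with the factor $6\ln 2$. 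Your plan to obtain the lower bound on $\kappa^{cb}(S)$ by summing single-qubit estimates via ``tensorization/additivity'' is more structural, but note that subadditivity runs the wrong way for a lower bound, so you would need an explicit additivity (equality) statement, which is extra work the paper avoids. Likewise, your ``spectral gap $\Rightarrow$ $n$-independent $S$-mixing time'' step is not automatic: the diamond-norm mixing time of $T_t=\mc T_t^{\otimes n}$ does grow with $n$, and it is precisely the use of the certificate $W$ (an eigenvector at the gap) that shows the $S$-induced version does not.
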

\noindent Note that the above estimate is tight in terms of the underlying dimension of the system if we construct an approximation channel which achieves $\mc O(n)$ as the upper bound. In fact, this approximation channel can be constructed using what we call the Poisson approach in Section \ref{sec:nonsymm_case} and \ref{sec: example:Poisson}. See also \cite{berry2014simulating} for similar techniques in Hamiltonian simulation.

Although Theorem \ref{Intro: main theorem unital} only considers the local cost of a simulation scheme that achieves a specific order, our framework can also provide a lower bound for a fixed time $\tau > 0$. The informal theorem is presented as follows:
\begin{theorem}\label{Intro: fixed time lower bound}
Under the assumption of Theorem \ref{Intro: main theorem unital}, for any $\alpha>0,\beta>1$, we have \begin{equation}
        l^{\mc U}_{\alpha \tau^{\beta}}(T_{\tau}) \ge \Omega\left(\tau \kappa^{cb}(S)/t_{\rm mix}\right),\quad \forall \tau\leq \mc O\left(\left(\frac{1}{\alpha t_{\rm mix}}\right)^{1/(\beta-1)}\right) \,,
    \end{equation}
   where $t_{\rm mix}$ is defined as $t_{\rm mix}(\frac{1}{2})$ in Definition \ref{def:mixing_time}.
\end{theorem}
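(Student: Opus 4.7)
The plan is to reuse exactly the two-step Lipschitz complexity machinery that underlies Theorem \ref{Intro: main theorem unital}, but to apply it pointwise at a single time $\tau$ rather than taking a uniform supremum. By Lemma \ref{dd}, any mixed unitary channel $\Phi_{\mu^m_\tau}$ of depth $m$ with $\|T_\tau - \Phi_{\mu^m_\tau}\|_{\diamond} < \alpha \tau^\beta$ satisfies, up to universal constants,
\begin{equation*}
m \;\gtrsim\; C^{cb}_S(T_\tau) \;-\; \alpha \tau^\beta\,\kappa^{cb}(S).
\end{equation*}
Thus everything reduces to producing a quantitative lower bound of the form $C^{cb}_S(T_\tau) \gtrsim \tau \,\kappa^{cb}(S)/t_{\rm mix}$, together with a range of $\tau$ in which the first term dominates the correction $\alpha \tau^\beta \kappa^{cb}(S)$.

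For the lower bound on $C^{cb}_S(T_\tau)$ I would use the semigroup identity $T_{N\tau} = T_\tau \circ \cdots \circ T_\tau$ combined with the subadditivity of $C^{cb}_S$ under composition (which comes directly from its definition as a convexified gate length in Definition \ref{eqn:C_cb_phi}), giving $C^{cb}_S(T_{N\tau}) \le N\, C^{cb}_S(T_\tau)$. Choose $N = \lceil t_{\rm mix}/\tau\rceil$, so that by the definition of $t_{\rm mix} = t_{\rm mix}(\tfrac12)$ in Definition \ref{def:mixing_time} we have $\|T_{N\tau} - E_\infty\|_{\diamond} \le \tfrac12$. Applying the Lipschitz estimate of Lemma \ref{dd} to $T_{N\tau}$ and $E_\infty$ and using $C^{cb}_S(E_\infty) = \kappa^{cb}(S)$ then yields
\begin{equation*}
C^{cb}_S(T_{N\tau}) \;\ge\; C^{cb}_S(E_\infty) - \tfrac12 \kappa^{cb}(S) \;=\; \tfrac12 \kappa^{cb}(S).
\end{equation*}
Dividing by $N$ gives the required bound $C^{cb}_S(T_\tau) \gtrsim \tau \,\kappa^{cb}(S)/t_{\rm mix}$.

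Substituting back, the depth lower bound becomes
\begin{equation*}
m \;\gtrsim\; \tau\,\kappa^{cb}(S)/t_{\rm mix} \;-\; \alpha\,\tau^\beta\,\kappa^{cb}(S) \;=\; \tau\,\kappa^{cb}(S)\bigl(1/t_{\rm mix} - \alpha\,\tau^{\beta-1}\bigr).
\end{equation*}
To absorb the second term into the first up to a constant factor, it suffices to require $\alpha\,\tau^{\beta-1} \le c/t_{\rm mix}$ for some small universal $c$, i.e.\ $\tau \le \mathcal{O}\bigl((\alpha\,t_{\rm mix})^{-1/(\beta-1)}\bigr)$, which is precisely the admissible range stated in the theorem. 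This gives $l^{\mc U}_{\alpha \tau^\beta}(T_\tau) \ge \Omega(\tau\,\kappa^{cb}(S)/t_{\rm mix})$ in that regime.

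The main obstacle, and the point that deserves the most care, is the subadditivity step $C^{cb}_S(T_{N\tau}) \le N\, C^{cb}_S(T_\tau)$: one must ensure that the definition of $C^{cb}_S$ (and its completely bounded version over the resource set $S$) genuinely composes in this way, so that $N$ nearly-optimal approximations of $T_\tau$ can be concatenated into a depth-$N$ approximation of $T_{N\tau}$ without blowing up the cost. This should follow directly from the definition of the convexified circuit depth via probability measures on $\mathcal{U}^m$, by taking the product of the $N$ individual optimizers; but one must also verify that the composition error remains controlled, which is where the contraction of the diamond norm under completely positive trace-preserving maps is used. Once this compositional property is established, the rest of the argument is a straightforward combination of the Lipschitz inequality and the mixing-time bound.
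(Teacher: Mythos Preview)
Your approach is correct and essentially identical to the paper's proof of Theorem \ref{main result: lower bound fixed time}: combine Lemma \ref{dd} with the subadditivity $C_S^{cb}(T_{N\tau})\le N\,C_S^{cb}(T_\tau)$, choose $N$ so that $N\tau$ reaches the mixing time, and use $C_S^{cb}(T_{N\tau})\ge \tfrac12\kappa^{cb}(S)$ to conclude. Two small clarifications: Definition \ref{def:mixing_time} is the \emph{complexity-induced} mixing time, so for $N\tau\ge t_{\rm mix}$ it gives $C_S^{cb}(T_{N\tau})\ge \tfrac12\kappa^{cb}(S)$ directly, with no detour through a diamond-norm bound and Lipschitz continuity needed; and the subadditivity of $C_S^{cb}$ under composition is a stated property of the Lipschitz complexity itself (proved from its operator-norm definition), not something to be rebuilt from the convexified circuit depth $l^{\mc U}$, so your ``main obstacle'' is already disposed of by that lemma.
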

We put the rigorous version in Theorem \ref{main result: lower bound fixed time} in Section \ref{subsec:lower bound fixed}. Let us state a special case assuming $t_{\rm mix}>1$. Given $0<\delta<\frac{1}{t_{\rm mix}}$ small, we set $\tau=\Omega(1)$, $\alpha=\delta/\tau^\beta$. Applying the above theorem, we obtain
    \begin{equation}\label{ff}
       l^{\mc U}_{\delta}(T_\tau) \ge \Omega\left(\tau \kappa^{cb}(S)/t_{\rm mix}\right)
    \end{equation}
This shows a lower bound for simulation cost at a specific time with specific error threshold, see Corollary \ref{main result: lower bound fixed time and precision} for a rigorous statement. For an explicit example of calculating a nontrivial ratio $\frac{\kappa^{cb}(S)}{t_{\rm mix}}$, refer to Remark \ref{iterate}. Note that the flexibility of the choice of $\alpha,\beta,\tau$ can also provide us a local result. That is, a lower bound when the target time $\tau$ is arbitrarily small.

Now we transfer our focus to non-unital dynamics. In this case, considering simulation of non-unital quantum dynamics solely using unitary gates is not sufficient, as shown by Proposition \ref{nogo: non-unital}. It makes the simulation much more intricate compared to the simulation of unital dynamics. In general, the simulation of these channels induced by Lindblad generators require non-classical ancilla qubits to implement an encoding and decoding process~\cite{cleve_et_al:LIPIcs.ICALP.2017.17,PRXQuantum.5.020332,pocrnic2024quantumsimulationlindbladiandynamics}. Specifically, in the simulation algorithm, a few ancilla qubits are reused to embed (encode) the original system into a larger Hilbert space at the beginning of each iteration. Proper unitary gates are then applied to this larger Hilbert space to approximate the Lindblad evolution. After applying these gates, the ancilla qubits are traced out (decoding) to obtain the updated system qubits, concluding one iteration. This simulation framework closely mirrors the behavior of open physical quantum systems, where the use of ancilla qubits is analogous to the ``environment''. 
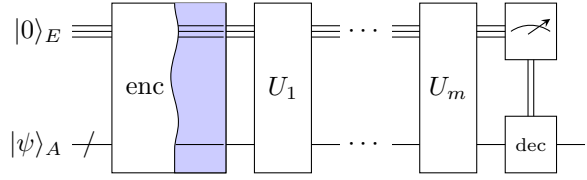
\begin{figure}[bthp]
\color{black}
\begin{center}
\begin{minipage}{.7\textwidth}
\begin{center}
\begin{tikzpicture}[y=-1cm,scale=1.5]
    \draw (4,0) -- (0,0) node[left] {$\lvert0\rangle_E$};
    \draw (4,-0.05) -- (0,-0.05) ;
    \draw (4,0.05) -- (0,0.05) ;
    \draw (4.5,1) -- (0,1) node[left] {$\lvert\psi\rangle_A$};
\draw[fill=white] (0.35,-0.25) rectangle  (1.35, 1.25);
\node at (.64, .5)   {$\mathrm{enc}$};
    \node at (1.05, .45)   {\tiny{e.g.}};
    \draw[fill=blue!20] (.9, -.25)   decorate [decoration={snake, amplitude=0.5mm, segment length=12mm}] { --(.9, 1.25)} -- (1.35, 1.25) -- (1.35, -.25) ;
\draw (.93,-0.05) -- (1.33,-0.05) ;
    \draw (.935,0) -- (1.33,0);
    \draw (.93,0.05) -- (1.33,0.05) ;
    \draw (.91,1) -- (1.33,1) ;
    \draw[fill=white] (1.6,-0.25) rectangle node {$U_1$} (2.1, 1.25);
    \draw[fill=white] (3.05,-0.25) rectangle node {$U_m$} (3.55, 1.25);
    \draw[draw=none,fill=white] (2.35, -.1)  rectangle node {$\cdots$}  (2.8, .1);
    \draw[draw=none,fill=white] (2.35, 0.9)  rectangle node {$\cdots$}  (2.8, 1.1);
    \draw[fill=white] (3.8, -.25) rectangle  (4.25, .25);
        \draw (3.85,0) arc[start angle=-125, end angle=-55, radius=.3];
        \draw[>=stealth,->] (4.0, .0255) -- (  4.195 , -.1108  );
    \draw[fill=white] (3.8, .75) rectangle node {{$\scriptstyle{\mathrm{dec}}$}} (4.25, 1.25);
    \draw (4.0,.25) -- (4.0, .75);
    \draw (4.05,.25) -- (4.05, .75);
\draw (.15, 1) node {$/$};

\end{tikzpicture}
\end{center}
\end{minipage}
\end{center}
\normalcolor
\caption{One iteration of simulating non-unital quantum dynamics with support of an environment $E$. Around the actual simulation, there is an encoding and a decoding map. This is discussed in more detail in Section~\ref{sec:non-unital}. }
\label{fig:a1}
\end{figure}

In our paper, the simulation cost is defined within the fixed admissible unitary set on expanded Hilbert space. It can be formulated as follows: given an accessible unitary set $\mc U_{A\otimes E}$ acting on the joint system $A\otimes E$ and encoding and decoding schemes denoted by $\mc E$ and $\mc D$ respectively, 
\begin{equation}
            \widehat{l}^{\mc U_{A\otimes E}}_{\alpha \tau^{\beta}}(T_{\tau}) := \inf\left\{\sum_{j=1}^M m_j\middle | \exists M\ge 1 ,\exists \mu_j \in \mc P(\mc U_{A\otimes E}^{m_j}), \left\|T_{\tau} - \prod_{j=1}^M \mc D \circ \Phi_{\mu_j} \circ \mc E\right\|_{\diamond} < \alpha \tau^{\beta} \right\}.
\end{equation}
We establish results analogous to those in the unital case, despite dealing with a more intricate scenario involving Lindblad operators with non-symmetric and non-unitary jump operators. 
Due to the additional ancilla qubits, constructing the framework for analyzing lower bounds in this case becomes significantly more complex. Our contribution is to adjust the lower bound framework for unital case and apply it to the non-unital case. As a result, we can provide a lower bound for $\widehat{l}^{\mc U_{A\otimes E}}_{\alpha \tau^{\beta}}(T_{\tau})$ and the supremum version 
\begin{align*}
    \widehat{\mm}_{\alpha,\beta}:= \sup_{\tau\in (0,t_0]}  \widehat{l}^{\mc U_{A\otimes E}}_{\alpha \tau^{\beta}}(T_{\tau}), \quad t_0 = \left(\frac{2}{\alpha}\right)^{1/\beta},
\end{align*}
which captures the same behavior as shown in Theorem \ref{Intro: main theorem unital}. 
In the expanded Hilbert space, the original Lipschitz complexity function cannot be naively applied. Instead, we develop a new Lipschitz complexity function tailored to accommodate the encoding and decoding processes of the algorithm. We also introduce a new class of resource sets in the larger Hilbert space and incorporate additional mild assumptions. For a detailed discussion, please see Section \ref{sec:non-sym-non-uni}. Furthermore, in Section \ref{sec:nqadn}, we address the $n$-qubit amplitude damping noise model and propose a perfect simulation scheme using simple unitaries. This scheme allows us to demonstrate a matching upper and lower bound for the uniform maximal simulation cost $\widehat{\mm}_{\alpha,\beta}$ in this case, showcasing the tightness of our lower bound framework.

\newpage

\subsection{Related works and open questions}\label{sec:related_works}
There is a great amount of different quantum algorithms have been proposed for Hamiltonian evolution~\cite{berry2007efficient,berry2014exponential,berry2015hamiltonian,Low2019hamiltonian,GilyenSuLowEtAl2019,PhysRevX.11.011020,childs2012hamiltonian}, each assuming different oracle access to the Hamiltonian $H$. For example,~\cite{berry2007efficient} assumes oracle access to the matrix entries of of the sparse Hamiltonian $H$ and the quantum cost is defined by the number of queries to the matrix entries. Based on the higher order Trotter splitting and sparse matrix decomposition, they present an efficient algorithm to simulation $\exp(-itH)$ with number of queries almost linearly in $t$ and logarithmic in the number of qubit $n$. A notable result of nearly optimal algorithm is given in \cite{berry2015hamiltonian} and the authors provide a linear in time lower bound for sparse Hamiltonian. 

More recently, as the applications of open quantum systems continue to expand, more attention has been focused on their simulation, especially for the Lindblad quantum master equation~\cite{kliesch2011dissipative,cleve_et_al:LIPIcs.ICALP.2017.17,li2022simulating,chen2023quantum,PRXQuantum.5.020332, borras2024quantum,chen2024randomized,childs2016efficient}. Because open quantum dynamics are often non-unital and irreversible, their simulation becomes much more difficult than that of the Hamiltonian simulation. Although the quantum oracle assumptions and simulation methods vary across different works, the simulation of the Lindblad master equation often requires a similar encoding and decoding process by using additional ancilla qubits, such as in~\cite{cleve_et_al:LIPIcs.ICALP.2017.17,chen2023quantum,PRXQuantum.5.020332}. The first-order encoding process was initially proposed by Cleve et al.~\cite{cleve_et_al:LIPIcs.ICALP.2017.17}, who also introduced a novel compression scheme and oblivious amplitude amplification to improve the scheme's scaling to nearly optimal levels. Similar concepts are employed in~\cite{chen2023quantum}, where the authors assume slightly different oracle access to manage a large number of jump operators. In~\cite{PRXQuantum.5.020332}, the authors present an alternative encoding process that achieves arbitrarily high-order accuracy but involves a more complex Hamiltonian simulation. 

The lower bound analysis is only derived for some special cases. In \cite{berry2015hamiltonian,bu2024complexity}, the authors provide some lower bounds for cost of Hamiltonian simulation. In \cite[Section 7]{childs2016efficient}, linear in time lower bound for certain single Lindblad generators are constructed -- Theorem \ref{Intro: fixed time lower bound} in this paper captures that behavior as well. It is straightforward to extend our framework to more complex encoding and decoding processes, provided one has access to more sophisticated unitary gates and knows how to calculate the Lipschitz complexity of the corresponding quantum channels with the new resource set. We leave more concrete analysis when the coherent part exists for future research. 

\section{Preliminaries on simulation cost of quantum channels}\label{sec:sc}
Let $\mc H$ be a Hilbert space and $\mb B(\mc H)$ be the space of linear bounded operators on $\mc H$. For any $X \in \mb B(\mc H)$, we denote the Schatten $p$-norm $\|X\|_p:= [\Tr(|X|^p)]^{1/p}$ for $p\in [1,\infty]$, where $\Tr$ is the trace and $\|X\|_{\infty}$ is the operator norm given by the largest singular value of $X$. In this paper, $\|X\|_{\infty}$ is denoted as $\|X\|$ for simplicity. For any linear map $\Phi: \mb B(\mc H) \to \mb B(\mc H)$, denote
\begin{align*}
    \|\Phi\|_{p \to q}:= \sup_{\|X\|_p \le 1} \|\Phi(X)\|_q, \quad \|\Phi\|^{cb}_{p \to q}:= \sup_{X^{(n)} \in \mb B(\mc H \otimes \mb C^n): \|X^{(n)}\|_p \le 1} \|(\Phi\otimes id_n)(X^{(n)})\|_q,
\end{align*}
where $id_n: \mb B(\mb C^n)\to \mb B(\mb C^n)$ is the identity map. In particular, the diamond norm of a linear map $\Phi: \mb B(\mc H) \to \mb B(\mc H)$ is given by
\begin{equation}\label{def:diamond norm}
    \|\Phi\|_{\diamond} = \|\Phi\|^{cb}_{1 \to 1}.
\end{equation}
A quantum channel $\Phi: \mb B(\mc H) \to \mb B(\mc H)$ is a linear map such that it is trace preserving and completely positive. For any unitary operator $u\in \mb B(\mc H)$, we denote $\adm_u(\cdot) = u\cdot u^*$ as the quantum channel induced by $u$. The notion of simulation cost of quantum channels with respect to a given unitary set $\mc U \subseteq \mb B(\mc H)$ is defined as 
 \begin{equation}\label{eqn:convex_depth}
    l^\U_{\delta}(\Phi): = \inf\left\{m \ge 1 \pl | \pl \exists \mu^m \in \mc P(\uu^m) : \left\|\Phi-\int_{u_1,\cdots, u_m \in \uu}\adm_{u_1}\cdots \adm_{u_m} d\mu^m(u_1,\cdots, u_m) \right\|_{\diamond}<\delta \right\},
 \end{equation}
where $\mc P(\uu^m)$ is the set of all probability distributions on $\uu^m = \uu \times \cdots  \times \uu$.

In this work, \eqref{eqn:convex_depth} is referred to as the convexified notion of circuit depth. In contrast, the traditional (approximate) circuit depth of a unitary $U$ is defined as \cite{haferkamp2022linear, haferkamp2023moments}
\begin{equation}\label{eqn:depth}
    \text{depth}_{\delta}^{\uu}(U):= \inf\left\{m \ge 1 \pl | \exists u_1, \cdots, u_m \in \uu : \left\|U - u_1\cdots u_m\right\|<\delta \right\}.
\end{equation}
Different from traditional circuit depth~\eqref{eqn:depth}, the convexified notion~\eqref{eqn:convex_depth} allows any kind of convex combination of $\adm_{u_{i}}$.

Although previous work focuses mainly on the depth of traditional circuits~\eqref{eqn:depth}, similar convexified ideas have appeared in various quantum simulation algorithms before. For example, the qDrift protocol for Hamiltonian simulation, $U = e^{itH}$, where $H = \sum_{j=1}^L h_j H_j$, was introduced by the authors in \cite{campbell2019random}. Unlike first-order trotter splitting $U\approx \left(\Pi^L_{j=1}\exp(itH_j/N)\right)^N+O(t^2/N)$, qDrift approximates $U$ using random Hamiltonian evolution:
\begin{equation}\label{eqn:Qdrift}
U\approx \underbrace{\mathbb{E}\left(\Pi^N_{n=1}\left(\exp(itH_{r_n}/N)\right)\right)}_{\text{convexified simulation}}=\Pi^N_{n=1}\mathbb{E}\left(\exp(itH_{r_n}/N)\right)\approx \Pi^N_{n=1}\exp(itH_{r_n}/N),\quad N\gg 1\,,
\end{equation}
where $\{r_n\}^N_{n=1}$ are independently sampled from the uniform distribution over the set $[1,\dots,L]$.
Given a fixed $N$, the circuit depth for the first-order Trotter is $NL$, while the circuit depth for qDrift is $N$. A detailed analysis of the complexity differences between qDrift and Trotter splitting must take into account the impact of randomness, and the reader is directed to~\cite{campbell2019random} for further details. 
Beyond approximating unitary processes, a more significant application of convexified simulation involves simulating non-unitary CPTP maps, such as Lindblad dynamics~\cite{cleve_et_al:LIPIcs.ICALP.2017.17,PRXQuantum.5.020332}. Due to the non-unitary nature of Lindblad dynamics, simulations often require an encoding and decoding process. The decoding process is often implemented by measuring and resetting (tracing out) of ancilla qubits. This tracing out mechanism ultimately relates to a convexified simulation approach as described in \eqref{eqn:convex_depth}. In this paper, we further explore the upper and lower bounds of the convexified circuit depths for this type of simulation algorithms, with more detailed explanation provided in Sections \ref{sec:upper} and \ref{sec:lower_bound}. It is also worth noting that the convex combination of unitary gates is different from linear combinations of unitary gates (LCU)~\cite{childs2012hamiltonian}. When implementing LCU on a quantum computer, we need to construct a preparation gate for the coefficients with some ancilla qubits. Additionally, in each step, we must measure the ancilla qubits and obtain the correct outcome to ensure the success of the algorithm. In contrast, the convex combination of unitary gates is often implemented in the expectation sense. Specifically, we implement random gates according to the distribution $\mu^m$. The tracing out also renders the outcome of the ancilla qubit measurement irrelevant.

In this paper, our main interest lies in analysing the simulation costs of quantum dynamics that can be described as a quantum Markov semigroup:
\begin{defi}
    Suppose $\{T_t\}_{t\ge 0}$ is a family of $\sigma$-weakly continuous quantum channels acting on  $\mb B(\mc H)$. We say that it forms a quantum Markov semigroup if 
    \begin{itemize}
        \item $T_0 = id$.
        \item $T_s T_t = T_t T_s = T_{t+s}$, $\forall s,t\ge 0$.
        \item $T_t$ converges to $T_0$ $\sigma$-weakly as $t$ goes to 0.
    \end{itemize}
\end{defi}
It is well-known \cite{lindblad1976generators} that if $\lim_{t \to 0}\|T_t - T_0\|_{\infty \to \infty} = 0$, then $T_t$ can be written as $T_t = \exp(tL)$, where $L:\mb B(\mc H)\to \mb B(\mc H)$ can be written as: 
\begin{align*}
    L(\rho)= -i[H,\rho] + \sum_{j=1}^m L_{V_j}(\rho),\quad L_{V}(\rho)=V\rho V^* - \frac{1}{2}(V^*V \rho+\rho V^*V).
\end{align*}
To quantify the simulation cost of a quantum Markov semigroup, we establish the definition of uniform maximal simulation cost as follows.
\begin{defi}\label{def:M}
    Suppose $\{T_t\}_{t\ge 0}$ is a quantum Markov semigroup of quantum channels. Suppose $\alpha>0,\beta >1$ and $I \subseteq [0,\infty)$. Denote the uniform maximal simulation cost of $\{T_t\}_{t\ge 0}$ within approximation error at most $\alpha t^{\beta}$ by \begin{equation}
        \mm_{\alpha,\beta}^I\left(\{T_t\}_{t\ge 0}\right)= \sup_{t\in I}l_{\alpha t^{\beta}}^{\mc U}\left(T_t\right).
    \end{equation}
\end{defi}
\noindent Note that when $u,u^{-1}\in\mathcal{U}$, we at least have a linear growth simulation bound for $T_t$:
\begin{equation}\label{eqn:T_t_i_d_difference}
    \|T_t - uu^{-1}\|_{\diamond}=\|T_t - id\|_{\diamond} \le \min\left\{t\|L\|_{\diamond}\exp(t\|L\|_{\diamond}),2\right\}\leq \exp(2)\|L\|_{\diamond}t\,,
\end{equation}
where we use $\|T_t - id\|_{\diamond}\leq \|T_t\|_{\diamond}+\|id\|_{\diamond}$ in the first inequality. This implies that simulating $T_t$ within an error scale linearly in $t$ is always possible. Furthermore, linear behavior in $t$ is not very useful in practical applications, as it ultimately results in a zeroth-order accuracy algorithm. Therefore, our work focuses mainly on scenarios in which the simulation error behaves as $t^{\beta}$ with $\beta>1$.

In the following part of this section, we introduce several important properties that can be directly inferred from the definition and will be useful for our forthcoming analysis. We start with a simple one using the semigroup property:
\begin{lemma}\label{lem:separation_property}
Suppose $\{T_t\}_{t\ge 0}$ is a quantum Markov semigroup of quantum channels. Then for any $\delta >0, t\ge 0$ and integer $m \ge 1$, we have
\begin{equation}\label{eqn:local_to_global}
    l_{m\delta}^{\mc U}(T_t) \le m l_{\delta}^{\mc U}(T_{t/m}).
\end{equation}
\end{lemma}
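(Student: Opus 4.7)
The plan is to exploit the semigroup identity $T_t = (T_{t/m})^m$ together with the fact that quantum channels are contractive in the diamond norm, so that an $\eps$-approximation of $T_{t/m}$ iterated $m$ times gives an $m\eps$-approximation of $T_t$, while the convexified circuit depth simply multiplies by $m$.

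First I would fix $\eps>0$ arbitrary and pick an integer $k \le l_{\delta}^{\mc U}(T_{t/m}) + 1$ together with a probability measure $\mu \in \mc P(\uu^{k})$ such that the channel
\[
\Phi_\mu := \int_{u_1,\dots,u_k \in \uu}\adm_{u_1}\cdots \adm_{u_k} \, d\mu(u_1,\dots,u_k)
\]
satisfies $\|T_{t/m}-\Phi_\mu\|_{\diamond} < \delta$. Since the infimum in \eqref{eqn:convex_depth} is over integers, I can in fact take $k = l_{\delta}^{\mc U}(T_{t/m})$ directly, or argue by infimum and let $\eps \to 0$.

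Next I would form the $m$-fold composition $\Phi_\mu^{\circ m}$. Using Fubini, this composition equals
\[
\Phi_\mu^{\circ m} \;=\; \int_{\uu^{mk}}\adm_{u_1^{(1)}}\cdots \adm_{u_k^{(m)}} \, d\mu^{\otimes m},
\]
which is a convex combination of products of $mk$ unitaries from $\uu$, hence it is an admissible candidate for the infimum defining $l^{\mc U}_{m\delta}(T_t)$ at depth $mk$. For the error bound I would use the telescoping identity
\[
T_{t/m}^{\circ m}-\Phi_\mu^{\circ m} \;=\; \sum_{j=1}^m T_{t/m}^{\circ (m-j)}\bigl(T_{t/m}-\Phi_\mu\bigr)\Phi_\mu^{\circ (j-1)},
\]
combined with the semigroup property $T_t = T_{t/m}^{\circ m}$, submultiplicativity of the diamond norm, and the fact that every quantum channel has diamond norm equal to $1$. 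This immediately yields
\[
\|T_t - \Phi_\mu^{\circ m}\|_\diamond \;\le\; m\,\|T_{t/m}-\Phi_\mu\|_\diamond \;<\; m\delta,
\]
so $\Phi_\mu^{\circ m}$ is a valid $m\delta$-approximation of $T_t$ of convexified depth $mk$. Taking the infimum over admissible $\mu$ and $k$ delivers $l^{\mc U}_{m\delta}(T_t) \le m\, l^{\mc U}_{\delta}(T_{t/m})$.

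There is essentially no obstacle here; the only minor points to be careful about are (i) confirming that $\Phi_\mu^{\circ m}$ really lies in the feasible set defining $l^{\mc U}_{m\delta}(T_t)$, which is just an application of Fubini to rewrite the composition as a single integral against the product measure, and (ii) handling the edge case in which the infimum in \eqref{eqn:convex_depth} is not attained, which is dispatched by approximating the infimum within $\eps$ and sending $\eps\to 0$ at the end.
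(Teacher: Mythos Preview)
Your proposal is correct and follows essentially the same approach as the paper: both take an approximating random unitary channel for $T_{t/m}$, compose it $m$ times, and use the telescoping identity together with contractivity of channels in diamond norm to bound $\|T_t - \Phi_\mu^{\circ m}\|_\diamond < m\delta$. Your additional remarks about Fubini and the infimum being attained (it is, since it is over integers) are fine but not strictly needed; the paper simply notes that the case $l_{\delta}^{\mc U}(T_{t/m}) = \infty$ is trivial and otherwise takes the minimizer directly.
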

The proof of the above Lemma is standard in numerical analysis, relying on the contraction property of semigroups. For completeness, we still include a short proof here.
\begin{proof}
    We only need to consider the case where $l_{\delta}^{\mc U}(T_{t/m}) < \infty $ as the statement is trivial otherwise. Suppose that there exists a random unitary channel $\Psi_{t/m}$ with length $l = l_{\delta}^{\mc U}(T_{t/m}) < \infty$ such that 
    \begin{equation}
        \|T_{t/m} - \Psi_{t/m}\|_{\diamond} < \delta.
    \end{equation}
    By the semigroup contraction property and the telescoping argument, 
    \begin{align}
      \|T_{t/m}^m - \Psi_{t/m}^m\|_{\diamond} & = \|\sum_{i=0}^{m-1}(T_{t/m})^{m-i} (\Psi_{t/m})^i - (T_{t/m})^{m-i-1} (\Psi_{t/m})^{i+1}\|_{\diamond}  \\
      & \le \sum_{i=0}^{m-1} \|(T_{t/m})^{m-i-1} (T_{t/m} - \Psi_{t/m})(\Psi_{t/m})^i\|_{\diamond} \\
      & \le m \|T_{t/m} - \Psi_{t/m}\|_{\diamond} < m\delta. 
    \end{align}
    This implies that $\Psi_{t/m}^m$ is a random unitary channel of length $m l_{\delta}^{\mc U}(T_{t/m})$ such that the approximation error to $T_t$ is within $m\delta$. Therefore, by definition we have 
    \begin{equation}
        l_{m\delta}^{\mc U}(T_t) \le \text{depth}\left(\Psi_{t/m}^m\right)=m l_{\delta}^{\mc U}(T_{t/m}).
    \end{equation}
\end{proof}

By employing a similar semigroup contraction argument as used in the proof above, we can also demonstrate the following relation between the maximum uniform length $\mm_{\alpha,\beta}$ and the upper bound on the convexified length $l^{\U}_{\delta}$. In particular, if we know that the maximum length to simulate up to a certain time is finite, it follows that for any finite time at least as large as this reference time, the \textit{upper} bound on the simulation length will also remain finite.
\begin{lemma}\label{init} Suppose there exists $t_0>0$ such that 
 \[ 
 \mm_{\al,\beta}^{[0,t_0]} < \infty \pl\,,
 \] 
then for any $\delta>0$ and $t>0$
, we have
\begin{equation}\label{upper estimate: any time}
     l_{\delta}^\U(T_t) \kl \left(\max\left\{ \left(\frac{\al}{\delta}\right)^{\frac{1}{\beta-1}} t^{\frac{\beta}{\beta-1}}, \frac{t}{t_0}\right\} +1\right)  \pl  \mm_{\al,\beta}^{[0,t_0]}.
\end{equation}
\end{lemma}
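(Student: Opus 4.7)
The plan is to reduce Lemma \ref{init} to Lemma \ref{lem:separation_property} by choosing the right integer $m$ so that splitting $T_t$ into $m$ pieces places each piece inside the regime $[0,t_0]$ where the uniform bound $\mm_{\al,\beta}^{[0,t_0]}$ applies, while simultaneously controlling the accumulated local error to be below $\delta$. The rest is essentially an optimization of $m$ against two constraints.

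More concretely, I would first fix an integer $m\ge 1$ with $t/m\le t_0$. Since $t/m\in[0,t_0]$, the definition of $\mm_{\al,\beta}^{[0,t_0]}$ gives a random unitary channel of length at most $\mm_{\al,\beta}^{[0,t_0]}$ approximating $T_{t/m}$ within error $\al (t/m)^{\beta}$, i.e.\ $l_{\al(t/m)^{\beta}}^{\mc U}(T_{t/m})\le \mm_{\al,\beta}^{[0,t_0]}$. Applying Lemma \ref{lem:separation_property} to this local approximation then gives
\[
l_{m\al(t/m)^{\beta}}^{\mc U}(T_t)\;\le\; m\,\mm_{\al,\beta}^{[0,t_0]}\,,
\]
and since $m\al(t/m)^{\beta}=\al t^{\beta}/m^{\beta-1}$, the accumulated error is at most $\delta$ provided $m^{\beta-1}\ge \al t^{\beta}/\delta$, equivalently $m\ge (\al/\delta)^{1/(\beta-1)}t^{\beta/(\beta-1)}$.

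Therefore I would take $m$ to be the smallest positive integer satisfying both $m\ge t/t_0$ and $m\ge (\al/\delta)^{1/(\beta-1)}t^{\beta/(\beta-1)}$, which yields
\[
m\;\le\; \max\!\left\{\left(\frac{\al}{\delta}\right)^{\frac{1}{\beta-1}}\!t^{\frac{\beta}{\beta-1}},\;\frac{t}{t_0}\right\}+1\,.
\]
Combined with the inequality above and the trivial monotonicity $l_{\delta}^{\mc U}(T_t)\le l_{\delta'}^{\mc U}(T_t)$ for $\delta'\le \delta$, this gives exactly the stated upper bound on $l_{\delta}^{\mc U}(T_t)$.

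I do not anticipate a serious obstacle: once Lemma \ref{lem:separation_property} is in hand, the argument is purely a choice-of-$m$ balancing between the two constraints (stay inside $[0,t_0]$ and keep the telescoped error below $\delta$). The only mildly delicate point is keeping track of the ``$+1$'' that appears when rounding $m$ up to an integer, which is precisely what produces the additive $+1$ inside the $\max$ in the stated inequality.
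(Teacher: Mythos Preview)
Your proposal is correct and follows essentially the same approach as the paper: choose an integer $m$ (the paper calls it $n$) large enough that both $t/m\le t_0$ and $\alpha t^{\beta}m^{1-\beta}\le\delta$, then telescope. The only cosmetic difference is that you invoke Lemma~\ref{lem:separation_property} explicitly while the paper re-runs the telescoping argument inline; the choice of $m$ and the resulting bound, including the ``$+1$'' from rounding, are identical.
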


\begin{proof} Let $t>0$. Choose $n$ such that we have $t/n\le t_0$ and hence there exists a channel $\Psi_{t/n}$ of length $l\leq \mm_{\al,\beta}^{[0,t_0]}<\infty$ such that 
 \[ \|T_{t/n}-\Psi_{t/n}\|_{\diamond}<\al (t/n)^{\beta} \] 
Using the same contraction and telescoping argument as in the previous lemma, we have
\[ 
\|T_t -\Psi_{t/n}^n\|_{\diamond} < n^{1-\beta} \al t^{\beta}. 
\] 
If $n\gl (\frac{\al}{\delta})^{\frac{1}{\beta-1}} t^{\frac{\beta}{\beta-1}}$, we have 
\begin{equation}
    \|T_t -\Psi_{t/n}^n\|_{\diamond} < n^{1-\beta} \al t^{\beta} \le \delta.
\end{equation}
Therefore, we find a random unitary channel given by $\Psi_{t/n}^n$ of length $ln$, such that $\|T_t -\Psi_{t/n}^n\|_{\diamond} < \delta$. Choosing
\begin{equation}
    n = \left\lfloor \max\left\{ \left(\frac{\al}{\delta}\right)^{\frac{1}{\beta-1}} t^{\frac{\beta}{\beta-1}}, \frac{t}{t_0}\right\}\right\rfloor + 1,
\end{equation}
we have 
\begin{equation}
    l_{\delta}^\U(T_t) \leq \text{depth}\left(\Psi_{t/n}^n\right) \leq ln \le \left(\max\left\{ \left(\frac{\al}{\delta}\right)^{\frac{1}{\beta-1}} t^{\frac{\beta}{\beta-1}}, \frac{t}{t_0}\right\} +1\right)  \pl  \mm_{\al,\beta}^{[0,t_0]}.
\end{equation}
\end{proof}
Although the definition of $\mm^{[0,\infty)}_{\alpha,\beta}$ considers the supremum over all $t \geq 0$, it suffices to focus on the interval where $t$ is below a certain threshold. In particular, for $t >\left(\frac{2}{\alpha}\right)^{1/\beta}$ and any $u \in \mc U$,
\begin{equation}
\|T_t - ad_u\|_{\diamond} \leq \|T_t\|_{\diamond} + \|ad_u\|_{\diamond} \leq 2 < \alpha t^{\beta}\,.
\end{equation}
By definition, this implies
\begin{align*}
    l_{\alpha t^{\beta}}^{\mc U}(T_t) \le 1,\quad \forall t > \left(\frac{2}{\alpha}\right)^{1/\beta}.
\end{align*}
Thus, to find an upper bound of $\mm^{[0,\infty)}_{\alpha,\beta}(\{T_t\})$, we only focus on the a small region where $t\in\left[0, \left(\frac{2}{\alpha}\right)^{1/\beta}\right]$. Without loss of generality, when the underlying accessible unitary and quantum Markov semigroup are clear, we define
\begin{equation}\label{def: maximal length}
    \mm_{\alpha,\beta}:= \sup_{0\leq t \leq (\frac{2}{\alpha})^{1/\beta}} l_{\alpha t^{\beta}}^{\mc U}(T_t).
\end{equation}
In our later analysis, we can often derive the upper and lower bounds of $\mm_{\alpha,\beta}$ for a large $\alpha$. Thefollwoing lemma can help rescale these bounds to arbitrary $\alpha$ up to a constant factor.
\begin{lemma}\label{different parameter}
For fixed $\beta>1$, and any $0< \alpha' \leq \alpha$, we have 
\begin{equation}\label{different alpha}
    \mm_{\alpha,\beta} \leq \mm_{\alpha',\beta} \leq 2 \left(\frac{\alpha}{\alpha'}\right)^{\frac{1}{\beta - 1}}\mm_{\alpha,\beta}.
\end{equation}
\end{lemma}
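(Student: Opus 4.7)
The strategy is to treat the two inequalities separately. The left inequality will follow from monotonicity, since both the error threshold $\alpha't^\beta$ and the relevant time window $[0,(2/\alpha')^{1/\beta}]$ behave monotonically as $\alpha'$ decreases. The right inequality is the substantive one, and the plan is to use the semigroup telescoping from Lemma \ref{lem:separation_property} to subdivide $T_t$ into $m$ equal sub-steps, where $m$ is chosen so that each sub-step fits into the regime controlled by $\mm_{\alpha,\beta}$ while the accumulated error still fits under the $\alpha't^\beta$ budget.

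For the first inequality, I would observe that $\alpha'\le \alpha$ gives $\alpha't^\beta\le \alpha t^\beta$, so $l^{\mathcal U}_{\alpha t^\beta}(T_t)\le l^{\mathcal U}_{\alpha' t^\beta}(T_t)$ at every $t$ by definition of the convexified circuit depth. Moreover, the supremum in $\mm_{\alpha,\beta}$ runs over the smaller interval $[0,(2/\alpha)^{1/\beta}]\subseteq[0,(2/\alpha')^{1/\beta}]$. Chaining these two monotonicities yields $\mm_{\alpha,\beta}\le \mm_{\alpha',\beta}$.

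For the second inequality, fix $t\in(0,(2/\alpha')^{1/\beta}]$ and set $m=\lceil (\alpha/\alpha')^{1/(\beta-1)}\rceil$. Lemma \ref{lem:separation_property} applied with $\delta=\alpha(t/m)^\beta$ gives
\begin{equation*}
l^{\mathcal U}_{m\alpha(t/m)^\beta}(T_t)\;\le\; m\,l^{\mathcal U}_{\alpha(t/m)^\beta}(T_{t/m}).
\end{equation*}
The defining inequality $m\ge (\alpha/\alpha')^{1/(\beta-1)}$ rearranges to $m\alpha(t/m)^\beta=\alpha t^\beta/m^{\beta-1}\le \alpha't^\beta$, so monotonicity in the error threshold bounds $l^{\mathcal U}_{\alpha't^\beta}(T_t)$ by the left-hand side above. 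The key compatibility check is that the sub-step $t/m$ lies in the regime $[0,(2/\alpha)^{1/\beta}]$ controlling $\mm_{\alpha,\beta}$; this requires $m\ge t(\alpha/2)^{1/\beta}$, and for $t\le (2/\alpha')^{1/\beta}$ the right-hand side is at most $(\alpha/\alpha')^{1/\beta}\le (\alpha/\alpha')^{1/(\beta-1)}\le m$, where I used $\beta>1$ and $\alpha/\alpha'\ge 1$. Hence $l^{\mathcal U}_{\alpha(t/m)^\beta}(T_{t/m})\le \mm_{\alpha,\beta}$, giving $l^{\mathcal U}_{\alpha't^\beta}(T_t)\le m\,\mm_{\alpha,\beta}$.

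Finally I would bound $m\le (\alpha/\alpha')^{1/(\beta-1)}+1\le 2(\alpha/\alpha')^{1/(\beta-1)}$, using $\alpha\ge\alpha'$ so the quantity is at least one. Taking the supremum over $t\in(0,(2/\alpha')^{1/\beta}]$ yields $\mm_{\alpha',\beta}\le 2(\alpha/\alpha')^{1/(\beta-1)}\mm_{\alpha,\beta}$. The only real subtlety is the simultaneous satisfaction of the error-budget condition $m\ge (\alpha/\alpha')^{1/(\beta-1)}$ and the time-window condition $t/m\le (2/\alpha)^{1/\beta}$; both reduce to lower bounds on $m$, and the exponent $1/(\beta-1)$ dominates $1/\beta$ precisely because $\beta>1$, which makes the single choice of $m$ above suffice.
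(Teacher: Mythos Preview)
Your proof is correct and follows essentially the same approach as the paper. The only cosmetic difference is that the paper quotes Lemma~\ref{init} (itself proved by the same telescoping-and-subdivide argument) with $t_0=(2/\alpha)^{1/\beta}$ and $\delta=\alpha' t^\beta$, whereas you invoke Lemma~\ref{lem:separation_property} directly with the explicit choice $m=\lceil(\alpha/\alpha')^{1/(\beta-1)}\rceil$; both routes produce the same max of $(\alpha/\alpha')^{1/(\beta-1)}$ and $(\alpha/\alpha')^{1/\beta}$ and the same factor $2$ from bounding $\lceil\cdot\rceil$.
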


\begin{proof}
    Without loss of generality, we assume $\mm_{\al,\beta}< +\infty$. Then the assumption in Lemma \ref{init} holds true for $t_0 = (\frac{2}{\alpha})^{1/\beta}$.
    
    \noindent For any $t\in \left(0,(\frac{2}{\alpha'})^{1/\beta}\right]$, we choose $\delta = \alpha' t^{\beta}$. Then using Lemma \ref{init} we have
\begin{align*}
    l_{\alpha' t^{\beta}}^\U(T_t) & \leq \left(\max\left\{ \left(\frac{\al}{\alpha' t^{\beta}}\right)^{\frac{1}{\beta-1}} t^{\frac{\beta}{\beta-1}}, \frac{t}{t_0}\right\} +1\right)  \pl  \mm_{\al,\beta} \\
    & \leq 2 \max\left\{\left(\frac{\alpha}{\alpha'}\right)^{\frac{1}{\beta - 1}}, \left(\frac{\alpha}{\alpha'}\right)^{\frac{1}{\beta}}\right\}\mm_{\alpha,\beta} = 2\left(\frac{\alpha}{\alpha'}\right)^{\frac{1}{\beta - 1}}\mm_{\alpha,\beta}.
\end{align*}
Taking the supremum over $t\in \left(0,(\frac{2}{\alpha'})^{1/\beta}\right]$, we get the desired result.
\end{proof}

\section{Upper estimates for uniform maximal simulation cost of unital quantum dynamics.}\label{sec:upper}
In this section, we demonstrate that for a specific category of unital quantum Markov semigroups and a suitably selected unitary set $\mc U$ acting on the (sub)system $\mc H$ only, efficient simulation of $T_t$ is achievable. Specifically, we present two instances where an upper bound for the uniform maximal simulation cost can be established in Sections \ref{sec:symm_case} and \ref{sec:nonsymm_case}.

\subsection{Symmetric noise generators: Trotter approximation}\label{sec:symm_case}

The first specific instance we consider are quantum dynamics with symmetric jump operators. In this case, we can simplify the quantum dynamics simulation to a unitary simulation, using the Trotter formula to establish an upper bound for the depth of the circuit~\cite{PhysRevX.11.011020}.


We look at the dynamical semigroup that is generated by the following symmetric generator:
\begin{equation}\label{generator: symmetric}
    L(x) = \sum_{j=1}^m a_j x a_j - \frac{1}{2}\left(xa_j^2 + a_j^2 x\right),\quad a_j = a_j^*\,,
\end{equation}
where $a^*_j$ is the adjoint of $a_j$. Note that $a_j$ can be non-local when the underlying system is $n$-qubit system. We assume the accessible unitary set:
\begin{equation}\label{unitary: symmetric}
    \mc U = \{\exp(ita_j): |t| \le \tau, 1\le j \le m\}.
\end{equation}

The upper bound of $\mm_{\alpha,\beta}$ is summarized in the following theorem:
\begin{theorem}\label{main: symmetric}
    Suppose $T_t = e^{tL}$ with $L$ given by \eqref{generator: symmetric} and $t>0$. Then, for any $\alpha >0$ and $1< \beta \leq 3$, we have $\mm_{\alpha,\beta} < +\infty$. Moreover, there exists a universal constant $C>0$ depending only on $\max_i \|a_i\|<+\infty$ such that
    \begin{equation}
        \mm_{\alpha,\beta} \le C\frac{\max\left\{m^{\frac{5}{2}},m^2/\tau\right\}}{\sqrt{\alpha}}.
    \end{equation}
\end{theorem}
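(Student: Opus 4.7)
The plan is to construct an explicit simulation scheme combining a second-order Suzuki (Strang) splitting of $e^{tL}=e^{t\sum_j L_j}$ with a symmetric three-point Gauss--Hermite quadrature approximation for each elementary semigroup $e^{sL_j}$. Since $\mm_{\al,\beta}$ is monotone in $\beta$ it suffices to treat the hardest case $\beta=3$; the other values of $\beta\in(1,3]$ follow automatically.

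First I would exploit $a_j=a_j^*$ to rewrite $L_j(x)=-\tfrac12[a_j,[a_j,x]]$, so that $e^{sL_j}$ is the Gaussian smoothing semigroup $e^{sL_j}(x)=\mathbb{E}_{\omega\sim N(0,s)}[e^{i\omega a_j}xe^{-i\omega a_j}]$. Replacing the Gaussian expectation by a three-point Gauss--Hermite quadrature at nodes $0,\pm\sqrt{3s}$ with weights $\tfrac23,\tfrac16,\tfrac16$ defines a random unitary channel $\Phi_s^{(j)}$; matching Gaussian moments through order four gives $\|e^{sL_j}-\Phi_s^{(j)}\|_{\diamond}\le c_0\,s^3\|a_j\|^6$. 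The required unitaries $e^{\pm i\sqrt{3s}a_j}$ lie in $\mc U$ whenever $s\le\tau^2/3$; otherwise I chain $k=\lceil 3s/\tau^2\rceil$ copies of $\Phi_{s/k}^{(j)}$, with chained error $\le c_0 s^3\|a_j\|^6/k^2$. Combining this with the symmetric Strang product $S_2(h):=\prod_{j=1}^m e^{(h/2)L_j}\prod_{j=m}^{1}e^{(h/2)L_j}$ (whose nested-commutator BCH bound reads $\|e^{hL}-S_2(h)\|_\diamond\le c_1 m^3\|a\|^6 h^3$) and telescoping over $N$ Strang steps yields a random unitary channel $\Psi$ with
\begin{align*}
\|T_t-\Psi\|_{\diamond}\;\le\; c_1 m^3\|a\|^6\frac{t^3}{N^2}+c_0\,m\|a\|^6\frac{t^3}{N^2 k^2}.
\end{align*}

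Next I would optimize the parameters. Choosing $N=\lceil C\|a\|^3 m^{3/2}/\sqrt{\al}\rceil$ balances the Strang contribution at $\al t^3/2$ uniformly in $t$; choosing $k=\max(1,\lceil 3t/(2N\tau^2)\rceil)$ enforces the size constraint $\sqrt{3t/(2Nk)}\le\tau$ while keeping the Gauss--Hermite contribution at $\al t^3/2$ (the pure-accuracy lower bound on $k$ is already $\le 1$ after this choice of $N$). The convexified depth is $(2m-1)Nk\le 2mNk$. In the regime $t\le 2N\tau^2/3$ no chaining is needed and the depth is $O(m^{5/2}/\sqrt{\al})$; for larger $t$ the factor $k\le 3t/(2N\tau^2)+1$ contributes an additional $O(mt/\tau^2)$. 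Taking the supremum over $t\in(0,t_0]$ with $t_0=(2/\al)^{1/3}$, combining the two regimes and bounding $mt/\tau^2$ in terms of the more convenient form $m^2/(\tau\sqrt{\al})$ in the small-$\tau$ sub-case yields the claimed $\mm_{\al,3}\le C\max\{m^{5/2},m^2/\tau\}/\sqrt{\al}$.

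The main technical obstacle will be step two: establishing the cubic bound $\|e^{sL_j}-\Phi_s^{(j)}\|_{\diamond}\le c_0 s^3\|a_j\|^6$ in \emph{diamond} norm rather than mere operator norm. Because the diamond norm amplifies by $\mathrm{id}_n$, one must control the tail of the adjoint-action expansion $\sum_{n\ge 6}\tfrac{(i\sqrt s)^n}{n!}(\mu_n^{GH}-\mu_n^G)\,\mathrm{ad}_{a_j}^n$ uniformly in the ancilla dimension; this works because $a_j\otimes\mathbb{I}$ has the same operator norm as $a_j$, but the bookkeeping has to be written out. The analogous BCH--Strang estimate in diamond norm, requiring super-operator norms $\|[L_{j_1},[L_{j_2},L_{j_3}]]\|_{\diamond}\le 8\|a_{j_1}\|^2\|a_{j_2}\|^2\|a_{j_3}\|^2$ completely-boundedly, is the only other delicate piece; everything else is routine parameter bookkeeping, with the small-$\tau$ sub-case possibly best handled by switching to a first-order Trotter scheme with Rademacher chaining on the shrinking range of $t$ where the identity channel does not already achieve error $\al t^3$.
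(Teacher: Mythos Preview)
Your overall strategy---second-order Strang splitting combined with a discrete Gaussian moment-matching for each factor $e^{sL_j}$---is exactly the paper's approach; the three-point Gauss--Hermite rule and the paper's five-point rule both match moments through order four and give the same $O(s^3)$ error. The gap is in how you enforce the gate-size constraint $|\theta|\le\tau$. You chain $k=\lceil 3s/\tau^2\rceil$ copies of the \emph{channel} $\Phi_{s/k}^{(j)}$ via the semigroup identity $e^{sL_j}=(e^{(s/k)L_j})^k$, which costs $k\sim s/\tau^2$ unitaries per factor. The paper instead decomposes the \emph{unitary} itself, writing $e^{i\sqrt{2s}Za_j}=(e^{i\sqrt{2s}Za_j/k})^k$ with $k=\lceil\sqrt{2s}|Z|/\tau\rceil\sim\sqrt{s}/\tau$; this introduces no additional approximation error and is cheaper by a factor $\sqrt{s}/\tau$. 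Carrying your chaining through gives $\mm_{\alpha,3}\le O(m^{5/2}/\sqrt\alpha + m\alpha^{-1/3}/\tau^2)$, and the second term exceeds the claimed $m^2/(\tau\sqrt\alpha)$ precisely when $\tau<\alpha^{1/6}/m$. Your closing remark about the ``small-$\tau$ sub-case'' acknowledges something is off but does not supply a proof; switching to first-order Trotter does not recover $\beta=3$. The one-line fix is to decompose the long rotation into $\lceil\sqrt{3s}/\tau\rceil$ short ones in $\mc U$, exactly as the paper does in its length estimate \eqref{local estimate: upper bound}, and then invoke Lemma~\ref{different parameter} rather than optimizing $N$ by hand.

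As a minor point: the issue you flag as the ``main technical obstacle''---upgrading the cubic bound to diamond norm---is not one. The moment-matching argument operates through $\mathrm{ad}_{a_j}$, and since $\|a_j\otimes I_n\|=\|a_j\|$ the tail estimate is automatically uniform in the ancilla dimension; no extra bookkeeping is needed.
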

In the proof of the above theorem, we construct a second-order simulation scheme for $\exp\left(t\sum_j L_{a_j}\right)$. The second-order approximation primarily arises from two sources: (1) the second-order simulation scheme for $\exp(tL_{a_j})$, and (2) the second-order Trotter splitting of $\exp\left(t\sum_j L_{a_j}\right)$. Although it is possible to improve the former to arbitrarily high order, the latter splitting error remains the bottleneck for improving the $5/2$ power in $m$. Since we do not assume further locality assumptions on $a_j$, it is not a trivial task to split $\exp\left(t\sum_j L_{a_j}\right)$ to achieve a higher-order scheme.

To prove the above theorem, we first establish the following Lemma for Lindbladian dynamics with a single jump operator.

\begin{lemma}\label{symmetric: approximation}
    For a self-adjoint matrix $a = a^*$, denote 
    \begin{equation}
        L_a(x):= 2 axa - (a^2x + xa^2).
    \end{equation}
    Then there exists a universal constant $C>0$ such that
    \begin{equation}\label{eqn:expectation_approximation}
    \left\|\exp(tL_a) - \mb E\left[\adm_{\exp(i \sqrt{2t}Za)}\right]\right\|_{\diamond}\leq C \|a\|^6t^3\,.
    \end{equation}
    where $Z$ is a random variable such that
    \begin{equation}\label{eqn:distribution}
    \mathbb{P}(Z=2)=\mathbb{P}(Z=-2)=1/12,\quad \mathbb{P}(Z=1)=\mathbb{P}(Z=-1)=1/6,\quad \mathbb{P}(Z=0)=1/2\,.
    \end{equation}
\end{lemma}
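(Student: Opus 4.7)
\emph{Plan.} The strategy is to Taylor-expand both sides as formal power series in $t$, verify that the first three terms (orders $t^0$, $t^1$, $t^2$) agree exactly, and bound the residuals in diamond norm. The distribution \eqref{eqn:distribution} is engineered so that the first three even moments of $Z$ coincide with those of a standard normal; together with the symmetry $\mathbb{P}(Z=z)=\mathbb{P}(Z=-z)$ this forces the required cancellations.

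\textbf{Step 1 (expansion and moment matching).} Writing $g(s)(x):=e^{isa}xe^{-isa}$ and using $\frac{d^n}{ds^n}g(s)(x)\big|_{s=0}=i^n[a,\,\cdot\,]^n(x)$, Taylor's formula gives $g(s)(x)=\sum_{n\ge 0}\frac{(is)^n}{n!}[a,\,\cdot\,]^n(x)$. Since $a=a^*$ one checks $[a,[a,x]]=a^2x-2axa+xa^2=-L_a(x)$, so $[a,\,\cdot\,]^{2k}=(-1)^kL_a^k$. Substituting $s=\sqrt{2t}Z$ and killing the odd-order contributions by symmetry,
\[
\mathbb{E}\bigl[\mathrm{ad}_{e^{i\sqrt{2t}Za}}\bigr](x)=\sum_{k\ge 0}\frac{2^k\,\mathbb{E}[Z^{2k}]}{(2k)!}\,t^k\,L_a^k(x).
\]
A direct computation from \eqref{eqn:distribution} yields $\mathbb{E}[Z^0]=1$, $\mathbb{E}[Z^2]=1$, $\mathbb{E}[Z^4]=3$, so the coefficients $2^k\mathbb{E}[Z^{2k}]/(2k)!$ equal $1,\,1,\,1/2$ for $k=0,1,2$---exactly the Taylor coefficients of $\exp(tL_a)$ through second order. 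Therefore
\[
\exp(tL_a)-\mathbb{E}\bigl[\mathrm{ad}_{e^{i\sqrt{2t}Za}}\bigr]=\Bigl(\exp(tL_a)-\sum_{k=0}^{2}\tfrac{t^k}{k!}L_a^k\Bigr)-\sum_{k\ge 3}\frac{2^k\mathbb{E}[Z^{2k}]}{(2k)!}\,t^k\,L_a^k.
\]

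\textbf{Step 2 (diamond-norm remainder estimates).} Decomposing $L_a=2L_aR_a-L_{a^2}-R_{a^2}$, with $L_b,R_b$ denoting left/right multiplication by $b$ (each of diamond norm $\|b\|$), gives $\|L_a\|_\diamond\le 4\|a\|^2$ and hence $\|L_a^k\|_\diamond\le(4\|a\|^2)^k$. Taylor's integral remainder combined with $\|\exp(sL_a)\|_\diamond=1$ (it is a quantum channel for $s\ge 0$) bounds the first residual by $\tfrac{t^3}{6}\|L_a^3\|_\diamond\le\tfrac{32}{3}\|a\|^6t^3$. For the second residual, $|Z|\le 2$ gives $\mathbb{E}[Z^{2k}]\le 4^k$, so the tail is controlled by $\sum_{k\ge 3}(32\|a\|^2t)^k/(2k)!$, whose $k=3$ term is $O(\|a\|^6t^3)$ and later terms are geometrically smaller provided $\|a\|^2t$ stays bounded. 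In the regime where $\|a\|^6t^3\ge 1$ the inequality \eqref{eqn:expectation_approximation} is trivial from $\|\cdot\|_\diamond\le 2$, so a single absolute constant $C$ suffices for all $t>0$.

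The main obstacle is the construction of the three-atom distribution \eqref{eqn:distribution}: one needs a symmetric $Z$ on a small support with $\mathbb{E}[Z^{2k}]=\frac{(2k)!}{2^kk!}$ for $k=0,1,2$ (i.e.\ $1,1,3$), and \eqref{eqn:distribution} is essentially the minimal such choice. The symmetry is decisive, because without it the odd-order terms $(is)^{2k+1}[a,\,\cdot\,]^{2k+1}$ would produce an $O(\sqrt{t})$ local error that no choice of even moments could cancel; once symmetry and the three-moment match are in place, the remaining estimates reduce to routine Taylor bounds on the exponential and on the series tail.
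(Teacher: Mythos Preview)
Your proof is correct and follows essentially the same route as the paper: Taylor-expand $\adm_{e^{i\sqrt{2t}Za}}$, use the symmetry of $Z$ to kill odd-order contributions, match the even moments $\mathbb{E}[Z^{2k}]=(2k)!/(2^kk!)$ for $k=0,1,2$ so that the series agree with $e^{tL_a}$ through order $t^2$, and then bound the two tails. The only cosmetic difference is that the paper diagonalizes $a=\sum_j\lambda_je_j$ and works with the Schur multipliers $e^{i\sqrt{2t}Z(\lambda_l-\lambda_j)}$ to read off $L_a^k$, whereas you use the commutator identity $[a,[a,\cdot]]=-L_a$ directly; these are equivalent reformulations, and your remainder bookkeeping (splitting into the regimes $\|a\|^2t\le 1$ and $\|a\|^2t>1$) is arguably tidier than the paper's sketch.
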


\begin{proof}
To prove \eqref{eqn:expectation_approximation}, we first let $Z$ be an arbitrary random variable with real value. We diagonalize $a=\sum_j \la_j e_j$ where $e_j$ are orthogonal projections. Then,
 \begin{equation}\label{eqn:exp_form}
 e^{-i\sqrt{2t}Za}\rho e^{i\sqrt{2t}Za}
 \lel \sum_{jl} e^{i\sqrt{2t}Z(\la_l-\la_j)} e_j\rho e_l
 \lel \sum_{jl} \sum_k \frac{i^k(2t)^{k/2}Z^k}{k!} (\la_l-\la_j)^k  e_j\rho e_l \pl\,. 
 \end{equation}
 Because $L^k_a(\rho)=\sum_{jl}(-1)^k(\lambda_l-\lambda_j)^{2k}e_j\rho e_l$, we have
\[ 
\ez \left(e^{-i\sqrt{2t}Za}\rho e^{i\sqrt{2t}Za}\right)
\lel \sum_k \frac{(2t)^k}{(2k)!} \ez \left(Z^{2k}\right) L_a^k (\rho)=\exp(tL_a)(\rho)
\,. 
\]
Note that if $Z\sim \mathcal{N}(0,1)$, standard Gaussian random variable, we have $\mathbb{E}(Z^{2k})=(2k)!/(k!2^k)$ thus 
$
\ez \left(e^{-i\sqrt{2t}Za}\rho e^{i\sqrt{2t}Za}\right) = \exp(tL_a)(\rho)
$.

Now, to find an approximation of order $m$, we construct a symmetric bounded random variable $Z$ such that $|Z|\leq b$ and $\mathbb{E}(Z^{2k})=(2k)!/(k!2^k)$ for $k=1,...,m-1$. Then 
\begin{align*} 
\left\| \ez\left(\adm_{e^{i\sqrt{t}Za}}\right)-e^{-tL_a}\right\|_{\diamond}
\le \sum_{k\ge m} \frac{(2t)^{k}}{(2k)!} b^{2k} \|L_a\|_{\diamond}^k + \sum_{k\ge m} \frac{t^{k}}{k!}\|L_a\|_{\diamond}^k \kl \frac{t^m}{m!} \left(e^{\sqrt{2t}b\|L_a\|_{\diamond}} + e^{t\|L_a\|_{\diamond}}\right)\leq \mc O( \|a\| t^{m}) 
\end{align*}
holds for $|t|\le 1$. Noticing that \eqref{eqn:distribution} satisfies the above condition with $m=3$ and $b=2$, we conclude the proof.
\end{proof}

This lemma demonstrates how to simulate $\exp(tL_a)$ using few unitary gates and is the key tool that we will use to obtain the upper bound of $\mm_{\alpha,\beta}$.  As seen above, the proof of this lemma relies on Taylor expansion and moment matching for Gaussian random variables. For more than one term $L=\sum_j L_{a_j}$, we will apply the usual Suzuki-Trotter approximation. In particular, for $\exp\left(t\left(\sum^m_{i=1} L_i\right)\right)$, we consider second-order Trotter splitting: 
\begin{align}
    S_2(t) :=\exp(tL_1/2) \cdots \exp(tL_m/2)\exp(tL_m/2) \cdots \exp(tL_1/2)
    \label{eqn:second_order_trotter}\,.
\end{align}
According to \cite{PhysRevX.11.011020,suzuki1976generalized}, we have
\begin{equation}\label{eqn:trotter_error}
\left\|S_2(t)-\exp\left(t\sum^m_{j=1} L_j\right)\right\|_{\diamond}\leq O\left(\left(\sum^m_{j=1} \|L_j\|_{\diamond}\right)^3t^3\right)\,, \;\forall t>0\,.
\end{equation}

\noindent Now we are ready to provide the proof of Theorem \ref{main: symmetric}:


\begin{proof}
Because $l^{\mathcal{U}}_{\alpha t^\beta}$ is an increasing function in $\beta$ when $0<t<1$, we have
\[
\mm^{[0,1]}_{\alpha,\beta}\leq \mm^{[0,1]}_{\alpha,3}
\]
if $1<\beta\leq 3$. According to Lemma \ref{init}, to prove $\mm_{\alpha,\beta}<\infty$ for $1<\beta\leq 3$, it suffices to show $\mm_{\alpha,3}<\infty$.

First, we assume that $\left\{Z_j \right\}^m_{j=1}$ is a sequence of independent probability distributions that is identically distributed as Lemma \ref{symmetric: approximation} \eqref{eqn:distribution}. According to Lemma \ref{symmetric: approximation}, 
\begin{equation}\label{eqn:integral_approx}
\begin{aligned}
    & \left\|\exp(tL_{a_j}) - \mb E\left[ad_{\exp(i \sqrt{2t}Z_ja)}\right] \right\|_{\diamond} \leq C \|a_j\|^6t^3\,.          
\end{aligned}
\end{equation}

Using the second-order Trotter approximation, we have 
\begin{equation}\label{eqn:trotter_approx}
    \left\|\exp(tL) - S_2(tL_{a_1}, \cdots, tL_{a_m})\right\|_{\diamond}\leq O\left(\left(\sum^m_{j=1} \|L_{a_j}\|_{\diamond}\right)^3t^3\right)\leq C_2\left(\sum^m_{j=1} \|a_j\|^2 \right)^3t^3,
\end{equation}
where $S_2(tL_{a_1}, \cdots, tL_{a_m})$ is defined in \eqref{eqn:second_order_trotter}. 

Combining \eqref{eqn:integral_approx} and \eqref{eqn:trotter_approx}, we have 
\begin{align*}
    & \left\|\exp(tL) -  \prod_{j=1}^m \mb E\left[ad_{\exp(i \sqrt{t}Za)}\right] \prod_{j=m}^1 \mb E\left[ad_{\exp(i \sqrt{t}Za)}\right]\right \|_{\diamond} \\
    & \le  \left\|\exp(tL) -  \prod_{j=1}^m \exp(tL_{a_j}/2)\prod_{j=m}^1\exp(tL_{a_j}/2) \right\|_{\diamond} \\
    & + \left\| \prod_{j=1}^m \exp(tL_{a_j}/2)\prod_{j=m}^1\exp(tL_{a_j}/2) -\prod_{j=1}^m \mb E\left[ad_{\exp(i \sqrt{t}Za)}\right] \prod_{j=m}^1 \mb E\left[ad_{\exp(i \sqrt{t}Za)}\right] \right\|_{\diamond} \\
    & \le C_2\left(\sum^m_{j=1} \|a_j\|^2 \right)^3t^3 + 2C \sum_{j=1}^m \|a_j\|^6 t^3 \leq C_1 m^3 t^3.
\end{align*}
For the second term in the second last inequality, we use a telescoping argument and the contraction property of the semigroup.
According to the above inequality, for $\alpha_0 = C_1 m^3$, we get an upper bound for $l^{\mc U}_{\alpha_0 t^{3}}(T_t)$. This is the length of \begin{align*}
    \prod_{j=1}^m \mb E\left[ad_{\exp(i \sqrt{t}Za)}\right] \prod_{j=m}^1 \mb E\left[ad_{\exp(i \sqrt{t}Za)}\right]
\end{align*}
Recall that our accessible unitary set is given by 
\begin{align*}
    \mc U = \{\exp(it a_j): |t|\le \tau, 1\le j\le m\},
\end{align*}
thus the length is given by
\begin{equation}
    \begin{cases}
         2m |Z| ,\quad \sqrt{t} \le \tau, \\
        2m |Z| \left\lceil  \frac{\sqrt{t}}{\tau}\right\rceil,\quad \sqrt{t} \ge \tau\,,
    \end{cases}
\end{equation}
where $\left\lceil \frac{\sqrt{t}}{\tau} \right\rceil$ is the smallest integer that is greater than $\frac{\sqrt{t}}{\tau}$. Finally, because $|Z| \le 2$,
\begin{equation} \label{local estimate: upper bound}
    l^{\mc U}_{\alpha_0 t^{3}}(T_t) \le 4m \max\left\{1, \left\lceil \frac{\sqrt{t}}{\tau} \right\rceil\right\}. 
\end{equation}
Taking the supremum over $0\le t \le \left(\frac{2}{\alpha_0}\right)^{1/3}$ where $\alpha_0 = C_1 m^3$, we get
\[
\mm_{\alpha_0, 3} \le \max\left\{4m, \frac{2^{\frac{13}{6}} \sqrt{m}}{C_1^{\frac{1}{6}} \tau}\right\}\,.
\]

Finally, applying Lemma \ref{different parameter}, the above inequality implies 
\begin{align*}
    \mm_{\alpha,3} \le 2 \left(\frac{\alpha_0}{\alpha}\right)^{1/2} \mm_{\alpha_0,3} \le 2 \sqrt{\frac{C_1}{\alpha}} \max\left\{4m^{\frac{5}{2}}, \frac{2^{\frac{13}{6}} m^2}{C_1^{\frac{1}{6}} \tau}\right\} \sim m^{\frac{5}{2}}/\sqrt{\alpha}\,,
\end{align*}
which concludes the proof.
\end{proof}

\subsection{Unitary noise generators: Poisson model}\label{sec:nonsymm_case}

Suppose $T_t = \exp(tL)$ with
\begin{equation} \label{generator: Poisson}
    L(x) = \sum_j \lambda_j L_{u_j}(x)=\sum_{j=1}^m \lambda_j (u_j x u_j^* - x)\,,
\end{equation}
where the jump operators $\{u_j\}^m_{j=1}$ are unitary and $\lambda_j>0$. Without loss of generality, we assume normalization in the sense that 
\begin{align*}
    \sum_j \lambda_j = 1,
\end{align*}
otherwise we replace $\lambda_j$ by $\frac{\lambda_j}{\sum_j \lambda_j}$.  
We assume the accessible unitary set $$\mc U = \{u_1,\ldots, u_m\}.$$

The following theorem shows that the uniform maximal simulation cost for the above $T_t$ is always finite for any approximation model with $\alpha>0,\beta>1$:
\begin{theorem}\label{main: Poisson}
Suppose $T_t = \exp(tL)$ with $L$ given by \eqref{generator: Poisson}. For any $\alpha>0,\beta>1$, we have
\begin{equation}
    \mm_{\alpha,\beta}:= \sup_{0\leq t\leq\left(\frac{2}{a}\right)^{1/\beta}} l_{\alpha t^{\beta}}^{\uu}(T_t) < N\,,
\end{equation}
where $N$ is a smallest positive integer that satisfies the following two inequalities:
\begin{align}\label{existence of N}
N>\left(\frac{2}{\alpha}\right)^{1/\beta},\quad 
    \frac{1}{\beta}\ln\left(\frac{2}{\alpha}\right) < \frac{(\frac{2}{\alpha})^{1/\beta} + \ln (\alpha/2) + (N+1)\ln(N+1) - N - 1}{N+1-\beta}.
\end{align}
\end{theorem}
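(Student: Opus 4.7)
The plan is to exploit the Poisson-type expansion of $T_t$. Writing $L = P - \mathrm{id}$ with $P := \sum_{j=1}^{m}\lambda_j \mathrm{ad}_{u_j}$, commutativity of $P$ and $\mathrm{id}$ gives
\begin{equation*}
T_t = e^{-t}\exp(tP) = \sum_{k=0}^{\infty} e^{-t}\,\frac{t^k}{k!}\,P^k.
\end{equation*}
Since $\sum_j \lambda_j = 1$, each power $P^k$ expands as a convex combination of products $\mathrm{ad}_{u_{j_1}}\cdots\mathrm{ad}_{u_{j_k}}$ of depth exactly $k$, and the Poisson weights $e^{-t}t^k/k!$ form a probability distribution on $\mathbb{N}$.

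Next, I would construct a random unitary channel $\widetilde{\Phi}_t$ of depth at most $N-1$ by truncating the series at $k = N-1$ and redistributing the tail mass $\sum_{k\ge N} e^{-t}t^k/k! = \mathbb{P}(K \ge N)$ (with $K \sim \mathrm{Poisson}(t)$) onto a suitable depth-$\le (N-1)$ channel (e.g., a further convex mixture of the $P^k$, $k<N$, properly padded). By the triangle inequality and diamond-norm contractivity of every $P^k$,
\begin{equation*}
\|T_t - \widetilde{\Phi}_t\|_\diamond \le 2\,\mathbb{P}(K \ge N).
\end{equation*}
The first inequality $N > t_0 := (2/\alpha)^{1/\beta}$ guarantees $N$ exceeds every $t$ in the supremum range $(0, t_0]$, so the standard Chernoff tail estimate for Poisson random variables applies and gives $\mathbb{P}(K \ge N) \le e^{-t}(et/(N+1))^{N+1}$.

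I would then impose the requirement $2\,\mathbb{P}(K\ge N) < \alpha t^\beta$ uniformly in $t\in (0, t_0]$. Taking logarithms, this is equivalent to $h(t)<0$ for
\begin{equation*}
h(t) := \ln(2/\alpha) + (N+1) - (N+1)\ln(N+1) - t + (N+1-\beta)\ln t.
\end{equation*}
Since $h'(t) = -1 + (N+1-\beta)/t$ vanishes at $t^{\star} = N+1-\beta$ and $h$ is increasing on $(0, t^{\star})$, I would check that $t_0 \le t^{\star}$ under the stated conditions, so $\sup_{(0,t_0]} h = h(t_0)$. Substituting $t_0 = (2/\alpha)^{1/\beta}$ into $h(t_0) < 0$ and dividing by the positive quantity $N+1-\beta$ yields precisely the second inequality of the theorem. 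Existence of an integer $N$ satisfying both inequalities is then immediate since the right-hand side of the second inequality grows like $\ln(N+1)$ while the left-hand side is a fixed constant.

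The main obstacle is the monotonicity check on $(0, t_0]$: one must verify that the maximum of $h$ is attained at the endpoint $t = t_0$ rather than at the interior critical point $t^{\star}$. This amounts to a comparison of $t_0$ with $N+1-\beta$, which is controlled by requiring $N$ large enough (implicitly encoded by the second stated inequality). Once this monotonicity is established, the error bound $\|T_t - \widetilde{\Phi}_t\|_\diamond < \alpha t^\beta$ holds uniformly on $(0,t_0]$, giving $l^{\mc U}_{\alpha t^\beta}(T_t) < N$ for every $t$ in the relevant range and hence $\mm_{\alpha,\beta} < N$, as claimed.
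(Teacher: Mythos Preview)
Your approach coincides with the paper's: expand $T_t=\sum_{k\ge 0}e^{-t}\frac{t^k}{k!}P^k$, truncate, bound the tail by the Poisson--Chernoff estimate (the paper's Lemma~3.4), and verify $h(t)<0$ on $(0,t_0]$. One minor inconsistency: truncating at $k\le N-1$ gives tail $\mathbb{P}(K\ge N)$, whose Chernoff bound is $e^{-t}(et/N)^{N}$, not $e^{-t}(et/(N+1))^{N+1}$; the latter corresponds to truncation at $k\le N$ (depth $N$), which is what the paper actually does and which yields only $l\le N$. You should align the truncation level with the tail index.

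The substantive gap is your claim that the two stated inequalities ``implicitly encode'' $t_0\le t^\star=N+1-\beta$. They do not. Take $\beta=1.9$, $\alpha=2.5$, so $t_0=(0.8)^{1/1.9}\approx0.889$; then $N=1$ satisfies both hypotheses of the theorem, yet $t^\star=0.1<t_0$ and one computes $h(0.1)\approx 0.3906-0.1+0.1\ln(0.1)\approx 0.06>0$, so the Chernoff bound fails to beat $\alpha t^\beta$ near $t=0.1$. The paper's proof skips this same point: its displayed chain bounds the left-hand side at $t_0$, but the right-hand side $t+\ln(\alpha/2)$ moves in the wrong direction for $t<t_0$, so the chain does not close. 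A clean repair is to add the hypothesis $N\ge t_0+\beta-1$ (equivalently $t_0\le t^\star$), after which your endpoint argument goes through verbatim; alternatively one may replace condition~2 by the stronger requirement $h(t^\star)<0$.
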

To prove Theorem \ref{main: Poisson}, we first observe that by Taylor expansion, and the commuting property of $\Phi = \sum_j \lambda_j\, ad_{u_j}$ and the identity operator, $T_t$ is given by
\[
T_t = \exp(tL)= \exp(t(\Phi - I)) = e^{-t}e^{t\Phi} = \sum_{k\ge 0} \frac{t^k e^{-t}}{k!} \Phi^k, \quad \Phi = \sum_j \lambda_j\, \adm_{u_j}\,.
\]
Then, given our resource set $\U = \{u_1,\ldots,u_m\}$, we can implement a channel truncated in order $N$ with
\[
T_t^{[N]} = \frac{1}{\sum_{k = 0}^N \frac{t^k e^{-t}}{k!}}\sum_{k = 0}^N \frac{t^k e^{-t}}{k!} \Phi^k.
\]
It is worth noting that $T^{[N]}_t$ should not be implemented using LCU. In practice, $T^{[N]}_t$ should be implemented in the expectation sense. Specifically, we implement $\adm_{u_1}\adm_{u_2}\cdots \adm_{u_k}$ with $\{u_i\}$ being i.i.d. samples of the distribution $\mathbb{P}(u=i)=\lambda_i$, and $k$ randomly chosen according to the Poisson distribution $\mathbb{P}(k=l)=\frac{t^l e^{-t}}{l!}\left/\sum_{l = 0}^N \frac{t^l e^{-t}}{l!}\right.$.

\begin{lemma}\label{key lemma: Poisson estimate} For any $N\ge 1$, we have
\begin{equation}
    \left\|T_t-T_t^{[N]}\right\|_{\diamond}\kl 2 e^{N+1 - t - (N+1)\ln \frac{N+1}{t}}, \quad \forall t \in (0,N+1).
\end{equation}
\end{lemma}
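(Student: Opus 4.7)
The plan is to recognize the right-hand side as twice the standard Chernoff tail bound for a Poisson random variable, and to reduce the diamond-norm error to exactly that tail probability. The crucial structural input is that the channel $\Phi=\sum_j \lambda_j\,\adm_{u_j}$ is a convex combination of unitary channels, so $\|\Phi\|_\diamond\le 1$ and therefore $\|\Phi^k\|_\diamond\le 1$ for every $k\ge 0$. Writing $p_k=\tfrac{t^k e^{-t}}{k!}$ and $S_N=\sum_{k=0}^N p_k$, the two channels in question are
\[
T_t=\sum_{k=0}^\infty p_k\,\Phi^k,\qquad T_t^{[N]}=\frac{1}{S_N}\sum_{k=0}^N p_k\,\Phi^k.
\]

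First I would algebraically split the difference into the genuine tail part and a normalization correction:
\[
T_t-T_t^{[N]}=\sum_{k=N+1}^\infty p_k\,\Phi^k\;-\;\frac{1-S_N}{S_N}\sum_{k=0}^N p_k\,\Phi^k.
\]
Applying the triangle inequality in diamond norm and using $\|\Phi^k\|_\diamond\le 1$, the first sum is bounded by $\sum_{k>N}p_k=1-S_N$, while the second is bounded by $\tfrac{1-S_N}{S_N}\cdot S_N=1-S_N$. This already yields the clean intermediate estimate
\[
\|T_t-T_t^{[N]}\|_\diamond\;\le\;2(1-S_N)\;=\;2\,\mathbb{P}(X\ge N+1),
\]
where $X\sim\mathrm{Poisson}(t)$.

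The remaining step is the classical Chernoff/Cramér bound for the Poisson tail. For any $\theta>0$, $\mathbb{P}(X\ge N+1)\le e^{-\theta(N+1)}\,\mathbb{E}[e^{\theta X}]=\exp\!\bigl(-\theta(N+1)+t(e^{\theta}-1)\bigr)$. Since $t<N+1$ by hypothesis, the optimal choice $\theta^\star=\ln\!\tfrac{N+1}{t}$ is strictly positive, and substituting gives
\[
\mathbb{P}(X\ge N+1)\;\le\;\exp\!\Bigl(\,N+1-t-(N+1)\ln\tfrac{N+1}{t}\,\Bigr),
\]
which, combined with the factor of $2$ from the previous step, is exactly the claimed bound.

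There is no real conceptual obstacle here; the entire argument is a telescoping-plus-Chernoff computation once one notices that $\|\Phi^k\|_\diamond\le 1$ collapses the diamond norm to scalar Poisson weights. The only point that requires a small amount of care is ensuring that both contributions in the triangle-inequality step produce a $(1-S_N)$ rather than a $(1-S_N)/S_N$ (which is why I would write the correction term in the symmetric form above), and checking that the optimization in $\theta$ is legitimate precisely in the regime $t\in(0,N+1)$ stated in the lemma.
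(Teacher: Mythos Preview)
Your proof is correct and follows essentially the same approach as the paper: both split $T_t-T_t^{[N]}$ into the genuine Poisson tail plus the renormalization correction, bound each piece by the tail probability $1-S_N$ using $\|\Phi^k\|_\diamond\le 1$, and finish with the Chernoff bound $\mathbb{P}(X\ge N+1)\le e^{-\theta(N+1)+t(e^\theta-1)}$ optimized at $\theta=\ln\frac{N+1}{t}$.
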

\begin{proof}
The key ingredient of the proof is the following estimate:
Define $A(n,t): = e^{-t} \sum_{k=n}^{\infty} \frac{t^k}{k!}$, then
\[
A(n,t) \le e^{n-t - n\ln \frac{n}{t}}
\]
for any $n\ge 1, t>0$ with $n> t$. The estimate is straightforward and we provide a simple proof here. Suppose $N_t$ is a Poisson random variable with parameter $t>0$, then for any $\lambda>0$
\begin{align*}
    A(n,t) =\mb P(N_t \ge n) \le \frac{\mb E \exp(e^{\lambda N_t})}{e^{\lambda n}} = \exp((e^{\lambda} - 1)t - \lambda n).
\end{align*}
Now we choose $\lambda = \ln \frac{n}{t}$ and we get the desired estimate. Using the above estimate, we have
\begin{align*}
& \left\|T_t - T_t^{[N]}\right\|_{\diamond} = \left\|\sum_{k\ge 0} \frac{t^k e^{-t}}{k!} \Phi^k - \frac{1}{\sum_{m = 0}^N \frac{t^m e^{-t}}{m!}}\sum_{k = 0}^N \frac{t^k e^{-t}}{k!} \Phi^k\right\|_{\diamond} \\
& \le \left\| \sum_{k=0}^N \left(e^{-t} - \frac{e^{-t}}{\sum_{m=0}^N \frac{e^{-t} t^m}{m!}}\right)\frac{t^k \Phi^k}{k!}\right\|_{\diamond} + \left\|\sum_{k=N+1}^{\infty} \frac{e^{-t}t^k}{k!} \Phi^k\right\|_{\diamond} \\
& =  \left\| \sum_{k=0}^N - \frac{e^{-t}\sum_{m=N+1}^{\infty} \frac{t^m}{m!}}{\sum_{m=0}^N \frac{t^m}{m!}} \frac{t^k \Phi^k}{k!}\right\|_{\diamond} + \left\|\sum_{k=N+1}^{\infty} \frac{e^{-t}t^k}{k!} \Phi^k\right\|_{\diamond}\\
& \le \sum_{m=N+1}^{\infty} \frac{e^{-t} t^m}{m!} \left\| \sum_{k=0}^N \frac{1}{\sum_{m=0}^N \frac{t^m}{m!}} \frac{t^k \Phi^k}{k!}\right\|_{\diamond} + \left\|\sum_{k=N+1}^{\infty} \frac{e^{-t}t^k}{k!} \Phi^k\right\|_{\diamond} \\
& \le 2 \sum_{k=N+1}^{\infty} \frac{e^{-t}t^k}{k!} \le 2e^{N+1 - t - (N+1)\ln \frac{N+1}{t}}.
\end{align*}
\end{proof}
\noindent Now, we are ready to prove Theorem \ref{main: Poisson}:
\begin{proof}
According to the argument in the beginning of this section, it suffices to consider $t \in [0,(\frac{2}{\alpha})^{1/\beta}]$. We notice that $N$ satisfies 
\begin{align*}
    N+1 - (N+1)\ln(N+1) + (N+1-\beta)\ln t < t + \ln \alpha/2 \le \left(\frac{2}{\alpha}\right)^{1/\beta} + \ln \alpha/2\,.
\end{align*}
Using the above lemma, when $t<N+1$, we have
\begin{align*}
    \left\|T_t - T_t^{[N]}\right\|_{\diamond} \le  2 e^{N+1 - t - (N+1)\ln \frac{N+1}{t}} < \alpha t^{\beta}\,.
\end{align*}
Because $N>(\frac{2}{\alpha})^{1/\beta}$, this  implies that
\[
\sup_{t\in [0,(\frac{2}{\alpha})^{1/\beta}]}l^{\mathcal{U}}_{\alpha t^\beta}\left(T_t\right)\leq N\,.
\]

\end{proof}

\section{Lower estimates for uniform maximal simulation cost of unital quantum dynamics}\label{sec:lower_bound}
In this section, we provide a lower bound estimation for the uniform maximal simulation cost $\mm_{\alpha,\beta}$ defined in Definition \ref{def:M}. Although the uniform maximal simulation cost considers the supremum of the circuit depth, according to the linear growth property of $ l^{\mathcal{U}} $ stated in Lemma \ref{lem:separation_property}, the value of $ \mm_{\alpha,\beta} $ is mainly determined by $ l^\mathcal{U}_{\alpha t^\beta} $ when $ t $ is sufficiently small. The lower bound of $ \mm_{\alpha,\beta} $ essentially informs us about the necessary circuit depth of constructing a $\beta-1$-th order scheme for the simulation of $ T_t $. 

\subsection{Generic lower bound estimates via Lipschitz complexity}


To estimate the lower bound of $\mm_{\alpha,\beta}$ given an accessible unitary set $\mc U$, our main tool is a resource-dependent convex cost function obtained from Lipschitz norms induced by a proper resource set $S$, also known as Lipschitz complexity, introduced in \cite{araiza2023resource}. By comparing the Lipschitz complexity of the quantum semigroup with that of the simulation channel, we derive a lower bound for $l^{\mc U}_{\delta}$.


\subsubsection{Preliminary on Lipschitz complexity}
For a given resource set $S\subseteq \Bb(\mc H)$, we define the induced Lipschitz (semi-)norm $\||\cdot|\|_S$ on $\mb B(\mc H)$ as 
	\begin{equation}
	\||x|\|_S:= \sup_{s \in S}\|[s,x]\|_{\infty},\ x\in \mc \Bb(\mc H),
	\end{equation}
where $[s,x]:= sx-xs$ is the commutator and $\|x\|_{\infty}$ is the Schatten-$\infty$ norm  of $x$. For a quantum channel $\Phi$, the Lipschitz complexity is then defined as
\begin{equation}\label{eqn:C_S_phi}
C_S(\Phi):= \sup_{x\in \mb B(\mc H), \||x|\|_S\le 1} \|\Phi^*(x) - x\|_{\infty}\,,
\end{equation}
where $\Phi^*$ is the dual map of $\Phi$ defined by 
\begin{equation}
    \Tr(\Phi(x) y) = \Tr(x \Phi^*(y)), \forall x,y \in \mb B(\mc H).
\end{equation}
Note that $C_S(\Phi)<+\infty$ only if $\Phi^*$ restricted on $S'$ is an identity map. Here, $S'$ is the centralizer of $S$, defined by 
\begin{equation}\label{centralizer}
    S' = \{x\in \mb B(\mc H): xs = sx, \forall s \in S\}.
\end{equation}

\begin{rmk}
   We illustrate how the commutator $[s,x]$ helps us build the lower bound. Suppose $U = u_m \cdots u_1$. Denote $\Phi(\rho)= \adm_U(\rho) = U\rho U^{*}$, then for any $x \in \mb B(\mc H)$
\begin{align*}
    \|\Phi^*(x) - x\|_{\infty} & = \|u_1^* \cdots u_m^* x u_m \cdots u_1 - x\|_{\infty} = \|x u_m \cdots u_1 - u_m \cdots u_1 x \|_{\infty} \\
    & = \|x u_m \cdots u_1 - u_m x u_{m-1} \cdots u_1 + u_m x u_{m-1} \cdots u_1 - u_m \cdots u_1 x \|_{\infty} \\
    & \le \|[u_m,x] u_{m-1}\cdots u_1\|_{\infty} + \|u_m [u_{m-1}\cdots u_1,x]\|_{\infty} \\
    & = \|[u_m,x]\|_{\infty} + \|[u_{m-1}\cdots u_1,x]\|_{\infty} \\
    & \le \sum_{j=1}^m \|[u_j,x]\|_{\infty} \le m \sup_{i} \|[u_i,x]\|_{\infty}
\end{align*}
where we used the property that $\|x\|_{\infty}$ remains unchanged if we multiply $x$ by a unitary. The second-last inequality follows from induction. Therefore, if we define $S = \{u_1,\cdots, u_m\}$, the minimal length of a unitary using elementary gate set $S$ is always lower bounded by $C_S(\adm_U)$. 
\end{rmk}

The Lipschitz complexity can also be written compactly as $$C_S(\Phi)=\|\Phi^* - id : (\Bb(\mc H), \||\cdot|\|_S) \to \Bb(\mc H)\|_{\rm op}.$$
Recall that for a linear map $T_1: (\mc X, \|\cdot\|_{\mc X}) \to (\mc Y, \|\cdot\|_{\mc Y})$ where  $(\mc X, \|\cdot\|_{\mc X}), (\mc Y, \|\cdot\|_{\mc Y})$ are normed spaces, the induced operator norm of the linear map $T_1$ is defined by 
\begin{equation}
    \|T_1: (\mc X, \|\cdot\|_{\mc X}) \to (\mc Y, \|\cdot\|_{\mc Y}) \|_{\rm op} = \sup_{x \in \mc X: \|x\|_{\mc X} \le 1} \|T_1(x)\|_{\mc Y}
\end{equation}
A useful property of the induced operator norm of linear maps is submultiplicativity:
If $T_1: (\mc X, \|\cdot\|_{\mc X}) \to (\mc Y, \|\cdot\|_{\mc Y}), T_2: (\mc Y, \|\cdot\|_{\mc Y})\to (\mc Z, \|\cdot\|_{\mc Z})$ are two linear maps definded on the corresponding normed spaces, then 
\begin{align*}
    & \|T_1 \circ T_2: (\mc X, \|\cdot\|_{\mc X}) \to (\mc Z, \|\cdot\|_{\mc Z})\|_{\rm op} \\
    & \le \|T_1: (\mc X, \|\cdot\|_{\mc X}) \to (\mc Y, \|\cdot\|_{\mc Y}) \|_{\rm op} \cdot \|T_2: (\mc Y, \|\cdot\|_{\mc Y})\to (\mc Z, \|\cdot\|_{\mc Z})\|_{\rm op}.
\end{align*}

To be consistent with the diamond norm of quantum channel, we extend the definition in \eqref{eqn:C_S_phi} to the complete version of the Lipschitz complexity:
\begin{equation}\label{eqn:C_cb_phi}
C_S^{cb}(\Phi):=\sup_{n\ge 1}\left\|id_n\otimes (\Phi^* - id):(\mb M_n(\mb B(\mc H)), \||\cdot|\|^n_S) \to \mb M_n(\mb B(\mc H))\right\|_{\rm op},
\end{equation}
where $\mb M_n(\mb B(\mc H))\cong \mb M_n \otimes \mb B(\mc H)$ and the Lipschitz norm of the $n$-th amplification is defined by
\begin{equation}
    \||f|\|^n_S = \sup_{s\in S}\|[I_n \otimes s, f]\|_{\infty},\quad f\in \mb M_n(\mb B(\mc H)).
\end{equation}
It is straightforward to see that $C_S(\Phi)\leq C^{cb}_S(\Phi)$. We also  compactly rewrite $C_S^{cb}(\Phi)$ as 
\begin{align*}
   C_S^{cb}(\Phi)=\|\Phi^* - id : (\Bb(\mc H), \||\cdot|\|_S) \to \Bb(\mc H)\|_{cb}.
\end{align*}

Before we present more properties of $C_S$, we calculate two simple examples to illustrate how to estimate $C_S$ in practice. 

\begin{exam}[Replacer channel, single qubit] \label{example: replacer}
    Suppose $\mc H = \mb C^2$ and the resource set is given by \begin{equation}
        S = \left\{X= \begin{pmatrix}
            0 & 1 \\
            1 & 0
        \end{pmatrix},\ Y= \begin{pmatrix}
            0 & i \\
            -i & 0
        \end{pmatrix}\right\}
    \end{equation}
    We consider the replacer channel $\Phi_1$ defined by 
    \begin{equation}
        \Phi_1(\rho) = \Tr(\rho)\ketbra{0}.
    \end{equation}
    Then,
    \begin{equation}
        1 \le C_S(\Phi_1)\le C_S^{cb}(\Phi_1) \le 2.
    \end{equation}
\end{exam}
\begin{proof}

For the lower bound, we can choose $Z = \begin{pmatrix}
    1 & 0 \\
    0 & -1
\end{pmatrix}$ as the test operator. Notice the Kraus representation
\begin{equation}
    \Phi_1^*(\rho) = K_0^* \rho K_0 + K_1^* \rho K_1,\quad K_0 = \ketbra{0},\ K_1 = \ketbra{0}{1}\,,
\end{equation}
and
\begin{equation}
        \|\Phi_1^*(Z) - Z\|_{\infty} = 2,\quad \||Z|\|_S = \sup\{\|[X,Z]\|_{\infty}, \|[Y,Z]\|_{\infty}\} = 2. 
\end{equation}
Thus, the lower bound is given by 
\begin{equation}
     C_S(\Phi_1) \ge \frac{\|\Phi_1^*(Z) - Z\|_{\infty}}{\||Z|\|_S} = 1.
\end{equation}
For the upper bound, we apply the Kraus representation of $\Phi_1^*$ again and note that 
\begin{equation}
    K_0 = \frac{iXY + I}{2},\quad K_1 = \frac{X-iY}{2}\,.
\end{equation}
For any $x \in \mb B(\mc H)$,
\begin{align*}
    \|\Phi_1^*(x) - x\|_{\infty} &= \|K_0^* x K_0 + K_1^* x K_1 - (K_0^*K_0 + K_1^* K_1)x\|_{\infty} \\
    & = \|K_1^*[K_1,x] + K_0^*[K_0,x]\|_{\infty} \\
    & \le \frac{1}{2} \left(\|[X,x]\|_{\infty} + \|[Y,x]\|_{\infty} + \|[XY,x]\|_{\infty}\right).
\end{align*}
Finally, using the commutator identity \begin{align*}
    [XY,x] = X[Y,x] + [X,x]Y, 
\end{align*}
we have  
\begin{align*}
    \|\Phi_1^*(x) - x\|_{\infty} \le 2 \sup\{\|[X,x]\|_{\infty}, \|[Y,x]\|_{\infty}\}\le 2\||x|\|_S\,.
\end{align*}
This implies that $C_S(\Phi_1)\le 2$. Note that the above argument remains the same if we replace $X,Y,K_0,K_1$ by $X\otimes I_n, Y\otimes I_n, K_0\otimes I_n, K_1\otimes I_n$ for any $n\ge 1$, thus $C^{cb}_S(\Phi_1)\le 2$.

\end{proof}

Although simple, the above example includes the necessary techniques to obtain the upper and lower bound of $C_S$. For the lower bound, we need to find a ``certificate'' operator so that $C_S$ is large. For the upper bound, we need to reconstruct the quantum channel using linear combinations and products of the operators in the resource set. 
In what follows, we provide another example exhibiting the above ideas. This example will be used later to provide a sharp lower bound for the simulation cost of the $n$-qubit Pauli noise model.

\begin{exam}[Replacer channel, $n$ qubits]\label{example: ergodic}
    Suppose $\mc H = (\mb C^2)^{\otimes n}$, denote $X_j$ as the Pauli-$X$ operator $X = \begin{pmatrix}
    0 & 1 \\
    1 & 0
\end{pmatrix}$ acting on the $j$-th system, i.e., 
\begin{align*}
    X_j = I_2 \otimes \cdots I_2 \otimes \underbrace{X}_\text{j-th system} \otimes I_2 \cdots \otimes I_2.
\end{align*}
We define $Y_j$ as the Pauli-$Y$ operator $Y = \begin{pmatrix}
    0 & i \\
    -i & 0
\end{pmatrix}$ acting on the $j$-th system in a similar manner. The resource set is given by 
\begin{equation}
    S = \{X_j,Y_j: 1\le j \le n\}.
\end{equation}
We consider the maximally mixed replacer channel defined by 
\begin{equation}
    \Phi_2(\rho_n) = \Tr(\rho_n) \frac{I_{2^n}}{2^n},\quad \rho_n \in \mb B(\mc H).
\end{equation}
Then we can show that 
\begin{equation}
    \frac{n}{2} \le C_S(\Phi_2) \le C^{cb}_S(\Phi_2) \le n.
\end{equation}
\end{exam}
\begin{proof}
   Following a similar approach as in the proof of Example~\ref{example: replacer}, we first consider the ``certificate" operator $\sum_{j=1}^n X_j$ and note that 
\begin{align*}
\left\|\left| \sum_{j=1}^n X_j\right|\right\|_S = \sup_k \left\|\left[Y_k, \sum_{j=1}^n X_j\right]\right\|_\infty = \left\|[Y,X]\right\|_\infty = 2,\quad \left\|\sum_{j=1}^n X_j\right\|_{\infty} = n,\quad \Phi_2\left(\sum_{j=1}^n X_j\right)=0\,.
\end{align*}
This provides a lower bound of $C_S(\Phi_2)$:
\begin{align*}
    C_S(\Phi_2) \ge \frac{ \left\|\sum_{j=1}^n X_j\right\|_\infty}{\left\|\left| \sum_{j=1}^n X_j\right|\right\|_S} = \frac{n}{2}.
\end{align*}

For the upper bound, we can use the fact that the $C_S$ is comparable to the Wasserstein complexity introduced in \cite{li2022wasserstein} (see \cite[Corollary 3.15]{araiza2023resource} for a proof). Here, we include a simple proof for completeness. For any operator $\rho$ on $\mb C^2$, 
\begin{align*}
    E_{\tau}(\rho):= \Tr(\rho) \frac{I_2}{2} = \frac{1}{4}(\sigma_0 \rho \sigma_0 + \sigma_1 \rho  \sigma_1 + \sigma_2 \rho  \sigma_2 +\sigma_3 \rho \sigma_3),
\end{align*}
where we take $\sigma_0 = I_2, \sigma_1 = X,\sigma_2 = Y,\sigma_3 = Z$. In fact, the above holds for any orthonormal basis, see \cite[Section 4]{junge2022stability} for the general result. 

Then, we have 
\begin{align*}
    \Phi_2 = E_{\tau}^{\otimes n}.
\end{align*}
Using sub-additivity under concatenation, we have $C_S(\Phi_2) \le n C_{S_0}(E_{\tau}) \le n$ where $S_0= \{X,Y\}$. The calculation $C^{cb}_{S_0}(E_{\tau}) \le 1$ follows from the same argument as in the above example.
\end{proof}
\noindent \textit{Remark}. Note that using submultiplicativity of the norm, one can show that for any $n$-qubit channel $\Phi$, we have $C^{cb}_S(\Phi) \le C^{cb}_S(\Phi_2) \|\Phi- \Phi_2\|_{\diamond} \le 2n$, which is a simpler proof of an upper bound without using the property of Wasserstein distance of order 1, introduced in \cite{de2021quantum}.



Finally, we summarize several useful properties of (complete version) Lipschitz complexities that we will use later:
\begin{lemma} Given any resource set $S\subset\Bb(\Ha)$, $C_S$ and $C^{cb}_S$ as defined in \eqref{eqn:C_S_phi},\eqref{eqn:C_cb_phi} satisfy:
\begin{enumerate}
	\item $C^{cb}_S(id)=C_S(id)=0$. 
	\item Subadditivity under concatenation: 
 \[
 C^{cb}_S(\Phi \Psi) \le C^{cb}_S(\Phi)+C^{cb}_S(\Psi),\quad C_S(\Phi \Psi) \le C_S(\Phi)+C_S(\Psi)
 \]
 for any quantum channels $\Phi,\Psi$.
	\item Convexity: For any probability distribution $\{p_i\}_{i\in I}$ and any channels $\{\Phi_i\}_{i \in I}$, we have 
\begin{equation}
C^{cb}_S\left(\sum_{i\in I} p_i\Phi_i\right) \le \sum_{i\in I} p_iC^{cb}_S\left(\Phi_i\right),\quad C_S\left(\sum_{i\in I} p_i\Phi_i\right) \le \sum_{i\in I} p_iC_S\left(\Phi_i\right)
\end{equation} 
\end{enumerate}
\end{lemma}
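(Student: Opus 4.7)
The plan is to verify each of the three properties directly from the dual-map formulation
\[
C_S(\Phi) = \|\Phi^* - \mathrm{id}:(\Bb(\mc H),\||\cdot|\|_S)\to \Bb(\mc H)\|_{\rm op},
\]
and its complete-boundedness analogue. The unifying observation is that for any quantum channel $\Phi$ (trace-preserving and completely positive) its dual $\Phi^*$ is unital and completely positive, so by the Russo-Dye / Kadison inequality $\Phi^*$ is a complete contraction in the operator norm on $\Bb(\mc H)$; this is the only nontrivial input.

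First, property (1) is immediate since $\mathrm{id}^*=\mathrm{id}$ and $(\mathrm{id}_n\otimes\mathrm{id})^*=\mathrm{id}$, so the difference $\Phi^*(x)-x$ vanishes identically and the supremum defining $C_S(\mathrm{id})$ and $C_S^{cb}(\mathrm{id})$ is $0$. Next, for subadditivity in (2) I would use the telescoping identity
\[
(\Phi\Psi)^*(x) - x \;=\; \Psi^*(\Phi^*(x)-x) \;+\; (\Psi^*(x)-x),
\]
apply the triangle inequality in $\|\cdot\|_\infty$, and bound $\|\Psi^*(\Phi^*(x)-x)\|_\infty\le \|\Phi^*(x)-x\|_\infty$ using that $\Psi^*$ is a unital CP contraction. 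Taking the supremum over $x$ with $\||x|\|_S\le 1$ yields $C_S(\Phi\Psi)\le C_S(\Phi)+C_S(\Psi)$. For the cb version I would apply exactly the same telescoping to $\mathrm{id}_n\otimes(\Phi\Psi)^* = (\mathrm{id}_n\otimes\Psi^*)(\mathrm{id}_n\otimes\Phi^*)$, using that $\mathrm{id}_n\otimes\Psi^*$ is still unital CP and hence a contraction on $\mathbb{M}_n(\Bb(\mc H))$, and then pass to the supremum over $n$.

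Finally, for (3) the dual of a convex combination is the corresponding convex combination of duals, so
\[
\Bigl(\textstyle\sum_i p_i\Phi_i\Bigr)^*(x)-x \;=\; \sum_i p_i\bigl(\Phi_i^*(x)-x\bigr),
\]
and the triangle inequality gives $\|(\sum_i p_i\Phi_i)^*(x)-x\|_\infty\le \sum_i p_i\|\Phi_i^*(x)-x\|_\infty$; taking the supremum over $x$ with $\||x|\|_S\le 1$ on both sides yields convexity, and the same argument on every amplification gives the cb version.

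I do not anticipate any real obstacle: the only bookkeeping subtlety is that $\||\cdot|\|_S$ is only a seminorm, vanishing on the centralizer $S'$, so strictly speaking $C_S(\Phi)$ is finite only when $\Phi^*|_{S'}=\mathrm{id}$; but this condition is preserved under composition and convex combination, so the inequalities in (2) and (3) remain meaningful (both sides agree at $+\infty$ when the finiteness condition fails). The whole proof is therefore short enough to record in a few lines once the contraction property of unital CP maps is invoked.
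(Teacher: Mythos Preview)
Your proposal is correct and matches the paper's approach. The paper does not give a self-contained proof here but refers to \cite[Lemma 2.2]{araiza2023resource} and states that the idea is precisely ``to apply subadditivity and submultiplicativity properties of the induced norm of the quantum channels''; your telescoping identity together with the contractivity of $\Psi^*$ (unital CP) is exactly that argument, and your treatment of the seminorm/finiteness caveat is a sensible addition.
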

These results can be found in~\cite[Lemma 2.2]{araiza2023resource} and the proof idea is to apply subadditivity and submultiplicativity properties of the induced norm of the quantum channels. 

\subsubsection{Generic lower bound approach: Lipschitz continuity.}
The Lipschitz complexities defined in \eqref{eqn:C_S_phi} and \eqref{eqn:C_cb_phi} are crucial quantities that we will use to derive the lower bound of $ l_{\delta}(T_t) $. We first present the sketch of the roadmap here. Let $\Phi = T_t$ be the exact quantum Markov semigroup and
\[
\Psi = \int_{\mc U^m} \operatorname{ad}_{u_1} \cdots \operatorname{ad}_{u_m} \, d\mu^m(u_1, \ldots, u_m)
\]
be a simulation channel for $\Phi$ with $\delta$-diamond error, meaning $\|\Phi - \Psi\|_{\diamond} \leq \delta$. If we choose a suitable resource set $S$ and $C^{cb}_S$ is a Lipschitz function in diamond norm with the Lipschitz constant $\kappa^{cb}(S)$, we obtain that
\begin{align}\label{eqn:closeness}
|&C^{cb}_S(\Phi) - C^{cb}_S(\Psi)| \le \kappa^{cb}(S) \delta \quad \\ \nonumber \Rightarrow \quad &C^{cb}_S(\Phi) \leq C^{cb}_S(\Psi) + \kappa^{cb}(S) \delta \leq m \sup_{u\in\mathcal{U}} C^{cb}_S(\operatorname{ad}_{u}) + \kappa^{cb}(S) \delta\,,
\end{align}
where we use the subadditivity and convexity of $C_S$ in the last inequality. If we further assume a maximum length per unitary $\sup_{u\in\mathcal{U}} C^{cb}_S(\operatorname{ad}_{u}) \leq D$, where $D>0$ is a uniform constant, \eqref{eqn:closeness} implies
\[
m \geq \frac{C^{cb}_S(\Phi) - \kappa^{cb}(S) \delta}{D}\,.
\]
This provides a lower bound for the circuit depth if we can further provide an estimate of $C^{cb}_S(\Phi)= C^{cb}_S(T_t)$. The above argument and lower bound rely on two key aspects: 
\begin{enumerate}
\item Demonstrating the ``Lipschitz" property of $C^{cb}_S$ and calculating the corresponding Lipschitz constant; 
\item  Estimating $C^{cb}_S(\Phi)= C^{cb}_S(T_t)$. 
\end{enumerate}
Both aspects require a proper choice of the resource set $S$ and additional assumptions to ensure the compatibility of the resource set $S$ with the admissible unitary set $\mathcal{U}$. Specifically, we will show that: 
\begin{enumerate}
\item The Lipschitz constant of $C^{cb}_S$ is determined by the \emph{conditional expectation} onto $S'$. 
\item The estimation of $C^{cb}_S(\Phi)= C^{cb}_S(T_t)$ is determined by the $S$-induced mixing time, which is defined in Definition \ref{def:mixing_time}, and the Lipschitz constant of $C^{cb}_S$.
\end{enumerate}
\begin{figure}[h!]
    \centering
    \includegraphics[width=\textwidth]{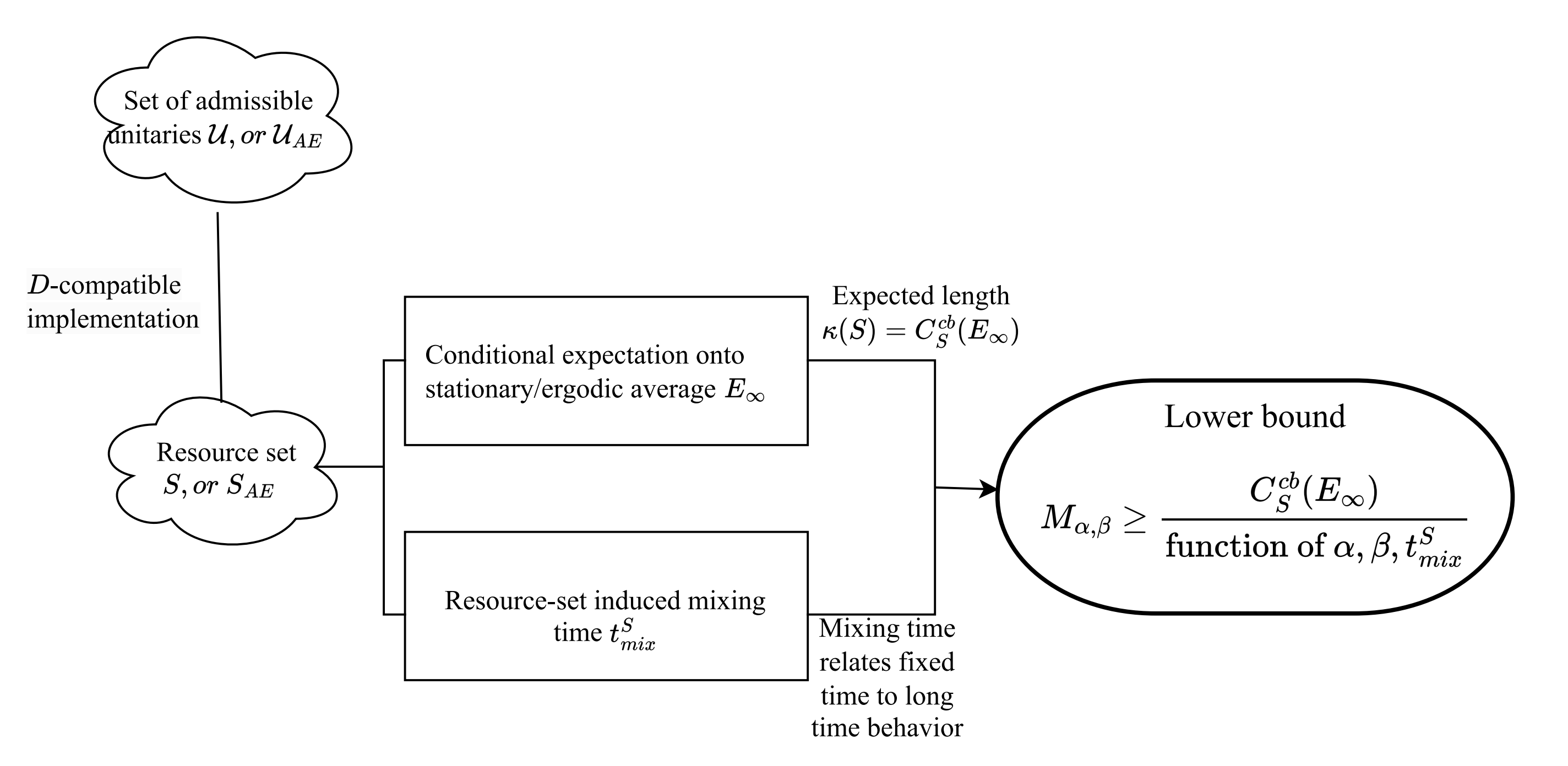}
    \caption{Overview of determining lower bounds of noise models using Lipschitz complexity. We are given simulation parameters $\alpha,\beta>0$, a resource set of bounded operators and a set of admissible unitary gates $\mathcal{U}$ so that every gate in the set has at most complexity $D$ under the Lipschitz complexity induced by the resource set ($D$-compatible). Then, the lower bound can be determined as a function of the resource-set induced mixing time and the expected length $\kappa(S)$, which both rely on knowledge about the conditional expectation on fixed-point densities $E_\infty$ of the open system channel.   }
    \label{fig:lower-bound}
\end{figure}

\begin{rmk} In the roadmap above, as well as in the subsequent calculations, we mainly use $ C^{cb}_S $ to derive the lower bound. However, the same argument also applies to $ C_S $, meaning that $ C^{cb}_S $ can always be replaced with $ C_S $ in these contexts. The only difference is that $ C^{cb}_S $ is always larger than $ C_S $. This means that the lower bound derived using $ C^{cb}_S $ could potentially be sharper than the one derived using $ C_S $.
\end{rmk}

We first briefly review the notion of (non-commutative) conditional expectation. 
Recall that for $C^*$-algebra $\mc N \subset \mb B(\mc H)$, the conditional expectation $E_{\mc N}$ of $\mb B(\mc H)$ onto $\mc N$ is a positive linear map from $\mb B(\mc H)$ to $\mc N$, such that $E_{\mc N}(I_{\mc H}) = I_{\mc H}$ where $I_{\mc H}$ is the identity operator on $\mc H$, and for all $X \in \mb B(\mc H),$ any two elements $S_1,S_2\in \mc N$, we have 
\begin{equation}\label{eqn:conditional-expectation-sandwich}
E_{\mc N}(S_1XS_2) = S_1E_{\mc N}(X)S_2.
\end{equation}
According to \cite[Proposition 1]{kadison2004non}, any conditional expectation is contractive and completely positive. Thus $E_{\mc N}$ is a unital completely positive map and its dual $E_{\mc N}^*$ is a quantum channel. 

Note that a conditional expectation $E_{\mc N}$ might not be unique. Recall that a finite-dimensional $C^*$-algebra $\mc N$ can always be expressed (up to isomorphism) as a direct sum of matrix algebras with multiplicity, i.e.,
\begin{equation}
    \mc N = \bigoplus_i \mb B(\mc H_i) \otimes 1_{\mc K_i},\quad \mc H = \bigoplus_i \mc H_i \otimes \mc K_i
\end{equation}
Denote $P_i$ as the projection onto $\mc H_i\otimes \mc K_i$. The conditional expecation $E_{\mc N}$ can be expressed as
\begin{equation}
    E_{\mc N}(X) = \bigoplus_i \Tr_{\mc K_i}(P_iXP_i (1_{\mc H_i}\otimes \tau_i)) \otimes 1_{\mc K_i}\,,
\end{equation}
where the density operators $\{\tau_i\}$ on the system $\mc K_i$ can be chosen arbitrarily. The dual map of $E_{\mc N}$ can be calculated as
\begin{equation}
    E_{\mc N}^*(\rho) = \bigoplus_i \Tr_{\mc K_i}(P_i \rho P_i ) \otimes \tau_i.
\end{equation}
Different choices of states $\{\tau_i\}$ acting on $\mc K_i$ could possibly give us different conditional expectation. However, if there exists a state $\rho$ acting on $\mc H$ such that 
\begin{equation}\label{state preserving}
    \Tr(\rho E_{\mc N}(s)) = \Tr(\rho s), \forall s\in \mc N
\end{equation}
then $E_{\mc N}$ is uniquely determined by $\rho$. For instance, any state $\rho$ with the form $\bigoplus_i p_i \sigma_i \otimes \tau_i$ satisfies \eqref{state preserving}. From this perspective, if $E_{\mc N}$ satisfies \eqref{state preserving}, then we denote it as $E_{\mc N, \rho}$ to emphasize the dependence on the state. Without the dependence on a certain state, we implicitly assume that it is trace-preserving, i.e., the state $\rho$ is given by maximally mixed state and we denote
\begin{equation}
    E_{\mc N}:= E_{\mc N, \Tr}
\end{equation}
In this case $E_{\mc N}$ is self-adjoint, i.e., $E_{\mc N} = E_{\mc N}^*$.
\begin{rmk}
    For readers familiar with classical probability theory, recall that on a probability space $(\Omega,\mc F, \mb P)$, for a $\sigma$-subalgebra $\mc F' \subseteq \mc F$, to define the conditional expectation $\mb E [X \big| \mc F']$ for $X$ that is measurable w.r.t. $\mc F$, one has to specify the subalgebra (corresponding to the $C^*$-subalgebra) and the probability measure (corresponding to the invariant state).
\end{rmk}
    
The techniques we use in this paper also require some additional technical assumptions on the resource set $S$ and the associated set of admissible unitaries $\mc U$. To this end, we introduce the notion of compatibility of the accessible unitaries $\mc U$ and the resource set $S$.
\begin{defi}\label{compatible: U and S}
    We say that the accessible unitaries $\mc U$ and the resource set $S$ are $D$-compatible with $D>0$ if for any $u\in \mc U$,
    \begin{equation}
        C_S^{cb}(\adm_u) \le D.
    \end{equation}
\end{defi}
\begin{tcolorbox}
\textbf{Assumption (A)}: $S$ is *-closed, i.e., if $s\in S$, then $s^* \in S$. \\
\textbf{Assumption (B)}: There exists a conditional expectation $E_{\rm fix}$ such that for any $u\in \mc U$, we have $E_{\rm fix} \circ \adm_u = E_{\rm fix}$.\\ 
\textbf{Assumption (C)}: The accessible unitaries $\mc U$ and the resource set $S$ are $D$-compatible according to Definition~\ref{compatible: U and S} for some $D>0$.
\end{tcolorbox}
Let us remark that \textbf{Assumption (A)} ensures that $S'$ is a *-algebra. In a finite-dimensional matrix algebra setting, $S'$ is automatically a $C^*$-algebra, since the matrices satisfy $C^*$-identity. In this case, if $E_{S'}\circ \adm_u = E_{\rm fix}$, we often set $E_{\rm fix}=E_{S'}$  to satisfy \textbf{Assumption (B)}. \textbf{Assumption (C)} ensures that one can use the intuitive approach sketched in \eqref{eqn:closeness}. An illustration of the above assumptions is given as follows:
\begin{prop} \label{Assumption: Trotter example}
    Suppose $S$ is given by a set of Hamiltonians: $S = \{H_1,\cdots,H_m \,:\, H_j = H_j^*\}$ and we take the set of admissible unitaries to be $$\mc U = \{\exp(itH_j):|t|\le \tau, j=1,\cdots, m\}.$$
    Then $S$ and $\mc U$ satisfy \textbf{Assumptions (A), (B),} and \textbf{(C)}.
\end{prop}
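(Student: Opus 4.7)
The plan is to verify the three assumptions in turn; none will require deep machinery, only care with how the commutant interacts with the one-parameter subgroups generated by the $H_j$.

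For Assumption (A), I would simply note that each $H_j = H_j^*$, so $S^* = S$ is immediate.

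For Assumption (B), I would take $E_{\mathrm{fix}}$ to be the trace-preserving conditional expectation $E_{S'}$ onto the commutant $S'$, which exists and is unique in the finite-dimensional matrix algebra setting (and is a genuine $C^*$-algebra thanks to Assumption (A)). The key step is to verify $E_{S'} \circ \mathrm{ad}_u = E_{S'}$ for every $u \in \mathcal{U}$. I would use the trace-duality characterization: $E_{S'}$ is the unique conditional expectation onto $S'$ satisfying $\mathrm{Tr}(E_{S'}(X) s) = \mathrm{Tr}(X s)$ for all $s \in S'$. The crucial observation is that $u = \exp(itH_j)$ lies in $S''$, hence commutes with every $s \in S'$. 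Consequently, for any $X$ and $s \in S'$,
\[
\mathrm{Tr}\bigl(E_{S'}(uXu^*)\,s\bigr) = \mathrm{Tr}(uXu^* s) = \mathrm{Tr}(X\,u^*su) = \mathrm{Tr}(Xs) = \mathrm{Tr}(E_{S'}(X)\,s),
\]
and since $E_{S'}(uXu^*) - E_{S'}(X) \in S'$, the uniqueness part of the duality gives equality.

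For Assumption (C), I would give an explicit quantitative bound by writing $u = e^{itH_j}$ with $|t|\le \tau$ and using the fundamental theorem of calculus for the one-parameter unitary group:
\[
u^* x u - x \;=\; \int_0^t \frac{d}{ds}\bigl(e^{-isH_j} x\, e^{isH_j}\bigr)\,ds \;=\; \int_0^t e^{-isH_j}\bigl(-i[H_j,x]\bigr) e^{isH_j}\,ds.
\]
Taking operator norms and using unitary invariance yields $\|\mathrm{ad}_u^*(x) - x\|_\infty \le |t|\,\|[H_j,x]\|_\infty \le \tau\,\||x|\|_S$. Replacing $H_j$ and $x$ by $I_n \otimes H_j$ and $X \in \mathbb{M}_n(\mathbb{B}(\mathcal{H}))$ in the same identity gives the completely bounded version $C_S^{cb}(\mathrm{ad}_u) \le \tau$, establishing $\tau$-compatibility.

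The only subtlety—not really an obstacle—is the one in step (B): one has to be careful that $u$ itself need not lie in $S'$, but it does lie in $S''$, and that is exactly what makes $u^*su = s$ for $s\in S'$ and powers the trace-duality argument. Once this point is made, the three assumptions fall out essentially by inspection.
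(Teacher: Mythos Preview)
Your proposal is correct and follows essentially the same route as the paper: verify (A) trivially, take $E_{\rm fix}=E_{S'}$ for (B), and use the integral identity $u^*xu-x=\int_0^t e^{-isH_j}(-i[H_j,x])e^{isH_j}\,ds$ to obtain $C_S^{cb}(\mathrm{ad}_u)\le\tau$ for (C). The only variation is in (B): the paper pulls $u$ through $E_{S'}$ via the bimodule identity \eqref{eqn:conditional-expectation-sandwich} (which, as you rightly flag, strictly applies to elements of $S'$ rather than $S''$), whereas your trace-duality argument uses $u\in S''$ directly and is arguably the cleaner justification.
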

\begin{proof}
    \textbf{Assumption (A)} is obvious since $S$ consists of self-adjoint operators. For \textbf{Assumption (B)}, we can choose $E_{\rm fix} = E_{S'}$. Then for any $u = \exp(itH_j) \in \mc U$ and $x\in \mb B(\mc H)$, one has 
    \begin{align*}
       E_{S'}(\adm_u(x)) = E_{S'}(\exp(itH_j) x \exp(-itH_j)) = \exp(itH_j) E_{S'}( x )\exp(-itH_j) = E_{S'}(x).
    \end{align*}
    The first equality follows from the definition. The second equality follows from the fact that $\exp(itH_j)$ commutes with any element in $S'$, and we apply \eqref{eqn:conditional-expectation-sandwich}. For the last equality, we use again the fact that $\exp(itH_j)$ commutes with any element in $S'$, thus we have \[\exp(itH_j) E_{S'}( x )\exp(-itH_j) = E_{S'}(x).\] 
    For \textbf{Assumption (C)}, we use the fact that \begin{align*}
        \frac{d}{dt}\big(\exp(-itH_j)x\exp(itH_j)\big) = -i \exp(-itH_j)[H_j,x]\exp(itH_j),
    \end{align*}
    so we have
    \begin{align*}
        & \|\adm_u^*(x) - x\| = \|\exp(-itH_j)x\exp(itH_j)) - x\|_{\infty} \\
        & = \left\|-i\int_0^t \exp(-isH_j)[H_j,x]\exp(isH_j))ds \right\|_{\infty} \le |t| \|[H_j,x]\|_{\infty} \le \tau \sup_j \|[H_j,x]\|_{\infty}.
    \end{align*}
    Now we can use the definition of the Lipschitz complexity $C_S(\cdot)$ from \eqref{eqn:C_S_phi}, and obtain $C_S(\adm_{\exp(itH_j)}) \le \tau=:D$. Using the same argument and replacing the infinity norm by the completely bounded norm, we get the same result  $C_S^{cb}(\adm_{\exp(itH_j)}) \le D$.
\end{proof}
We are now ready to establish the connection between $C^{cb}_S(\cdot)$ and $l_{\delta}^{\mc U}$, and state the rigorous version of \eqref{eqn:closeness}. This is summarized in the following lemma:
\begin{lemma}\label{dd} Let $\Phi$ be a channel such that $\Phi^* \circ E_{\rm fix} = E_{\rm fix}$ for some conditional expectation $E_{\rm fix}$. Under \textbf{Assumptions (B),(C)}, we have that for any $\delta>0$,
\[ C_S^{cb}(\Phi)\kl D l^{\mc U}_{\delta}(\Phi)+ C_S^{cb}(E_{\rm fix}^*) \delta \pl.\]
\end{lemma}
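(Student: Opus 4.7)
The plan is to pick an optimal random-unitary approximation $\Psi$ to $\Phi$, bound its Lipschitz complexity using the structural properties of $C_S^{cb}$ laid out in the previous lemma, and then transfer the bound to $\Phi$ via a Lipschitz-type estimate that exploits the common fixed-point structure of $\Phi$ and $\Psi$.

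First I would set $m=l^{\mc U}_{\delta}(\Phi)$ and pick a near-optimal random unitary channel
\[
\Psi=\int_{\mc U^m}\adm_{u_1}\cdots \adm_{u_m}\,d\mu^m(u_1,\ldots,u_m),\qquad \|\Phi-\Psi\|_{\diamond}<\delta.
\]
Applying convexity of $C_S^{cb}$ in its argument, then subadditivity under composition, and finally \textbf{Assumption (C)} in the form $C_S^{cb}(\adm_u)\le D$ for every $u\in \mc U$, yields
\[
C_S^{cb}(\Psi)\le \int_{\mc U^m}\sum_{k=1}^{m}C_S^{cb}(\adm_{u_k})\,d\mu^m \le mD = D\,l^{\mc U}_{\delta}(\Phi).
\]

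The heart of the argument is the Lipschitz estimate $C_S^{cb}(\Phi)\le C_S^{cb}(\Psi)+C_S^{cb}(E_{\rm fix}^*)\delta$. The key observation is that both $\Phi^*$ and $\Psi^*$ act trivially after $E_{\rm fix}$: the hypothesis gives $\Phi^*\circ E_{\rm fix}=E_{\rm fix}$ directly; iterating the identity in \textbf{Assumption (B)} across the composition $\adm_{u_1}\cdots \adm_{u_m}$ inside the integral (and peeling off one factor at a time) produces $E_{\rm fix}\circ \Psi=E_{\rm fix}$, which in the natural tracial setting where $E_{\rm fix}=E_{\rm fix}^*$ is equivalent to $\Psi^*\circ E_{\rm fix}=E_{\rm fix}$. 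Consequently $(\Phi^*-\Psi^*)\circ E_{\rm fix}=0$, and we may factor
\[
\Phi^*-\Psi^* = (\Phi^*-\Psi^*)\circ(\mathrm{id}-E_{\rm fix}).
\]
Submultiplicativity of the cb operator norm from $(\Bb(\mc H),\||\cdot|\|_S)$ into $\Bb(\mc H)$ then gives
\[
\|\Phi^*-\Psi^*\|_{(\||\cdot|\|_S)\to\|\cdot\|_{\infty},cb}\le \|\Phi^*-\Psi^*\|_{\infty\to\infty,cb}\cdot \|\mathrm{id}-E_{\rm fix}\|_{(\||\cdot|\|_S)\to\|\cdot\|_{\infty},cb}.
\]
By cb duality between the diamond norm and the $\infty\to\infty$ cb norm, the first factor is exactly $\|\Phi-\Psi\|_{\diamond}<\delta$, while the second factor is, by definition, $C_S^{cb}(E_{\rm fix}^*)$. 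Combining this with the triangle inequality $\|\Phi^*-\mathrm{id}\|\le \|\Psi^*-\mathrm{id}\|+\|\Phi^*-\Psi^*\|$ in the cb operator-norm topology yields
\[
C_S^{cb}(\Phi)\le C_S^{cb}(\Psi)+C_S^{cb}(E_{\rm fix}^*)\,\delta \le D\,l^{\mc U}_{\delta}(\Phi)+C_S^{cb}(E_{\rm fix}^*)\,\delta,
\]
which is the claim.

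The main subtlety I expect to need to address is the compatibility of the two fixed-point identities: the hypothesis involves $E_{\rm fix}$ composed on the right of $\Phi^*$, whereas \textbf{Assumption (B)} places $E_{\rm fix}$ on the left of $\adm_u$. The factoring trick only succeeds once we have a single conditional expectation that both $\Phi^*$ and $\Psi^*$ fix from the right; this is automatic for the tracial conditional expectation (where $E_{\rm fix}=E_{\rm fix}^*$) that is used in the canonical setting of Proposition \ref{Assumption: Trotter example}, and reconciling these two placements is the only non-mechanical step of the proof.
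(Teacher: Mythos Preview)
Your proposal is correct and follows essentially the same route as the paper's proof: pick an approximating random unitary $\Psi$, bound $C_S^{cb}(\Psi)\le Dm$ via convexity and subadditivity, and then use the factorization $(\Phi^*-\Psi^*)=(\Phi^*-\Psi^*)(\mathrm{id}-E_{\rm fix})$ together with submultiplicativity of the cb operator norm. You are also right to flag the left/right placement issue for $E_{\rm fix}$; the paper handles this silently by relying on the tracial convention $E_{\rm fix}=E_{\rm fix}^*$ established just before the assumptions.
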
 
\begin{rem}\label{rem:continuity} Although it is not explicitly stated in Lemma~\ref{dd} above, the proof of it makes it straightforward to see the ``Lipschitz property" of $C_S^{cb}$:
\[
\left|C^{cb}_S(\Phi)-C^{cb}_S(\Psi)\right|\leq C_S^{cb}(E_{\rm fix}^*) \left\|\Phi-\Psi\right\|_{\diamond}\,.
\]
\end{rem}
\begin{proof}[Proof of Lemma~\ref{dd}]
We only need to prove the case when $l_{\delta}^{\mc U}(\Phi)= m<\infty$. Let $$\Psi=\int_{\mc U^m} \adm_{u_1}\cdots \adm_{u_m} d\mu^m(u_1\cdots u_m)$$ be an approximating random unitary channel such that $\|\Phi-\Psi\|_{\diamond}<\delta$. Then, we claim that 
\begin{equation}  
 C_S^{cb}(\Phi)\kl C_S^{cb}(\Psi)+ C_S^{cb}(E_{\rm fix}^*) \delta \pl .
\end{equation} 
In fact, under the assumptions (\textbf{B}), we have $\Phi^* \circ E_{\rm fix} = \Psi^*\circ E_{\rm fix}= E_{\rm fix}$, and
\begin{align*}
    \Phi^* - id = \Phi^* - \Psi^* + \Psi^* - id = (\Phi^* - \Psi^*)(id - E_{\rm fix}) + \Psi^* - id.
\end{align*}
Then by subadditivity of $C_S^{cb}$ and submultiplicativity of the norm, we have
\begin{align*}
     C_S^{cb}(\Phi) &=\|\Phi^* - id : (\Bb(\mc H), \||\cdot|\|_S) \to \Bb(\mc H)\|_{\rm cb} \\
    & = \|(\Phi^* - \Psi^*)(id - E_{\rm fix}) + \Psi^* - id : (\Bb(\mc H), \||\cdot|\|_S) \to \Bb(\mc H)\|_{\rm cb} \\
    & \le \|\Psi^* - id : (\Bb(\mc H), \||\cdot|\|_S) \to \Bb(\mc H)\|_{\rm cb} 
 + \|(\Phi^* - \Psi^*)(id - E_{\rm fix}): (\Bb(\mc H), \||\cdot|\|_S) \to \Bb(\mc H)\|_{\rm cb} \\
 & \le C_S^{cb}(\Psi) + \|(id - E_{\rm fix}): (\Bb(\mc H), \||\cdot|\|_S) \to \Bb(\mc H)\|_{\rm cb}\cdot \|(\Phi^* - \Psi^*):\Bb(\mc H)\to \Bb(\mc H)\|_{\rm cb} \\
 & = C_S^{cb}(\Psi) + C_S^{cb}(E_{\rm fix}^*)\|\Phi^* - \Psi^*\|_{\rm cb} = C_S^{cb}(\Psi) + C_S^{cb}(E_{\rm fix}^*)\|\Phi^* - \Psi^*\|_{\diamond}.
\end{align*}
Finally, using the convexity and subadditivity under concatenation, it follows that
\begin{align*}
    C_S^{cb}(\Psi)& = C_S^{cb}\left(\int_{\mc U^m} \adm_{u_1}\cdots \adm_{u_m} d\mu^m(u_1\cdots u_m)\right) \le \int_{\mc U^m} C_S^{cb}( \adm_{u_1}\cdots \adm_{u_m} )d\mu^m(u_1\cdots u_m) \\
    & \le \int_{\mc U^m} \sum_{j=1}^m C_S^{cb}(\adm_{u_j})d\mu^m(u_1\cdots u_m) \le \int_{\mc U^m} Dm d\mu^m(u_1\cdots u_m) = D l_{\delta}^{\mc U}(\Phi).
\end{align*}
Therefore, we get $C_S^{cb}(\Phi) \le C_S^{cb}(\Psi) + C_S^{cb}(E_{\rm fix}^*)\|\Phi - \Psi\|_{\diamond} \le D l_{\delta}^{\mc U}(\Phi) + C_S^{cb}(E_{\rm fix}^*) \delta $.
\end{proof}
We demonstrate how to relate the Lipschitz complexity of $T_t$ to the uniform maximum length, the cost-per-unitary $D$ and the Lipschitz constant. This will help us later on to find more specific expressions for the Lipschitz constant. 
To this end, recall that we use the following definition for the uniform maximum length 
\begin{equation}\label{M_alphabeta}
    \mm_{\al,\beta}:=\mm_{\al,\beta}^{[0,(2/\al)^{1/\beta}]}.
\end{equation}
Then, using Lemma \ref{dd}, we can show the following proposition:
\begin{prop} \label{ppp} Let $\al>0$, $\beta>1$. Under \textbf{Assumptions (A),(C)} and assuming that $T_t$ is a unital semigroup satisfying $ T^*_t \circ E_{S'} = E_{S'}$ for all $t\ge 0$, we have
\begin{equation}
    C_S^{cb}(T_t) \kl 
\begin{cases}
    3(\alpha \kappa^{cb}(S))^{1/\beta}(D \mm_{\alpha,\beta})^{1-\frac{1}{\beta}} t,\ \text{if}\ t \ge (\frac{D\mm_{\alpha,\beta}}{\alpha \kappa^{cb}(S)})^{1/\beta}, \\
    2D\mm_{\alpha,\beta},\ \text{if}\ t \le (\frac{D\mm_{\alpha,\beta}}{\alpha \kappa^{cb}(S)})^{1/\beta},
\end{cases}
\end{equation}
where we define
\begin{equation}
    \kappa^{cb}(S) = C_S^{cb}(E_{S'}).
\end{equation}
\end{prop}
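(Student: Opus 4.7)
The plan is to combine Lemma~\ref{dd}, applied locally, with the semigroup identity $T_t = T_{t/N}^N$ and the subadditivity of $C_S^{cb}$, so as to lift a flat local bound into a linear-in-$t$ bound at large $t$. Define the crossover time
\[
s := \left(\frac{D\mm_{\alpha,\beta}}{\alpha\kappa^{cb}(S)}\right)^{1/\beta},
\]
chosen so that the two contributions in Lemma~\ref{dd} balance exactly at $t=s$, namely $D\mm_{\alpha,\beta} = \alpha\kappa^{cb}(S)\,s^\beta$. The two cases in the statement are precisely $t \leq s$ and $t \geq s$.

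For $t \leq s$ I would prove the flat estimate $C_S^{cb}(T_t) \leq 2D\mm_{\alpha,\beta}$. In the main case $t \leq \min\{s,(2/\alpha)^{1/\beta}\}$, Lemma~\ref{dd} with $\delta = \alpha t^\beta$, combined with $l^{\mathcal U}_{\alpha t^\beta}(T_t) \leq \mm_{\alpha,\beta}$ from \eqref{M_alphabeta}, yields
\[
C_S^{cb}(T_t) \leq D\,l^{\mathcal U}_{\alpha t^\beta}(T_t) + \alpha\kappa^{cb}(S)t^\beta \leq D\mm_{\alpha,\beta} + \alpha\kappa^{cb}(S)t^\beta \leq 2D\mm_{\alpha,\beta},
\]
where the last inequality uses $t \leq s$. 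The only subtlety is when $s > (2/\alpha)^{1/\beta}$, equivalently $D\mm_{\alpha,\beta} > 2\kappa^{cb}(S)$, because then some $t \leq s$ fall outside the range where $\mm_{\alpha,\beta}$ controls $l^{\mathcal U}_{\alpha t^\beta}(T_t)$; in that range I would instead use the hypothesis $T_t^*\circ E_{S'} = E_{S'}$ to factor $T_t^* - \mathrm{id} = (T_t^* - \mathrm{id})\circ(\mathrm{id} - E_{S'})$, then apply submultiplicativity of the cb-operator-norm on Lipschitz spaces and $\|T_t - \mathrm{id}\|_\diamond \leq 2$ to read off the free bound $C_S^{cb}(T_t) \leq 2\kappa^{cb}(S) \leq D\mm_{\alpha,\beta}$.

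For $t \geq s$ let $N := \lceil t/s \rceil$, so $t/N \leq s$. By the semigroup identity $T_t = T_{t/N}^N$, subadditivity of $C_S^{cb}$, and the refined local bound from Lemma~\ref{dd} applied at time $t/N$,
\[
C_S^{cb}(T_t) \leq N\,C_S^{cb}(T_{t/N}) \leq N\bigl(D\mm_{\alpha,\beta} + \alpha\kappa^{cb}(S)(t/N)^\beta\bigr) = D\mm_{\alpha,\beta}\left(N + \frac{(t/s)^\beta}{N^{\beta-1}}\right).
\]
From $N \geq t/s$ I bound $(t/s)^\beta/N^{\beta-1} \leq t/s$, and from $N \leq t/s + 1$ together with $t\geq s$ the bracket is at most $2t/s + 1 \leq 3t/s$. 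Substituting $s$ back in gives precisely the announced $3(\alpha\kappa^{cb}(S))^{1/\beta}(D\mm_{\alpha,\beta})^{1-1/\beta}\,t$. Any residual case in which $t/N$ lies outside the domain of $\mm_{\alpha,\beta}$ is absorbed by the same $E_{S'}$-factorization as in the previous step, since then the uniform bound $C_S^{cb}(T_t) \leq 2\kappa^{cb}(S)$ is automatically dominated by $3D\mm_{\alpha,\beta}(t/s)$.

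The main obstacle I anticipate is getting the sharp constant $3$ rather than the naive $4$: it forces me to keep the full local bound $D\mm_{\alpha,\beta}(1+(t/(Ns))^\beta)$ through the semigroup splitting and exploit $N \geq t/s$ carefully, instead of using the cruder $C_S^{cb}(T_{t/N}) \leq 2D\mm_{\alpha,\beta}$. The secondary technicality is aligning the two natural scales $(2/\alpha)^{1/\beta}$ and $s$ via the $E_{S'}$-factorization trick whenever they are incomparable in the favorable direction; everything else is semigroup bookkeeping and the content of Lemma~\ref{dd}.
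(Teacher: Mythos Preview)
Your proposal is correct and follows essentially the same route as the paper: split $T_t = T_{t/n}^n$, apply subadditivity together with Lemma~\ref{dd} at the local scale to get $C_S^{cb}(T_t)\le nD\mm_{\alpha,\beta}+n^{1-\beta}\alpha\kappa^{cb}(S)t^\beta$, and then optimize over $n$ (your $N=\lceil t/s\rceil$ is exactly the paper's choice, and your arithmetic for the constant $3$ is the same). Your extra caution about the regime $s>(2/\alpha)^{1/\beta}$ is in fact unnecessary---for $t>(2/\alpha)^{1/\beta}$ one has $l^{\mathcal U}_{\alpha t^\beta}(T_t)\le 1\le \mm_{\alpha,\beta}$ automatically (as noted in the paragraph preceding \eqref{def: maximal length}), so the bound $l^{\mathcal U}_{\alpha t^\beta}(T_t)\le \mm_{\alpha,\beta}$ holds for \emph{all} $t\ge 0$---but your $E_{S'}$-factorization workaround is valid regardless.
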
 
\begin{proof} First, using subadditivity $C^{cb}_S(\cdot)$ under concatenation, the semigroup property and Lemma \ref{dd} with $\delta = \alpha (\frac{t}{n})^{\beta}, n\ge 1$, we have
\begin{equation}\label{Lower bound: arbitrary time}
 \begin{aligned}
 C_S^{cb}(T_t) \lel C_S^{cb}(T_{t/n}^n) \le n C_S^{cb}(T_{t/n}) & \le   n D l_{\alpha (t/n)^{\beta}}^{\mc U}(T_{t/n})+ n^{1-\beta} \kappa^{cb}(S)\al t^{\beta}\\
 & \le n D \mm_{\al,\beta}+ n^{1-\beta} \kappa^{cb}(S)\al t^{\beta}.
 \end{aligned}
\end{equation}
We try to minimize the right hand side by optimizing over $n$. Thus, we define 
 \[ b = \left(\frac{\al\kappa^{cb}(S)}{D\mm_{\al,\beta}}\right)^{1/\beta} t \pl .\]
If $b\gl 1$, we choose $n= \lfloor b \rfloor$ the largest integer number smaller than $b$. In this case, we have $n-1 \le b \le n$ and $b\le n \le 2b$. Therefore, 
\begin{align*}
  C_S^{cb}(T_t)&\le  2b D\mm_{\alpha,\beta} + b^{1-\beta} \alpha \kappa^{cb}(S) t^{\beta}  = 2D\mm_{\alpha,\beta} \left(\frac{\al\kappa^{cb}(S)}{D\mm_{\al,\beta}}\right)^{1/\beta} t + \left(\frac{\al\kappa^{cb}(S)}{D\mm_{\al,\beta}}\right)^{1/\beta - 1}  \alpha \kappa^{cb}(S) t \\
& = 3 (\alpha \kappa^{cb}(S))^{1/\beta}D\mm_{\al,\beta}^{1-1/\beta}t.
 \end{align*}
If $b< 1$, we define $n=1$. Then, 
\begin{align*}
    C_S^{cb}(T_t)&\le D\mm_{\alpha,\beta} + \alpha \kappa^{cb}(S) t^{\beta}.
\end{align*}
Note that for $b <1$, which is equivalent to $\alpha \kappa^{cb}(S)t^{\beta} < D \mm_{\alpha,\beta}$, it then holds that
\begin{align*}
    C_S^{cb}(T_t)&\le 2 D\mm_{\alpha,\beta}.
\end{align*}
\end{proof}
\noindent Based on Proposition \ref{ppp}, we can go one step further by replacing $C^{cb}_S(T_t)$ on the left-hand side with $\kappa^{cb}(S)$ and the mixing time. First, we note that under the assumptions in Proposition \ref{ppp}, via Proposition \ref{characterization: fixed point} we have
\begin{equation}\label{eqn:mixing}
    \lim_{t\to \infty} \frac{1}{t}\int_0^t T^*_s ds = E_{S'}.
\end{equation}
To replace the Lipschitz complexity $C^{cb}_S(T_t)$ with the Lipschitz constant $\kappa^{cb}(S)$, we define the \textit{complexity-induced mixing time} as follows: 
\begin{defi}\label{def:mixing_time}
For any $\varepsilon \in (0,1)$, the complexity-induced mixing time of $T_t$ is defined by
    \begin{equation}
    t_{\rm mix}^S(\varepsilon) = \inf\{t\ge 0: C^{cb}_S(T_t)\ge (1-\varepsilon)\kappa^{cb}(S)\}.
\end{equation}
\end{defi}
\begin{rmk}
    The mixing time $t_{\rm mix}$ defined via the diamond norm is
\begin{equation}
     t_{\rm mix}(\varepsilon) := \inf\{t\ge 0: \|T_t - E_{S'}\|_{\diamond} \le \varepsilon\}.
\end{equation}
Note that the complexity-induced mixing time $t^S_{\rm mix}$ is usually strictly smaller than the diamond norm mixing time $t_{\rm mix}$. We prove a more general case in Lemma \ref{upper bound: mixing time}.  
\end{rmk}
\noindent If the complexity-induced mixing time is finite, $t_{\rm mix}^S(\varepsilon)<\infty$ for some small $\varepsilon>0$, we can now relate the complexity $C^{cb}_S$ to the Lipschitz constant $\kappa^{cb}(S)$ and combine this with Proposition~\ref{ppp},
\begin{equation*}
    (1-\varepsilon)\kappa^{cb}(S) \le C_S^{cb}(T_{t_{\rm mix}^S(\varepsilon)}) \le \begin{cases}
    3(\alpha \kappa^{cb}(S))^{1/\beta}(D \mm_{\alpha,\beta})^{1-\frac{1}{\beta}} t_{\rm mix}^S(\varepsilon),\ \text{if}\ t_{\rm mix}^S(\varepsilon)\ge (\frac{D\mm_{\alpha,\beta}}{\alpha \kappa^{cb}(S)})^{1/\beta}, \\
    2D\mm_{\alpha,\beta},\ \text{if}\ t_{\rm mix}^S(\varepsilon) \le (\frac{D\mm_{\alpha,\beta}}{\alpha \kappa^{cb}(S)})^{1/\beta}.
    \end{cases}
\end{equation*}
Simplifying the above formula, we get the following theorem:
\begin{theorem} \label{main: lower bound mixing}
    Under the assumptions of Proposition \ref{ppp} and $t^S_{\rm mix}(\varepsilon)<\infty$ for some $\varepsilon>0$, we have 
    \begin{equation}
        \mm_{\alpha,\beta} \ge C_{\alpha,\beta} \frac{\kappa^{cb}(S)}{D},\quad C_{\alpha,\beta}:= \min \left\{ \alpha^{-\frac{1}{\beta -1}} \left( \frac{1-\varepsilon}{3t_{\rm mix}^S(\varepsilon)} \right)^{\frac{\beta}{\beta - 1}} , \frac{1-\varepsilon}{2} \right\}
    \end{equation}
\end{theorem}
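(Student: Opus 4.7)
The plan is to specialize the estimate from Proposition \ref{ppp} to the specific time $t = t^S_{\rm mix}(\varepsilon)$ and then combine it with the lower bound on $C_S^{cb}(T_t)$ that comes directly from the definition of the complexity-induced mixing time. By Definition \ref{def:mixing_time}, as soon as $t \ge t^S_{\rm mix}(\varepsilon)$ we have $C_S^{cb}(T_t) \ge (1-\varepsilon)\kappa^{cb}(S)$, so taking $t = t^S_{\rm mix}(\varepsilon)$ (which is finite by assumption) gives a lower bound on the left-hand side of Proposition \ref{ppp} that does not depend on $T_t$ anymore. The inequality then becomes an implicit constraint on $\mm_{\alpha,\beta}$.

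Next, I would perform the case analysis dictated by the two branches of Proposition \ref{ppp}. In the ``large time'' branch where $t^S_{\rm mix}(\varepsilon) \ge (D\mm_{\alpha,\beta}/(\alpha\kappa^{cb}(S)))^{1/\beta}$, the inequality becomes
\[
(1-\varepsilon)\kappa^{cb}(S) \;\le\; 3(\alpha \kappa^{cb}(S))^{1/\beta}(D \mm_{\alpha,\beta})^{1-1/\beta} t^S_{\rm mix}(\varepsilon),
\]
which I would rearrange to isolate $D\mm_{\alpha,\beta}$. The useful algebraic identity here is $\beta/(\beta-1) - 1/(\beta-1) = 1$, so after raising both sides to the power $\beta/(\beta-1)$ the factor of $\kappa^{cb}(S)$ comes out linearly, giving the bound $\mm_{\alpha,\beta} \ge \alpha^{-1/(\beta-1)}\left(\tfrac{1-\varepsilon}{3 t^S_{\rm mix}(\varepsilon)}\right)^{\beta/(\beta-1)} \kappa^{cb}(S)/D$. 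In the complementary ``small time'' branch, Proposition \ref{ppp} gives $C_S^{cb}(T_{t^S_{\rm mix}(\varepsilon)}) \le 2D\mm_{\alpha,\beta}$, and combined with the lower bound this immediately yields $\mm_{\alpha,\beta} \ge (1-\varepsilon)\kappa^{cb}(S)/(2D)$.

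Finally, I would observe that regardless of which of the two branches actually holds at the chosen time $t^S_{\rm mix}(\varepsilon)$, the quantity $\mm_{\alpha,\beta}$ is at least the smaller of the two bounds obtained above, which is exactly $C_{\alpha,\beta}\,\kappa^{cb}(S)/D$ as stated. The main obstacle, though very mild, is the bookkeeping of exponents in the first case and making sure the case split in Proposition \ref{ppp} is handled uniformly; since one does not know a priori on which side of $(D\mm_{\alpha,\beta}/(\alpha\kappa^{cb}(S)))^{1/\beta}$ the time $t^S_{\rm mix}(\varepsilon)$ lies, taking the minimum of the two resulting lower bounds is the clean way to conclude. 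No new ingredient beyond Proposition \ref{ppp} and the definition of $t^S_{\rm mix}$ should be needed.
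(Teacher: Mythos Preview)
Your proposal is correct and follows essentially the same route as the paper: evaluate Proposition \ref{ppp} at $t=t_{\rm mix}^S(\varepsilon)$, use Definition \ref{def:mixing_time} to bound the left-hand side below by $(1-\varepsilon)\kappa^{cb}(S)$, solve each branch of the case split for $\mm_{\alpha,\beta}$, and take the minimum of the two resulting lower bounds. The algebra you indicate in the first branch matches the paper's computation exactly.
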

\subsubsection{Lower bound with fixed target time and precision}\label{subsec:lower bound fixed}

Now, we relate our framework to the classical lower bound problems introduced in \cite{kliesch2011dissipative}. The problem can be stated as follows:

\emph{Given an accessible unitary gate set $\mathcal{U}$ and a quantum channel $T_t$, what is the minimal number of gates we need to simulate $T_t$ within error $\delta$ in diamond norm for fixed $t>0$ and $\delta>0$?}

In general, the estimates of the number of gates defined in \eqref{eqn:convex_depth} should be expressed in terms of $t,\delta$, the underlying dimension of the system, and the number of local operators we need to express the jump operators and the Hamiltonian. Following the roadmap appearing in the previous section, we are able to prove the following theorem answering the above question:

\begin{theorem}\label{main result: lower bound fixed time}
    For a $D$-compatible resource set $S$ given accessible unitary $\mc U$ such that the assumption of Proposition \ref{ppp} holds, we have 
    \begin{equation}
        l^{\mc U}_{\alpha \tau^{\beta}}(T_{\tau}) \ge \tau \frac{\kappa^{cb}(S)}{8D t_{\rm{mix}}^S}
    \end{equation}
    for any $\alpha>0,\beta>1, \tau>0$ such that $t_{\rm mix}^S \alpha \tau^{\beta - 1} < \frac{1}{8}$, where $t_{\rm mix}^S$ is defined as $t_{\rm mix}^S(\frac{1}{2})$ in Definition \ref{def:mixing_time}.
\end{theorem}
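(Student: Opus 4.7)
The plan is to reuse the core inequality that drives the proof of Theorem~\ref{main: lower bound mixing}, but to freeze the local time step at $\tau$ (rather than optimize it) and iterate $T_\tau$ until its Lipschitz complexity saturates the mixing threshold.

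First, I would apply Lemma~\ref{dd} to $\Phi = T_\tau$ with tolerance $\delta = \alpha \tau^\beta$. Under the hypothesis of Proposition~\ref{ppp}, the map $E_{\rm fix} = E_{S'}$ is left invariant by $T_\tau^*$, so Lemma~\ref{dd} gives $C_S^{cb}(T_\tau) \le D\, l^{\mc U}_{\alpha\tau^\beta}(T_\tau) + \kappa^{cb}(S)\, \alpha \tau^\beta$. The semigroup identity $T_{n\tau} = T_\tau^n$ together with subadditivity of $C_S^{cb}$ under composition then produces, for every positive integer $n$,
\begin{equation*}
C_S^{cb}(T_{n\tau}) \le n\, C_S^{cb}(T_\tau) \le n D\, l^{\mc U}_{\alpha\tau^\beta}(T_\tau) + n\, \kappa^{cb}(S)\, \alpha \tau^\beta,
\end{equation*}
which is the analogue of~\eqref{Lower bound: arbitrary time} with the inner depth now frozen at $\tau$.

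Next, I would set $n = \lceil t_{\rm mix}^S / \tau \rceil$, so that $n\tau \ge t_{\rm mix}^S = t_{\rm mix}^S(1/2)$ and, in the relevant regime $\tau \le t_{\rm mix}^S$, also $n \le 2\, t_{\rm mix}^S/\tau$. The defining property of $t_{\rm mix}^S$ supplies $C_S^{cb}(T_{n\tau}) \ge \tfrac{1}{2}\kappa^{cb}(S)$, and combining this lower bound with the upper bound from the previous display yields
\begin{equation*}
l^{\mc U}_{\alpha\tau^\beta}(T_\tau) \;\ge\; \frac{\kappa^{cb}(S)}{nD}\left(\tfrac{1}{2} - n\alpha\tau^\beta\right) \;\ge\; \frac{\tau\,\kappa^{cb}(S)}{2\, t_{\rm mix}^S\, D}\left(\tfrac{1}{2} - 2\, t_{\rm mix}^S\alpha\tau^{\beta-1}\right).
\end{equation*}
The hypothesis $t_{\rm mix}^S\alpha\tau^{\beta-1} < 1/8$ forces the bracketed quantity to exceed $1/4$, which gives the desired $l^{\mc U}_{\alpha\tau^\beta}(T_\tau) \ge \tau\,\kappa^{cb}(S)/(8D\, t_{\rm mix}^S)$.

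The delicate step will be justifying $C_S^{cb}(T_{n\tau}) \ge \tfrac{1}{2}\kappa^{cb}(S)$ for this specific $n$: the function $t \mapsto C_S^{cb}(T_t)$ is only known to be subadditive under composition, not obviously monotone in $t$, so $n\tau \ge t_{\rm mix}^S$ does not by itself place us past the complexity-mixing threshold. I would handle this by taking $n^\star$ to be the smallest integer with $C_S^{cb}(T_{n^\star\tau}) \ge \tfrac{1}{2}\kappa^{cb}(S)$ and then bounding $n^\star \le \lceil t_{\rm mix}^S/\tau \rceil$ via the Lipschitz estimate of Remark~\ref{rem:continuity} combined with the semigroup decomposition $T_{n^\star\tau} = T_{t_{\rm mix}^S} \circ T_{n^\star\tau - t_{\rm mix}^S}$, which controls $|C_S^{cb}(T_{n^\star\tau}) - C_S^{cb}(T_{t_{\rm mix}^S})|$ by a diamond-norm quantity that vanishes as $n^\star\tau \downarrow t_{\rm mix}^S$.
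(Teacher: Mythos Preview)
Your argument is essentially the paper's own proof: both derive the iterated inequality $C_S^{cb}(T_{m\tau}) \le mD\, l^{\mc U}_{\alpha\tau^\beta}(T_\tau) + m\,\kappa^{cb}(S)\,\alpha\tau^\beta$, pick $m$ so that $m\tau \in [t_{\rm mix}^S, 2t_{\rm mix}^S]$, insert $C_S^{cb}(T_{m\tau}) \ge \tfrac{1}{2}\kappa^{cb}(S)$, and simplify under the hypothesis $t_{\rm mix}^S\alpha\tau^{\beta-1} < \tfrac{1}{8}$. Your worry about monotonicity of $t\mapsto C_S^{cb}(T_t)$ actually goes beyond what the paper justifies (it simply asserts the inequality at that step), and your proposed $n^\star$-workaround is not needed to match the paper and is not quite sound as written: the decomposition $T_{n^\star\tau}=T_{t_{\rm mix}^S}\circ T_{n^\star\tau - t_{\rm mix}^S}$ already presupposes $n^\star\tau\ge t_{\rm mix}^S$, and the Lipschitz estimate you invoke only yields $C_S^{cb}(T_{n\tau})\ge\bigl(\tfrac{1}{2}-O(\tau\|L\|_\diamond)\bigr)\kappa^{cb}(S)$ rather than the clean $\tfrac{1}{2}\kappa^{cb}(S)$.
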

\begin{proof}
    Recall \eqref{Lower bound: arbitrary time} in the proof of Proposition \ref{ppp}: if we choose a $D$-compatible resourse set $S$ given accessible unitary $\mc U$, for any $t>0, m\ge 1$, 
\begin{align*}
    C_S^{cb}(T_t) \le m D l^{\mc U}_{\alpha \left(\frac{t}{m}\right)^{\beta}}\left(T_{\frac{t}{m}}\right) + m^{1-\beta} \kappa^{cb}(S) \alpha t^{\beta}.
\end{align*}
If we take $\tau = \frac{t}{m}$, then
\begin{align*}
    C_S^{cb}(T_{m \tau}) \le m D l^{\mc U}_{\alpha \tau^{\beta}}(T_{\tau}) + m \kappa^{cb}(S) \alpha \tau^{\beta},
\end{align*}
from which we get a lower bound by optimizing $m$:
\begin{align} 
    l^{\mc U}_{\alpha \tau^{\beta}}(T_{\tau}) \ge \sup_{m \ge 1} \frac{C_S^{cb}(T_{m \tau}) - m \kappa^{cb}(S) \alpha \tau^{\beta}}{Dm}.
\end{align}
Recall that $t_{\rm mix}^S: = \inf\{t>0: C_S^{cb}(T_t) \ge \frac{1}{2}\kappa^{cb}(S)\}$. Choosing $m$ such that  \begin{equation}
   t_{\rm mix}^S \le m \tau \le 2 t_{\rm mix}^S\,,
\end{equation}
we have 
\begin{equation}\label{lower bound: fixed time}
    l^{\mc U}_{\alpha \tau^{\beta}}(T_{\tau}) \ge \tau \left(\frac{1}{2} - 2 t_{\rm mix}^S \alpha \tau^{\beta-1}\right)\frac{\kappa^{cb}(S)}{2D t_{\rm{mix}}^S}.
\end{equation}
Finally, if $t_{\rm mix}^S \alpha \tau^{\beta - 1} < \frac{1}{8}$, the desired lower bound is derived via \eqref{lower bound: fixed time}.
\end{proof}
Note that the flexibility of the choice of $\alpha,\beta,\tau$ can provide us lower bounds when the target time $\tau$ is small or large. If we fix $\alpha,\beta$ then the above estimate is a lower bound when $\tau$ is small, i.e., $\tau^{\beta - 1} < \frac{1}{8t_{\rm mix}^S \alpha }$. If we fix the target time $\tau>0$, then we can choose $\alpha$ small enough so that our lower bound is valid. As a corollary, we can provide a lower bound answering the general question proposed at the beginning of this subsection:
\begin{cor}\label{main result: lower bound fixed time and precision}
    A lower bound on the minimal number of gates we need to simulate $T_t$ for any $t>8\delta t_{\rm mix}$ within constant error $\delta>0$ is given by
    \begin{equation}
       l^{\mc U}_{\delta}(T_t) \ge t \frac{\kappa^{cb}(S)}{8Dt_{\rm{mix}}^S}
    \end{equation}
    where the resource set $S$ and the given accessible unitary $\mc U$ are $D$-compatible such that the assumption of Proposition \ref{ppp} holds.
\end{cor}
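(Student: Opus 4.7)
The plan is to derive the corollary as a direct specialization of Theorem \ref{main result: lower bound fixed time}, by choosing the parameters $\alpha,\beta,\tau$ so that the local error bound $\alpha \tau^{\beta}$ collapses to the target error $\delta$ and the local time $\tau$ equals the target time $t$. Concretely, I would set $\tau = t$, pick any $\beta > 1$ (say $\beta = 2$ for definiteness), and then define $\alpha$ by the relation $\alpha \tau^{\beta} = \delta$, i.e., $\alpha = \delta/t^{\beta}$. Under these identifications, $l^{\mc U}_{\alpha \tau^{\beta}}(T_{\tau})$ is literally $l^{\mc U}_{\delta}(T_t)$, so the conclusion of Theorem \ref{main result: lower bound fixed time} rewrites as exactly the target bound $l^{\mc U}_{\delta}(T_t) \ge t \, \kappa^{cb}(S)/(8D\, t_{\rm mix}^S)$.

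The only thing left to check is that the admissibility hypothesis of Theorem \ref{main result: lower bound fixed time}, namely $t_{\rm mix}^S \alpha \tau^{\beta-1} < 1/8$, is satisfied whenever $t > 8\delta t_{\rm mix}$. Substituting $\tau = t$ and $\alpha = \delta/t^{\beta}$ into the hypothesis gives
\begin{equation*}
    t_{\rm mix}^S \alpha \tau^{\beta - 1} \;=\; t_{\rm mix}^S \cdot \frac{\delta}{t^{\beta}} \cdot t^{\beta - 1} \;=\; \frac{t_{\rm mix}^S \, \delta}{t}\,,
\end{equation*}
which is $< 1/8$ precisely when $t > 8\delta \, t_{\rm mix}^S$. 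Since $t_{\rm mix}^S \le t_{\rm mix}$ (as noted in the remark following Definition \ref{def:mixing_time}), the hypothesis $t > 8\delta t_{\rm mix}$ stated in the corollary is at least as strong as what is needed, so the theorem applies and yields the claimed inequality.

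There is no real obstacle: this is essentially a change of variables. The only small subtlety worth recording explicitly is that the right-hand side of Theorem \ref{main result: lower bound fixed time} carries no $\beta$-dependence, so the freedom in choosing $\beta > 1$ is used solely to verify the condition on $\alpha \tau^{\beta - 1}$; any $\beta > 1$ works with this identification since $\alpha \tau^{\beta-1} = \delta/t$ is independent of $\beta$. Thus the corollary follows immediately from the theorem with no further estimates required.
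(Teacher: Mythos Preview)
Your proposal is correct and matches the paper's approach: the paper explicitly notes (just before the corollary, and in the introductory discussion after Theorem~\ref{Intro: fixed time lower bound}) that one obtains the fixed-time, fixed-error bound by setting $\alpha = \delta/\tau^{\beta}$ with $\tau = t$, which is precisely your substitution. Your observation that $t_{\rm mix}^S \le t_{\rm mix}$ cleanly reconciles the condition $t > 8\delta t_{\rm mix}$ in the statement with the $t_{\rm mix}^S$-condition required by Theorem~\ref{main result: lower bound fixed time}.
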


\subsection{An illustrating example: $n$-qubit Pauli noise} \label{subsec:example-section-4}
We provide a simple example illustrating Theorem \ref{main: lower bound mixing}. The model is defined on $n$-qubit system $\mc H_2^{\otimes n} = (\mb C^2)^{\otimes n}$. Denote $X_j$ as the Pauli-$X$ operator $X = \begin{pmatrix}
    0 & 1 \\
    1 & 0
\end{pmatrix}$ acting on the $j$-th system, i.e., 
\begin{align*}
    X_j = I_2 \otimes \cdots I_2 \otimes \underbrace{X}_\text{j-th system} \otimes I_2 \cdots \otimes I_2.
\end{align*}
We define $Y_j$ as the Pauli-$Y$ operator $Y = \begin{pmatrix}
    0 & i \\
    -i & 0
\end{pmatrix}$ acting on the $j$-th system in a similar manner. 
We consider the following 1-local model given by the Lindbladian generator
\begin{equation}\label{Pauli example}
    L(x):= \sum_{j=1}^n (L_{X_j} + L_{Y_j})(x),\ x\in \mb B(\mc H_2^{\otimes n}).
\end{equation}
Here we denote $L_{X_j}(x) = X_j x X_j - \frac{1}{2}(X_j^2 x + x X_j^2) = X_jxX_j - x$. Note that this model is a symmetric model and a Poisson model at the same time. We show later that in terms of the number of qubits, the approach for Poisson model provides a better simulation protocol, meaning less simulation cost. 
By the commuting property of $L_{X_j} + L_{Y_j}$ for different qubits $j$, the semigroup $T_t = \exp(tL)$ is given by independent copies of a single-qubit semigroup. In fact, suppose $\mc L(x): = (L_X + L_Y)(x), \ x\in \mb B(\mc H_2)$ and $\mc T_t = \exp(t\mc L)$, then 
\begin{equation}\label{eqn:separation}
    T_t = \mc T_t ^{\otimes n}.
\end{equation}
A natural choice of accessible unitary set $\mc U$ is given by individual Hamiltonian simulation of the generators,
\begin{equation}    
\mc U = \{\exp(itX_j),\, \exp(itY_j): |t|\le \tau, 1\leq j \le n\}.
\end{equation}
Applying Theorem \ref{main: lower bound mixing}, we demonstrate the lower bound of $\mm_{\alpha,\beta}$ given accessible unitary $\mc U$ in the following theorem:
\begin{theorem}\label{Lower bound: Pauli example}
    For any $\alpha > 0, \beta > 1$ with $\mm_{\alpha,\beta}< +\infty$. We have
    \begin{equation}
        \mm_{\alpha,\beta} \ge C_{\alpha,\beta} \frac{n}{\tau},\quad C_{\alpha,\beta}:= \frac{1}{2}\min \left\{ \alpha^{-\frac{1}{\beta -1}} \left( \frac{1}{6\ln 2} \right)^{\frac{\beta}{\beta - 1}} , \frac{1}{8} \right\}.
    \end{equation}
\end{theorem}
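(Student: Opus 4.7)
The plan is to apply Theorem~\ref{main: lower bound mixing} with the natural resource set generated by the Hamiltonians driving $\mc U$, namely
\begin{equation*}
S = \{X_j, Y_j : 1 \le j \le n\}.
\end{equation*}
By Proposition~\ref{Assumption: Trotter example}, the pair $(\mc U, S)$ is $\tau$-compatible, so we may take $D = \tau$, and Assumptions~(A), (B), (C) are automatic. The remaining hypotheses of Proposition~\ref{ppp} are also easy: $L(I) = 0$ gives unitality $T_t(I) = I$; since each jump operator is a self-adjoint Pauli satisfying $V^2 = I$, each $L_V$ is self-adjoint, so $L^* = L$ and $T_t^* = T_t$; finally, $E_{S'}(x) = \frac{\Tr(x)}{2^n} I$ takes values in $\mb C I$ which is fixed by $T_t^*$, yielding $T_t^* \circ E_{S'} = E_{S'}$.

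There are two ingredients left: $\kappa^{cb}(S) = C_S^{cb}(E_{S'})$ and the complexity-induced mixing time. Since the algebra generated by $\{X_j, Y_j\}$ is all of $\mb B(\mc H_2^{\otimes n})$, the centralizer is $S' = \mb C I$ and $E_{S'}$ coincides with the replacer channel $\Phi_2$ studied in Example~\ref{example: ergodic}, giving
\begin{equation*}
n/2 \le C_S(E_{S'}) \le \kappa^{cb}(S) \le n.
\end{equation*}
For the mixing time I will use the tensor decomposition $T_t = \mc T_t^{\otimes n}$ from \eqref{eqn:separation} and the single-qubit spectral computation $\mc L(X) = -2X$, which implies $T_t(X_j) = e^{-2t} X_j$. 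Plugging in the test operator $x = \sum_{j=1}^n X_j$, noting that $\|x\|_\infty = n$ (the $X_j$ commute and share eigenvalues $\pm 1$), that $\|[Y_k, X_k]\|_\infty = 2$ while all other commutators $[Y_k, X_j]$ and $[X_k, X_j]$ vanish for $k \neq j$, and that $T_t^* = T_t$, yields
\begin{equation*}
C_S^{cb}(T_t) \ge C_S(T_t) \ge \frac{\|T_t^*(x) - x\|_\infty}{\||x|\|_S} = \frac{n(1 - e^{-2t})}{2}.
\end{equation*}

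To extract a concrete mixing time estimate I choose $\varepsilon = 3/4$. Using the worst-case bound $\kappa^{cb}(S) \le n$, the defining inequality $C_S^{cb}(T_t) \ge (1-\varepsilon)\kappa^{cb}(S)$ is guaranteed as soon as $n(1-e^{-2t})/2 \ge n/4$, i.e., $t \ge (\ln 2)/2$. Hence $t_{\rm mix}^S(3/4) \le (\ln 2)/2$. Substituting $\varepsilon = 3/4$, $D = \tau$, $\kappa^{cb}(S) \ge n/2$ and $t_{\rm mix}^S(3/4) \le (\ln 2)/2$ into Theorem~\ref{main: lower bound mixing} produces $\frac{1-\varepsilon}{3 t_{\rm mix}^S(\varepsilon)} \ge \frac{1/4}{3(\ln 2)/2} = \frac{1}{6\ln 2}$ and $\frac{1-\varepsilon}{2} = \frac{1}{8}$, while the factor $\kappa^{cb}(S)/D \ge n/(2\tau)$ accounts for the outer $1/2$ in the stated constant $C_{\alpha,\beta}$. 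The proof is essentially routine: the only content beyond applying the general framework is the two elementary spectral/commutator computations, and the main care required is in bookkeeping the constants so they line up with the advertised bound.
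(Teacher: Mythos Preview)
The proposal is correct and follows essentially the same route as the paper: the same resource set $S=\{X_j,Y_j\}$, the same appeal to Proposition~\ref{Assumption: Trotter example} and Example~\ref{example: ergodic}, the same certificate operator $\sum_j X_j$ giving $C_S^{cb}(T_t)\ge n(1-e^{-2t})/2$, the same choice $\varepsilon=3/4$ yielding $t_{\rm mix}^S(3/4)\le(\ln 2)/2$, and the same final appeal to Theorem~\ref{main: lower bound mixing}. Your write-up is in fact slightly more explicit than the paper's in verifying the side conditions of Proposition~\ref{ppp} (unitality, self-adjointness of $L$, $T_t^*\circ E_{S'}=E_{S'}$), which is a plus.
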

\begin{proof}
    First, we choose the resource set $S = \{X_j,Y_j: 1\le j \le n\}$. Then via Proposition \ref{Assumption: Trotter example}, $S$ and $\mc U$ satisfy \textbf{Assumptions (A), (B), (C)} with the conditional expectation given by
    \begin{equation}\label{conditional expectation: trace}
        E_{S'}(x): = \Tr(x) \frac{I_{2^n}}{2^n}.
    \end{equation}
    Then via Example \ref{example: ergodic}, we have 
    \begin{equation}
        \frac{n}{2} \le C_S(E_{S'}) \le C_S^{cb}(E_{S'}) = \kappa^{cb}(S) \le n.
    \end{equation}
    To show that $t_{\rm mix}^S(\varepsilon)$ is upper bounded by a universal constant, we consider the certificate operator $\sum_{i=1}^n X_i$. We have
$\sum_j \left(L_{X_j} + L_{Y_j}\right) \left(\sum_{i=1}^n X_i\right) = \sum_i L_{Y_i}(X_i) = -2 \sum_{i=1}^n X_i$, which implies
\begin{align*}
T_t\left(\sum_{i=1}^n X_i\right) = e^{-2t} \sum_{i=1}^n X_i\,.
\end{align*}
Recalling the definition of Lipschitz complexity,
\begin{equation}\label{eqn:sec-4-pauli-example-lower-bound-on-complexity}
\begin{aligned}
    & C_S^{cb}(T_t)\ge C_S(T_t)  = \sup \left\{\frac{\|T_t(x) - x\|_{\infty}}{\left\|\left|x\right|\right\|_S}:x \in \mb B(\mc H_2^{\otimes n}), \left\|\left|x\right|\right\|_S \neq 0\right\} \\
    & \ge (1-e^{-2t})\frac{ \|\sum_{j=1}^n X_j\|}{\left\|\left| \sum_{j=1}^n X_j\right|\right\|_S} = \left(1-e^{-2t}\right)\frac{n}{2} \ge \frac{1-e^{-2t}}{2} \kappa^{cb}(S). 
\end{aligned}
\end{equation}
Here $\left\|\left| \sum_{j=1}^n X_j\right|\right\|_S = 2, \left\| \sum_{j=1}^n X_j\right\| = n$ follows from direct calculation. Using the definition of $t_{\rm mix}^S$ in Definition \ref{def:mixing_time}, we have 
\begin{align*}
    C_S^{cb}(T_t)\ge \frac{\kappa^{cb}(S)}{4},\quad t\ge \frac{\ln 2}{2} \implies t_{\rm mix}^S\left(\frac{3}{4}\right) \le \frac{\ln 2}{2}.
\end{align*}
Finally, we apply Theorem \ref{main: lower bound mixing} with $\varepsilon = \frac{3}{4}$ there, we have 
\begin{equation}
        \mm_{\alpha,\beta} \ge C_{\alpha,\beta} \frac{n}{\tau},\quad C_{\alpha,\beta}:= \frac{1}{2}\min \left\{ \alpha^{-\frac{1}{\beta -1}} \left( \frac{1}{6\ln 2} \right)^{\frac{\beta}{\beta - 1}} , \frac{1}{8} \right\}
    \end{equation}
\end{proof}
Similar to what we have done in Section~\ref{sec:upper}, we now discuss upper bounds of $\mm_{\alpha,\beta}$ via Trotter and Poisson approaches.
\subsubsection{Trotter approach}
Motivated by the Trotter estimate, 
For any $t>0$, $\alpha >0$ and $\beta =3$, we have shown before in Theorem \ref{main: symmetric} that there exists an approximate random unitary channel $\Phi_t$ such that
\begin{equation}
    \|\mc T_t - \Phi_t\|_{\diamond} < \alpha t^3.
\end{equation}
The length of $\Phi_t$ is bounded above by $\frac{C_{\alpha}}{\tau}$ where $C_{\alpha}$ is a constant depends on $\alpha.$ Applying a telescoping argument,
\begin{equation}
    \|\mc T_t^{\otimes n} - \Phi_t^{\otimes n}\|_{\diamond} < n \alpha t^3
\end{equation}
The length of $\Phi_t^{\otimes n}$ is bounded above by $\frac{nC_{\alpha}}{\tau}$. 
This gives us an upper bound on $\mm_{n\alpha,3}$, the uniform maximal simulation cost for $T^{\otimes n}_t$, namely 
\begin{equation}
    \mm_{n\alpha,3} \le \frac{nC_{\alpha}}{\tau}.
\end{equation}
Applying Lemma \ref{different parameter}, for any $\alpha>0$ and $\beta = 3$, we have 
\begin{equation}\label{upper bound: commuting Pauli}
    \mm_{\alpha,3} \le  \frac{n^{3/2}\wt C_{\alpha}}{\tau}.
\end{equation}
The above upper bound improves upon the result in Theorem \ref{main: symmetric}, where $\mm_{\alpha,3} \le  \frac{n^{5/2}C_{\alpha}}{\tau}$. This key improvement stems from the equality in \eqref{eqn:separation} due to the local properties of $X_j$ and $Y_j$. In contrast, the proof of Theorem \ref{main: symmetric} introduces additional error when decomposing $\exp(tL)$ into products of $\exp\left(tL_{a_j}\right)$, which in general do not exhibit the same commuting property.

\subsubsection{Poisson approach} \label{sec: example:Poisson}
Now we exhibit another upper bound of $\mm_{\alpha,\beta}$ using the Poisson approach. Note that $X_j,Y_j$ are also unitaries, our model $T_t = \mc T_t^{\otimes n}$ fits the form of a Poisson model from Section \ref{sec:nonsymm_case}. Using the Poisson approach, we can provide a scheme with upper bound behaving like $\frac{n}{\tau}$. This means our lower bound captures the correct behavior of $n,\tau$.

In fact, we note that \begin{align*}
    \exp(itX) = \begin{pmatrix}
        \cos t & i\sin t \\
        i \sin t & \cos t
    \end{pmatrix}, \quad \exp(itY) = \begin{pmatrix}
        \cos t & -\sin t \\
        \sin t & \cos t
    \end{pmatrix}.
\end{align*}
We have $\exp(i\frac{\pi}{2}X) = iX, \exp(i\frac{\pi}{2}Y) = -iY$. Therefore, there exist $u_0, u_1 \le \tau$ and integers $k_0, k_1 \le \frac{\pi}{\tau}$ such that 
\begin{align}
    \adm^{k_0}_{\exp(iu_0X_j)} = \adm_{X_j}, \quad \adm^{k_1}_{\exp(iu_1X_j)} = \adm_{Y_j},
\end{align}
where $\adm^{k}_U$ is $k$ iterations of $\adm_U$. A candidate for approximation is given as 
\begin{align*}
    \Phi = \frac{1}{2n} \sum_{j=1}^n (\adm_{X_j} + \adm_{Y_j}),
\end{align*}
which has length at most $\frac{\pi}{\tau}$. $N$ uses of $\Phi$ at time $t$ following a certain probability is defined as follow:
\begin{align*}
   \Phi_t^N:= \frac{1}{\sum_{k=0}^N \frac{(2nt)^k}{k!}} \sum_{k=0}^N \frac{(2nt)^k}{k!} \Phi^k.
\end{align*}
Using the Poisson estimate in Lemma \ref{key lemma: Poisson estimate}, we have
\begin{align*}
    \|T_t - \Phi_t^N \|_{\diamond} \le 2e^{-2nt} e^{(N+1)(1-\ln(N+1))} (2nt)^{N+1},\quad 2nt < N+1. 
\end{align*}
Note that $(1-\ln(N+1))$ tend to $-\infty$ as $N$ tends to infinity. If we choose $N = nC_{\alpha,\beta} $ with $C_{\alpha,\beta}$ reasonably large(the explicit form is given in Proposition \ref{main: Poisson}), we have
\begin{equation}
    2e^{-2nt} e^{(N+1)(1-\ln(N+1))} (2nt)^{N+1} \le \alpha t^{\beta},\quad t \le (\frac{2}{\alpha})^{1/\beta}.
\end{equation}
The length of $\Phi_t^N$ is given by $C_{\alpha,\beta}\pi \frac{n}{\tau} $ thus $\mm_{\alpha,\beta} \le C_{\alpha,\beta}\pi \frac{n}{\tau}$ which scales like $\frac{n}{\tau}$. Thus the Poisson approach provides a tight upper bound in terms of $n$.

\begin{rmk}
    Using the result in Corollary \ref{main result: lower bound fixed time and precision}, we have a lower bound 
    \begin{equation}
        l^{\mc U}_{\delta}(T_t) \ge \Omega(nt),
    \end{equation}
    and the upper bound provided by the above Poisson approach is 
    \begin{equation}
        l^{\mc U}_{\delta}(T_t) \le \mc O\left(\log(\frac{1}{\delta})nt\right).
    \end{equation}
    In Appendix \ref{app:hamsim-for-pauli-noise}, using one ancilla qubit, a factor of $\log(\frac{1}{\delta})$ can be removed. Thus lower bound and upper bound are both given by $\Theta(nt)$ using the result in Section \ref{sec:non-unital}.
\end{rmk}

\section{Estimates for the uniform maximal simulation cost of non-unital quantum dynamics.}\label{sec:non-unital}

In this section, we consider the simulation of non-unital quantum dynamics on the system $\mc H_A$. Unlike the unital case, achieving an approximation in this case with an error of $O\left(t^{\beta}\right)$ with $\beta > 1$ is not possible if solely considering a set of accessible unitary gates without an ancillary quantum environment. We summarize the no-go result in the following proposition. 
\begin{prop}\label{nogo: non-unital}
 Let $T_t$ be a completely positive and trace preserving semigroup but non-unital, meaning $T_t(I_A)\neq I_A$, where $I_A$ is the identity operator on $\mc H_A$. Then for any random unitary channel $\Phi:\mb B(\mc H_A) \to \mb B(\mc H_A)$, there exists $t_0>0$ and $C>0$ such that 
   \begin{equation}
       \left\|T_t - \Phi\right\|_{\diamond} \ge Ct,\quad \forall t\in [0,t_0].
   \end{equation}
\end{prop}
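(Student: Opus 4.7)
The plan is to exploit the structural rigidity that every random unitary channel is unital, in contrast to $T_t$, whose non-unitality forces a first-order deviation from the identity when evaluated on $I_A$. First, I would note that $\adm_u(I_A) = u I_A u^* = I_A$ for every unitary $u$, and the same stability carries through products and convex combinations; hence any $\Phi = \int_{\mc U^m} \adm_{u_1}\cdots\adm_{u_m}\, d\mu^m$ satisfies $\Phi(I_A) = I_A$. This pins down $\Phi(I_A)$ uniformly in $\Phi$, and is the only property of random unitary channels I will need.

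Next, I would convert this rigidity into a diamond-norm lower bound by testing $T_t - \Phi$ against the maximally mixed state $I_A/d$, with $d = \dim \mc H_A$. Since $I_A/d$ is a unit trace density operator, one has
\[
\|T_t - \Phi\|_{\diamond} \;\ge\; \|T_t - \Phi\|_{1 \to 1} \;\ge\; \|(T_t - \Phi)(I_A/d)\|_1 \;=\; \frac{1}{d}\,\|T_t(I_A) - I_A\|_1\,,
\]
and the right-hand side is independent of $\Phi$. Writing $T_t = \exp(tL)$ for the (bounded, finite-dimensional) Lindblad generator $L$, the hypothesis $T_t(I_A) \neq I_A$ is equivalent to $L(I_A) \neq 0$, and Taylor expansion yields $T_t(I_A) - I_A = t L(I_A) + R(t)$ with $\|R(t)\|_1 = \mc O(t^2)$ as $t\to 0^+$.

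To conclude, I would set $c_0 := \|L(I_A)\|_1 > 0$ and choose $t_0$ small enough so that $\|R(t)\|_1 \le \tfrac{1}{2} t c_0$ for every $t \in [0, t_0]$; this gives $\|T_t(I_A) - I_A\|_1 \ge \tfrac{1}{2} t c_0$ and the desired bound with $C = c_0/(2d)$. The argument is essentially obstruction-free, as the only moving part is the choice of $t_0$ absorbing the higher-order terms of the Taylor expansion, which is routine since $L$ is bounded. If one later wants to extend the statement beyond norm-continuous semigroups, the same argument should work by replacing the Taylor expansion with the weaker estimate $\|T_t(I_A) - I_A\|_1 \ge c_0 t + o(t)$ guaranteed by $\sigma$-weak continuity at $0$.
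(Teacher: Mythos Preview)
Your argument is correct and follows essentially the same route as the paper: use that every random unitary channel fixes $I_A$, lower-bound $\|T_t-\Phi\|_\diamond$ by evaluating at the (normalized) identity, and Taylor-expand $T_t(I_A)-I_A=tL(I_A)+\mc O(t^2)$ with $L(I_A)\neq 0$. Your version is in fact a bit cleaner, since you correctly insert the factor $1/d$ when testing on $I_A/d$.
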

The above proposition shows that  if we were to proceed with the same approach as in the unital case, the diamond error of the approximating channel is at least linear in $t$ and thus $\beta\not> 1$.
\begin{proof}
   Because $T_t = \exp(tL)$ is non-unital, $L(I_A)\neq 0$. Thus $\|L(I)\|_1 \neq 0$. Note that for random unitary channel $\Phi$, we always have $\Phi(I_A) = I_A$. This implies
   \begin{align*}
       \|T_t(I_A) - \Phi(I_A)\|_1 & = \left\|\sum_{k\ge 1} \frac{t^kL(I_A)^k}{k!}\right\|_1 \ge t \|L(I_A)\|_1 - \sum_{k\ge 2} \frac{t^k \|L(I_A)\|_1^k}{k!} \\
       & \ge  t \|L(I_A)\|_1\big(1 -  t \|L(I_A)\|_1\exp( t \|L(I_A)\|_1)\big) \ge \frac{t}{2}\|L(I_A)\|_1.
   \end{align*}
   for any $t\leq t_0:=\frac{1}{c(d) \|L(I)\|_1}$ where $c(d)$ is a constant depending on the dimension of $\mc H_A$. Since $\|T_t(I) - \Phi(I)\|_{\diamond}\geq \|T_t(I) - \Phi(I)\|_1$, this concludes the proof.

\end{proof}

\subsection{Environment-assisted simulation cost}\label{sec:environment-assisted simulation cost}
According to Proposition \ref{nogo: non-unital}, it is not possible to approximate non-unital quantum dynamics using only unitaries acting on the quantum system $\mathcal{H}_A$. To approximate non-unital dynamics, we need to include two additional steps beyond the unitary approximation: Encoding and decoding via an assisting environment. For simplicity, the assisting environment system $\mc H_E$ is denoted as $E$ and $\mc H_A$ is denoted as $A$.

Our general simulation protocol for non-unital dynamics is as follows:
\begin{itemize}
    \item An encoding scheme from $A$ to $A\otimes E$, where $E$ is another quantum system which we call environment. 
    
    In this paper, we mainly use the encoding scheme given by the quantum channel $\mathcal{E}(\rho) := \rho \otimes \ketbra{0^k}$, where $k$ is the size of $E$. This encoding scheme is commonly used in previous works on Lindblad simulation~\cite{cleve_et_al:LIPIcs.ICALP.2017.17}. The value of $k$ should be chosen appropriately to achieve a high-order simulation scheme~\cite{PRXQuantum.5.020332}.

    \item A set of accessible unitaries $\mc U $ acting on $A\otimes E$. Analogous to the simulation of unitary dynamics, we assume that the accessible unitaries $\mathcal{U}$ acting on $A\otimes E$ are straightforward and easily implemented, such as those given by the time evolution of Hamiltonians.  
    \item A decoding scheme from  $A\otimes E$ to $A$, modelled as a quantum channel $\mc D: \mb B(A \otimes E) \to \mb B(A)$. In this paper, we mainly use the decoding scheme given by the partial trace $\Tr_E$, which can be implemented by measuring and resetting the ancilla qubits of $E$.
\end{itemize}
The quantum channel that implements the above three steps is summarized in Figure \ref{fig:a2}. Although we perform a measurement in the decoding process, it is a trace-out operation and does not reduce the success probability.

\begin{figure}[bthp]
\color{black}
\begin{center}
\begin{minipage}{.7\textwidth}
\begin{center}
\begin{tikzpicture}[y=-1cm,scale=1.5]
    \draw (4,0) -- (0,0) node[left] {$\lvert0\rangle_E$};
    \draw (4,-0.05) -- (0,-0.05) ;
    \draw (4,0.05) -- (0,0.05) ;
    \draw (4.5,1) -- (0,1) node[left] {$\lvert\psi\rangle_A$};
\draw[fill=white] (0.35,-0.25) rectangle  (1.35, 1.25);
\node at (.64, .5)   {$\mathrm{enc}$};
    \node at (1.05, .45)   {\tiny{e.g.}};
    \draw[fill=blue!20] (.9, -.25)   decorate [decoration={snake, amplitude=0.5mm, segment length=12mm}] { --(.9, 1.25)} -- (1.35, 1.25) -- (1.35, -.25) ;
\draw (.93,-0.05) -- (1.33,-0.05) ;
    \draw (.935,0) -- (1.33,0);
    \draw (.93,0.05) -- (1.33,0.05) ;
    \draw (.91,1) -- (1.33,1) ;
    \draw[fill=white] (1.6,-0.25) rectangle node {$U_1$} (2.1, 1.25);
    \draw[fill=white] (3.05,-0.25) rectangle node {$U_m$} (3.55, 1.25);
    \draw[draw=none,fill=white] (2.35, -.1)  rectangle node {$\cdots$}  (2.8, .1);
    \draw[draw=none,fill=white] (2.35, 0.9)  rectangle node {$\cdots$}  (2.8, 1.1);
    \draw[fill=white] (3.8, -.25) rectangle  (4.25, .25);
        \draw (3.85,0) arc[start angle=-125, end angle=-55, radius=.3];
        \draw[>=stealth,->] (4.0, .0255) -- (  4.195 , -.1108  );
    \draw[fill=white] (3.8, .75) rectangle node {{$\scriptstyle{\mathrm{dec}}$}} (4.25, 1.25);
    \draw (4.0,.25) -- (4.0, .75);
    \draw (4.05,.25) -- (4.05, .75);

\end{tikzpicture}
\end{center}
\end{minipage}
\end{center}
\normalcolor
\caption{Quantum circuit for non-unital quantum dynamics simulation (a single step).}
\label{fig:a2}
\end{figure}
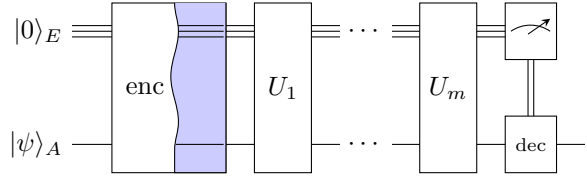

Based on the above simulation protocal, we modify Definition \ref{eqn:convex_depth} and define the simulation cost of non-unital quantum dynamics as follows:
\begin{defi}
    Suppose $\mc E: \mb B(A) \to \mb B(A\otimes E)$ and 
    $\mc D: \mb B(A\otimes E) \to \mb B(A)$ are fixed. Then the simulation cost of non-unital $T_t$, with accessible unitary $\mc U_{A\otimes E}$ acting on $A\otimes E$ is defined as 
    \begin{equation}
        \widehat{l}^{\mc U_{A\otimes E}}_{\delta}(T_t) = \inf\left\{\sum_{j=1}^M m_j\middle | \exists \mu_j \in \mc P(\mc U_{A\otimes E}^{m_j}), \left\|T_t - \prod_{j=1}^M \mc D \circ \Phi_{\mu_j} \circ \mc E\right\|_{\diamond} < \delta \right\}\,,
    \end{equation}
    where $m_j$ is the circuit depth of $\Phi_{\mu_j}$.
\end{defi}
Note that the above definition provides an upper bound for the circuit depth defined in Equation \ref{eqn:convex_depth} when there is no environment. Specifically, when there is no environment, encoding and decoding are trivial $\mathcal{E} = \mathcal{D} = \text{id}$. Then, $\prod_{j=1}^M \mathcal{D} \circ \Phi_{\mu_j} \circ \mathcal{E}$ is equal to $\Phi_{\mu}$ if one takes $\mu = \mu_1 \times \cdots \times \mu_M$ and each $\mu_j$ with length $m_j$.


Similar to the uniform maximum simulation length in the unital case from Definition \ref{def:M}, we aim to approximate $T_t$ with an error of $O(t^{\beta})$ for $\beta > 1$. We define the modified maximal uniform length:
\begin{defi}
    For $\alpha >0,\ \beta > 1$, and non-unital quantum dynamics $T_t$ and a set of accessible unitaries $\mc U_{A\otimes E}$, define the maximum uniform length with environment $\widehat{\mm}_{\alpha,\beta}$ by 
    \begin{equation}
        \widehat{\mm}_{\alpha,\beta}(\{T_t\}_{t\ge 0}):= \sup_{0\le t \le (\frac{2}{\alpha})^{1/\beta}}  \widehat{l}^{\mc U_{A\otimes E}}_{\alpha t^{\beta}}(T_t).
    \end{equation}
\end{defi}
When the context is unambiguous, we omit the dependence of $\{T_t\}_{t\ge 0}$ for simplicity. We refer the reader to \cite{PRXQuantum.5.020332} for a general approach to construct the accessible unitary set given by a single Hamiltonian.

\subsection{Lower bound via environment-assisted resource-dependent complexity}

To establish the lower bound of $\widehat{\mm}_{\alpha,\beta}$ for non-unital quantum dynamics, we introduce a compatible Lipschitz complexity that incorporates the environment. One might initially think to define Lipschitz complexity directly on the extended Hilbert space $A \otimes E$ by tensoring with identity on the environment. However, this direct extension is not suitable in our case, as it does not account for the encoding and decoding processes. Specifically, we select a resource set $S_{A\otimes E}$, a set of operators on the combined system $A\otimes E$, which is compatible with the accessible unitary set $\mc{U}_{AE}$.

\begin{defi}
    Suppose $\mc E: \mb B(A) \to \mb B(A\otimes E)$ and 
    $\mc D: \mb B(A\otimes E) \to \mb B(A)$ are fixed encoding and decoding schemes. For a resource set $S_{A\otimes E} \subseteq \mb B(A\otimes E)$ and a quantum channel $\Phi: \mb B(A) \to \mb B(A)$, we define 
    \begin{equation}
        \widehat{C}_{S_{A\otimes E}}(\Phi): = \|(\mc E (\Phi - id_A) \mc D)^*: (\mb B(A\otimes E), \||\cdot|\|_{S_{A\otimes E}}) \to \mb B(A\otimes E)\|_{\rm op}.
    \end{equation}
    Similarly, we can define the complete version $ \widehat{C}^{cb}_{S_{A\otimes E}}(\Phi)$ by defining the norm of arbitrary amplifications as in \eqref{eqn:C_cb_phi}. 
\end{defi}
Our goal is to show a crucial lemma as in Lemma \ref{dd} that allows us to relate the Lipschitz complexity to the lower bound of the length and the expected length given the resource set expressed through the Lipschitz constant, under resonable assumptions on $\mc E, \mc D$, the resource set $S_{A\otimes E}$, and accessible unitaries $\mc U_{A\otimes E}$. Then, one can establish a similar lower bound as in Section \ref{sec:lower_bound}.

First, we demonstrate that $\widehat{C}_{S_{A\otimes E}}$ and $\widehat{C}^{cb}_{S_{A\otimes E}}$ shares the same fundamental properties as $C_S$ and $C^{cb}_S$. Since the proof relies solely on the fact that the definition can be rewritten in the form of norms, the argument proceeds similarly to the case without environment.
\begin{lemma} Given any resource set $S_{A\otimes E} \subseteq \mb B(A\otimes E)$, $\widehat{C}_{S_{A\otimes E}}$ and $\widehat{C}^{cb}_{S_{A\otimes E}}$ satisfy:
\begin{enumerate}
	\item $\widehat{C}^{cb}_{S_{A\otimes E}}(id_A)=\widehat{C}_{S_{A\otimes E}}(id)=0$. 
	\item Subadditivity under concatenation: 
 \[
 \widehat{C}_{S_{A\otimes E}}(\Phi \Psi) \le \widehat{C}_{S_{A\otimes E}}(\Phi)+
 \widehat{C}_{S_{A\otimes E}}(\Psi),\quad \widehat{C}^{cb}_{S_{A\otimes E}}(\Phi \Psi) \le
 \widehat{C}^{cb}_{S_{A\otimes E}}(\Phi)+\widehat{C}^{cb}_{S_{A\otimes E}}(\Psi)
 \]
 for any quantum channels $\Phi,\Psi : \mb B(A) \to \mb B(A)$.
	\item Convexity: For any probability distribution $\{p_i\}_{i\in I}$ and any quantum channels $\{\Phi_i\}_{i \in I}$ acting on the operators on $A$, we have 
\begin{equation}
\widehat{C}^{cb}_{S_{A\otimes E}}\left(\sum_{i\in I} p_i\Phi_i\right) \le \sum_{i\in I} p_i\widehat{C}^{cb}_{S_{A\otimes E}}\left(\Phi_i\right),\quad \widehat{C}_{S_{A\otimes E}}\left(\sum_{i\in I} p_i\Phi_i\right) \le \sum_{i\in I} p_i\widehat{C}_{S_{A\otimes E}}\left(\Phi_i\right).
\end{equation} 
\end{enumerate}
\end{lemma}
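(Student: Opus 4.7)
The plan is to recycle the three-step proof used for the unital Lipschitz complexities $C_S$ and $C^{cb}_S$ stated earlier, exploiting that
\[
\widehat{C}_{S_{A\otimes E}}(\Phi) = \|(\mathcal{E}(\Phi - id_A)\mathcal{D})^*\|_{\mathrm{op}},
\]
where the operator norm is taken from $(\mathbb{B}(A\otimes E), \||\cdot|\|_{S_{A\otimes E}})$ to $(\mathbb{B}(A\otimes E), \|\cdot\|_\infty)$. All three properties should then follow from the standard toolbox for operator norms (triangle inequality, sublinearity, submultiplicativity) together with contractivity of unital completely positive (UCP) maps in $\|\cdot\|_\infty$. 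I will present the argument for the scalar version $\widehat{C}_{S_{A\otimes E}}$; the completely bounded version $\widehat{C}^{cb}_{S_{A\otimes E}}$ is obtained verbatim by amplifying with $\mathbb{M}_n$ and repeating the estimates uniformly in $n$, since none of the steps interact with the matrix block structure.

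Properties (1) and (3) are formal. For (1), $\Phi = id_A$ makes $\mathcal{E}(\Phi - id_A)\mathcal{D}$ the zero map, so its dual has operator norm zero. For (3), using $\sum_i p_i = 1$ one has $\sum_i p_i \Phi_i - id_A = \sum_i p_i (\Phi_i - id_A)$; pulling this through the dual and using sublinearity of the operator norm yields $\widehat{C}_{S_{A\otimes E}}(\sum_i p_i \Phi_i) \leq \sum_i p_i \widehat{C}_{S_{A\otimes E}}(\Phi_i)$.

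The main step, and the main obstacle, is subadditivity (Property (2)). Following the unital template I would decompose
\[
\mathcal{E}(\Phi\Psi - id_A)\mathcal{D} = \mathcal{E}(\Phi - id_A)\Psi\mathcal{D} + \mathcal{E}(\Psi - id_A)\mathcal{D}
\]
and dualize to obtain
\[
(\mathcal{E}(\Phi\Psi - id_A)\mathcal{D})^* = \mathcal{D}^*\Psi^*(\Phi^* - id_A)\mathcal{E}^* + \mathcal{D}^*(\Psi^* - id_A)\mathcal{E}^*.
\]
The second summand is precisely $(\mathcal{E}(\Psi - id_A)\mathcal{D})^*$, whose operator norm is $\widehat{C}_{S_{A\otimes E}}(\Psi)$. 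The difficulty is the cross-term $\mathcal{D}^*\Psi^*(\Phi^* - id_A)\mathcal{E}^*$, which is not syntactically of the form $\mathcal{D}^*(\cdots)\mathcal{E}^*$ required to match $\widehat{C}_{S_{A\otimes E}}(\Phi)$. The way out is to use that the encoding/decoding scheme employed in this paper, $\mathcal{E}(\rho) = \rho\otimes |0^k\rangle\langle 0^k|$ and $\mathcal{D} = \mathrm{Tr}_E$, satisfies $\mathcal{D}\circ \mathcal{E} = id_A$, equivalently $\mathcal{E}^*\mathcal{D}^* = id_A$, and to insert this identity between $\Psi^*$ and $(\Phi^* - id_A)$:
\[
\mathcal{D}^*\Psi^*(\Phi^* - id_A)\mathcal{E}^* = \bigl[\mathcal{D}^*\Psi^*\mathcal{E}^*\bigr] \circ \bigl[\mathcal{D}^*(\Phi^* - id_A)\mathcal{E}^*\bigr].
\]
The right factor is, by definition, bounded in the relevant mixed operator norm by $\widehat{C}_{S_{A\otimes E}}(\Phi)$, while the left factor equals $(\mathcal{E}\Psi\mathcal{D})^*$, the dual of the CPTP channel $\mathcal{E}\Psi\mathcal{D}$, hence UCP and a contraction in $\|\cdot\|_\infty$. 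Submultiplicativity of operator norms then bounds the cross-term by $\widehat{C}_{S_{A\otimes E}}(\Phi)$, yielding $\widehat{C}_{S_{A\otimes E}}(\Phi\Psi) \leq \widehat{C}_{S_{A\otimes E}}(\Phi) + \widehat{C}_{S_{A\otimes E}}(\Psi)$, and the cb case is identical after tensoring with $id_{\mathbb{M}_n}$.
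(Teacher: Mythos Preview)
Your proof is correct and follows exactly the norm-based template the paper intends: the paper does not give a detailed argument here but simply remarks that ``the proof relies solely on the fact that the definition can be rewritten in the form of norms, [so] the argument proceeds similarly to the case without environment.'' Your treatment of (1) and (3) is routine and matches this.

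For (2) you actually go further than the paper. The paper's one-line justification suggests that subadditivity is a pure norm fact, but as you observe, the cross term $\mathcal{D}^*\Psi^*(\Phi^*-id_A)\mathcal{E}^*$ is not of the form $\mathcal{D}^*(\cdots)\mathcal{E}^*$, and to factor it as $[\mathcal{D}^*\Psi^*\mathcal{E}^*]\circ[\mathcal{D}^*(\Phi^*-id_A)\mathcal{E}^*]$ one must insert $\mathcal{E}^*\mathcal{D}^* = id_A$, i.e.\ invoke \textbf{Assumption (O)}. This is implicitly in force throughout Section~5 (the paper fixes $\mathcal{E}(\rho)=\rho\otimes|0^k\rangle\langle 0^k|$ and $\mathcal{D}=\mathrm{Tr}_E$), so your use of it is legitimate, but it is a genuine extra ingredient compared to the unital case that the paper's remark does not flag. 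Your argument that $(\mathcal{E}\Psi\mathcal{D})^*$ is UCP and hence an $\|\cdot\|_\infty$-contraction then closes the estimate cleanly.
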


Similar to Remark \ref{rem:continuity}, we show the following continuity argument for $\widehat{C}$.
\begin{lemma}\label{lem:continuity}
    Suppose $E_{\rm fix}^A$ is a conditional expectation on the system $A$ and that $\Phi,\Psi: \mb B(A) \to \mb B(A)$ are two quantum channels with 
    \begin{equation}\label{eqn:condition_A}
       \Phi^* \circ E_{\rm fix}^A = \Psi^* \circ E_{\rm fix}^A = E_{\rm fix}^A.
    \end{equation}
    Then, we have 
    \begin{equation}
        \left|\widehat{C}^{cb}_{S_{A\otimes E}}(\Phi)-\widehat{C}^{cb}_{S_{A\otimes E}}(\Psi)\right|\le \widehat{C}^{cb}_{S_{A\otimes E}}\left(\left(E_{\rm fix}^A\right)^*\right)\|\Phi - \Psi\|_{\diamond}.
    \end{equation}
\end{lemma}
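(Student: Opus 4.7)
The plan is to imitate the proof of Lemma \ref{dd} (and the continuity observation in Remark \ref{rem:continuity}) in the environment-assisted setting, with the key new wrinkle being the careful handling of the encoding/decoding maps $\mathcal{E}, \mathcal{D}$. First, I would rewrite $\widehat{C}^{cb}_{S_{A\otimes E}}(\Phi) = \|\mathcal{D}^*(\Phi^* - id_A)\mathcal{E}^*\|$, where the relevant norm is the complete operator norm from $(\mathbb{B}(A\otimes E), \||\cdot|\|_{S_{A\otimes E}})$ to $\mathbb{B}(A\otimes E)$ equipped with the operator norm. Since both $\widehat{C}^{cb}(\Phi)$ and $\widehat{C}^{cb}(\Psi)$ are norms of analogous linear maps (they share the common summand $\mathcal{D}^* \circ (-id_A) \circ \mathcal{E}^*$), the reverse triangle inequality gives
\[
\big|\widehat{C}^{cb}_{S_{A\otimes E}}(\Phi) - \widehat{C}^{cb}_{S_{A\otimes E}}(\Psi)\big| \le \|\mathcal{D}^*(\Phi^* - \Psi^*)\mathcal{E}^*\|_{\text{Lip}\to\text{op}, cb}\,.
\]

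Next, I would use the assumption \eqref{eqn:condition_A} by taking adjoints: $\Phi^*E_{\rm fix}^A = \Psi^*E_{\rm fix}^A = E_{\rm fix}^A$ implies $(\Phi^* - \Psi^*)\circ E_{\rm fix}^A = 0$, hence $\Phi^* - \Psi^* = (\Phi^* - \Psi^*)\circ(id_A - E_{\rm fix}^A)$. At this point, the crucial step is to insert the relation $\mathcal{E}^*\circ\mathcal{D}^* = id_A$, which follows from the standard encoding/decoding identity $\mathcal{D}\circ\mathcal{E} = id_A$ for the chosen pair $\mathcal{E}(\rho) = \rho\otimes |0^k\rangle\langle 0^k|$, $\mathcal{D} = \mathrm{Tr}_E$. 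This lets me factor the middle expression as a composition of two endomorphisms of $\mathbb{B}(A\otimes E)$:
\[
\mathcal{D}^*(\Phi^* - \Psi^*)\mathcal{E}^* = \mathcal{D}^*(\Phi^* - \Psi^*)(id_A - E_{\rm fix}^A)\mathcal{E}^* = \big[\mathcal{D}^*(\Phi^* - \Psi^*)\mathcal{E}^*\big]\circ\big[\mathcal{D}^*(id_A - E_{\rm fix}^A)\mathcal{E}^*\big]\,.
\]

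Finally, I would apply submultiplicativity of the induced operator norms across the Lipschitz-to-op and op-to-op settings: the composition maps the Lipschitz-seminormed space to itself through an intermediate operator-normed space, so the full Lip$\to$op cb-norm factors as the op$\to$op cb-norm of the left factor times the Lip$\to$op cb-norm of the right factor. The right factor is exactly $\widehat{C}^{cb}_{S_{A\otimes E}}((E_{\rm fix}^A)^*)$ by definition (sign flips don't affect the norm). The left factor is the usual cb-norm of $\mathcal{D}^*(\Phi^* - \Psi^*)\mathcal{E}^*$, which is bounded by $\|\mathcal{D}^*\|_{cb}\|\Phi^* - \Psi^*\|_{cb}\|\mathcal{E}^*\|_{cb}$; since $\mathcal{D}^*, \mathcal{E}^*$ are unital completely positive (duals of channels), they are cb-contractive, leaving $\|\Phi^* - \Psi^*\|_{cb} = \|\Phi - \Psi\|_\diamond$. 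Combining yields the desired inequality.

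The main obstacle I anticipate is the bookkeeping around Step 3 — choosing the correct factorization so that the Lipschitz structure is carried entirely by the $E_{\rm fix}^A$ factor while the $\Phi - \Psi$ factor lives in the standard cb-world. Inserting $\mathcal{E}^*\mathcal{D}^* = id_A$ rather than, say, trying to split $(id_A - E_{\rm fix}^A)\mathcal{E}^*$ directly is what rescues the argument: otherwise one ends up either with a norm on a map whose target is $\mathbb{B}(A)$ rather than $\mathbb{B}(A\otimes E)$ (preventing identification with $\widehat{C}^{cb}$), or with a useless fixed-point inequality.
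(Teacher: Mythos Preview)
Your proposal is correct and follows essentially the same route as the paper: both arguments reduce to the factorization $\mc D^*(\Phi^*-\Psi^*)(id_A-E_{\rm fix}^A)\mc E^* = \big[\mc D^*(\Phi^*-\Psi^*)\mc E^*\big]\circ\big[\mc D^*(id_A-E_{\rm fix}^A)\mc E^*\big]$ obtained by inserting $\mc E^*\mc D^*=id_A$, followed by submultiplicativity and cb-contractivity of $\mc E^*,\mc D^*$. The only cosmetic difference is that you begin with the reverse triangle inequality and then factor, whereas the paper first decomposes $\Phi-id_A=(id_A-(E_{\rm fix}^A)^*)(\Phi-\Psi)+(\Psi-id_A)$, bounds one side, and then swaps $\Phi,\Psi$; you are also more explicit than the paper about the role of $\mc D\circ\mc E=id_A$, which the paper uses silently in its submultiplicativity step.
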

\begin{proof}
We notice that 
\begin{align*}
    \Phi - id_A = \Phi - \Psi + \Psi- id_A = (id_A - \left(E_{\rm fix}^A\right)^*)(\Phi - \Psi) + \Psi- id_A\,.
\end{align*}
By the definition, subadditivity, and submultiplicativity of semi-norms, we have 
\begin{align*}
    \widehat{C}^{cb}_{S_{A\otimes E}}(\Phi)& = \|(\mc E (\Phi - id_A) \mc D)^*: (\mb B(A\otimes E), \||\cdot|\|_{S_{A\otimes E}}) \to \mb B(A\otimes E)\|_{cb} \\
    & = \|(\mc E \big((id_A - \left(E_{\rm fix}^A\right)^*)(\Phi - \Psi) + \Psi- id_A \big) \mc D)^*: (\mb B(A\otimes E), \||\cdot|\|_{S_{A\otimes E}}) \to \mb B(A\otimes E)\|_{cb} \\
    & \le \widehat{C}^{cb}_{S_{A\otimes E}}\left(\left(E_{\rm fix}^A\right)^*\right) \|\mc E(\Phi - \Psi)\mc D\|_{\diamond} + \|(\mc E (\Psi- id_A) \mc D)^*: (\mb B(A\otimes E), \||\cdot|\|_{S_{A\otimes E}}) \to \mb B(A\otimes E)\|_{cb} \\
    & \le \widehat{C}^{cb}_{S_{A\otimes E}}(\Psi) + \widehat{C}^{cb}_{S_{A\otimes E}}\left(\left(E_{\rm fix}^A\right)^*\right)\|\Phi - \Psi\|_{\diamond}.
\end{align*}
Exchanging the order of $\Phi$ and $\Psi$, we can also show
\[
\widehat{C}^{cb}_{S_{A\otimes E}}(\Psi)\leq \widehat{C}^{cb}_{S_{A\otimes E}}(\Phi) + \widehat{C}^{cb}_{S_{A\otimes E}}\left(\left(E_{\rm fix}^A\right)^*\right)\|\Phi - \Psi\|_{\diamond}\,,
\]
which concludes the proof.
\end{proof}

To establish the lower bound like in Section \ref{sec:lower_bound}, we need further assumptions on $S_{A\otimes E}$ and $\mc U_{A\otimes E}$. Similar to Definition \ref{compatible: U and S}, we need the following definition of compatible $\mc U_{A\otimes E}$ and $S_{A\otimes E}$ under the encoding and decoding schemes:
\begin{defi}[$\widehat{D}$-compatibility with environment]\label{compatible: U and S with environment}
    Suppose $\mc E: \mb B(A) \to \mb B(A\otimes E)$ and 
    $\mc D: \mb B(A\otimes E) \to \mb B(A)$ are fixed encoding and decoding schemes. For an accessible unitary set $\mc U_{A\otimes E}$ acting on $A\otimes E$ and a resource set $S_{A\otimes E} \subseteq \mb B(A\otimes E)$, we say that $\mc U_{A\otimes E}$ and $S_{A\otimes E}$ are $\widehat{D}$-compatible with $\widehat{D}>0$ if for any $u\in \mc U_{A\otimes E}$, we have 
    \begin{equation}
        \widehat{C}^{cb}_{S_{A\otimes E}}(\mc D \circ \adm_u \circ \mc E) \le \widehat{D}.
    \end{equation}
\end{defi}
\begin{tcolorbox}
\textbf{Assumption (O)}: $\mc D \circ \mc E = id_A$.

\textbf{Assumption (B')}: There exists a conditional expectation $E_{\rm fix}^A$ acting on system $A$ such that for any $u\in \mc U_{A\otimes E}$, we have $(\mc D \circ \adm_u \circ \mc E)^* \circ E_{\rm fix}^A  = E_{\rm fix}^A$. 

\textbf{Assumption (C')}: $\mc U_{A\otimes E}$ and $S_{A\otimes E}$ are $\widehat{D}$-compatible.
\end{tcolorbox}
We note that Assumptions \textbf{(B'),(C') }is similar to Assumptions \textbf{(B),(C)} in the unital case. The condition $(\mc D \circ \adm_u \circ \mc E)^* \circ E_{\rm fix}^A = E_{\rm fix}^A$ is used to ensure that the approximation channel also satisfies property \eqref{eqn:condition_A}. The $\widehat{D}$-compatibility condition is used to ensure the boundedness property in Lemma \ref{lem:continuity}.


\begin{rmk}
    We provide some sufficient conditions for the assumptions \textbf{(O)}, \textbf{(B'),(C')} to hold. To argue about existence of the conditional expectation $E_{\rm fix}^A$, we can choose it as the conditional expectation onto the fixed point algebra of $T_t^*$ when it is a true von Neumann algebra. A sufficient condition for the fixed point set to be a von Neumann algebra is given by the following.
\end{rmk}
\begin{prop}\label{characterization: fixed point}
        Suppose $T_t$ has a unique fixed state, or there exists a faithful invariant state for $T_t$, then the fixed point set for $T_t^*$ is a von Neumann algebra.
\end{prop}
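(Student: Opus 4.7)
The plan is to split according to the two hypotheses, after a common setup. Let $N := \{X \in \mb B(\mc H) : T_t^*(X) = X, \pl \forall t \ge 0\}$. Since $T_t$ is trace-preserving and completely positive, each $T_t^*$ is unital and adjoint-preserving, so $N$ is automatically a norm-closed, $*$-closed subspace containing the identity $I$. The only nontrivial thing to establish is that $N$ is closed under multiplication; once that is in hand, $N$ is a unital $*$-subalgebra of $\mb B(\mc H)$ and, in the finite-dimensional setting of the paper, automatically a von Neumann algebra.

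For the unique-invariant-state hypothesis, I will invoke finite-dimensional duality. Let $L$ be the Lindblad generator, so $T_t = \exp(tL)$. Uniqueness of the invariant state means $\ker L$ is one-dimensional (spanned by that state). In finite dimensions, the kernel of a matrix and its Banach-space adjoint have the same dimension (the geometric multiplicity of the eigenvalue $0$ is preserved under transposition), so $\dim N = \dim \ker L^* = \dim \ker L = 1$. Since $I \in N$, this forces $N = \cz\, I$, which is a trivial von Neumann algebra.

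For the faithful-invariant-state hypothesis, the main tool is the Kadison-Schwarz inequality combined with the multiplicative domain characterization of Choi. Let $\rho$ be the faithful invariant state. For any $X \in N$, Kadison-Schwarz applied to the unital CP map $T_t^*$ gives $T_t^*(X^*X) \ge T_t^*(X)^*T_t^*(X) = X^*X$, so $Y := T_t^*(X^*X) - X^*X \ge 0$. Pairing with $\rho$ and using $T_t(\rho) = \rho$,
\begin{equation}
\Tr(\rho Y) = \Tr(\rho\, T_t^*(X^*X)) - \Tr(\rho\, X^*X) = \Tr(T_t(\rho)\, X^*X) - \Tr(\rho\, X^*X) = 0.
\end{equation}
Faithfulness of $\rho$ then forces $Y = 0$, i.e.\ equality in Kadison-Schwarz. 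Hence $X$ lies in the multiplicative domain of $T_t^*$ for every $t \ge 0$. By the standard multiplicative domain lemma, this upgrades to $T_t^*(XZ) = T_t^*(X)\, T_t^*(Z)$ and $T_t^*(ZX) = T_t^*(Z)\, T_t^*(X)$ for arbitrary $Z \in \mb B(\mc H)$; specializing $Z \in N$ yields $T_t^*(XZ) = XZ$, so $N$ is closed under products.

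The main obstacle is the second case: carefully justifying that equality in Kadison-Schwarz really follows from invariance plus faithfulness, and then correctly invoking the multiplicative-domain lemma to promote that equality to genuine bilinear multiplicativity of $T_t^*$ on $N$. Both steps are classical (essentially Frigerio's theorem on the fixed point algebra of a quantum Markov semigroup with a faithful stationary state), but they are the only nontrivial ingredients and should be cited explicitly.
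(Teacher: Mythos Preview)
Your proof is correct. The paper's own proof, by contrast, is a pure citation: it invokes \cite[Theorem~2]{burgarth2013ergodic} for the unique-fixed-state case (concluding $N=\cz I$ directly) and \cite[Proposition~8]{2016structure} for the faithful-invariant-state case (which actually identifies the fixed-point algebra as the commutant $\{V_j,V_j^*,H\}'$). Your route is more self-contained and elementary---dimension counting via $\dim\ker L=\dim\ker L^*$ for case~1, and the classical Frigerio argument (Kadison--Schwarz plus Choi's multiplicative-domain characterisation) for case~2---while the cited references buy the extra information of an explicit description of the algebra, which goes beyond what the proposition asserts. One small gap worth closing in case~1: the claim ``unique invariant state $\Rightarrow \dim\ker L=1$'' is true but not tautological, since $\ker L$ could a~priori contain traceless elements; the quickest justification is that the ergodic average $E=\lim_{t\to\infty}\tfrac{1}{t}\int_0^t T_s\,ds$ is a CPTP projection onto $\ker L$, so $\ker L$ is spanned by images $E(\sigma)$ of density matrices, each of which is itself an invariant state.
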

    \begin{proof}
        If $T_t$ has a unique fixed state, then by \cite[Theorem 2]{burgarth2013ergodic}, the fixed point set of the adjoint map $T_t^*$ is characterized as 
        \begin{equation}
            \{c I_A: c \in \mb C\},
        \end{equation}
        which is certainly a von Neumann algebra. If there exists a faithful invariant state for $T_t$, then by \cite[Proposition 8]{2016structure}, the fixed point set of $T_t^*$ is given by a von Neumann algebra, specifically
        \begin{equation}
            \{V_j,V_j^*,H\}',
        \end{equation}
        where $L(x) = i[H,x] + \sum_j \big(V_j^*xV_j - \frac{1}{2}(V_j^*V_j x + x V_j^*V_j)\big)$.
    \end{proof} 

Based on the above assumption and the continuity statement in Lemma \ref{lem:continuity}, we demonstrate a result on the relationship of the environment-assisted Lipschitz complexity and the lower bound on the simulation length and the expected length, which is similar to Lemma \ref{dd}.
\begin{lemma}
    Under \textbf{Assumptions (O), (B'),(C')}, for any quantum channel $\Phi$ with $\Phi^* \circ E_{\rm fix}^A = E_{\rm fix}^A$ and any $\delta>0$, we have
    \begin{equation}
        \widehat{C}^{cb}_{S_{A\otimes E}}(\Phi) \le \widehat D \widehat{l}^{\mc U_{A\otimes E}}_{\delta}(\Phi)+\widehat{C}^{cb}_{S_{A\otimes E}}\left(\left(E_{\rm fix}^A\right)^*\right)\delta.
    \end{equation}
\end{lemma}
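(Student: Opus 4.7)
The plan is to mirror the proof of Lemma~\ref{dd} from the unital case, with adjustments for the encoder/decoder wrapping. If $\widehat{l}^{\mc U_{A\otimes E}}_\delta(\Phi)=+\infty$ there is nothing to prove, so assume this length equals some finite $m$. By definition of $\widehat{l}$, there exist $M\ge 1$, nonnegative integers $\{m_j\}_{j=1}^M$ with $\sum_j m_j=m$, and measures $\mu_j\in\mc P(\mc U_{A\otimes E}^{m_j})$ such that
\begin{equation*}
\Psi \;:=\; \prod_{j=1}^{M}\mc D\circ\Phi_{\mu_j}\circ\mc E
\end{equation*}
satisfies $\|\Phi-\Psi\|_{\diamond}<\delta$.

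First I would verify the hypothesis $\Psi^*\circ E_{\rm fix}^A=E_{\rm fix}^A$ required for Lemma~\ref{lem:continuity}. Setting $\Psi_j:=\mc D\circ\Phi_{\mu_j}\circ\mc E$, each $\Psi_j$ is a convex combination of sandwiches $\mc D\circ\adm_{u_1}\circ\cdots\circ\adm_{u_{m_j}}\circ\mc E=\mc D\circ\adm_{u_1\cdots u_{m_j}}\circ\mc E$ with $u_i\in\mc U_{A\otimes E}$. A product-stable reading of Assumption~\textbf{(B')} (which is automatic in the applications of interest, where the admissible set arises from Hamiltonian evolutions) then delivers $\Psi_j^*\circ E_{\rm fix}^A=E_{\rm fix}^A$, and the composition $\Psi^*=\Psi_M^*\circ\cdots\circ\Psi_1^*$ inherits the same fixed-point preservation. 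Invoking Lemma~\ref{lem:continuity} yields
\begin{equation*}
\widehat{C}^{cb}_{S_{A\otimes E}}(\Phi) \;\le\; \widehat{C}^{cb}_{S_{A\otimes E}}(\Psi) + \widehat{C}^{cb}_{S_{A\otimes E}}\!\big((E_{\rm fix}^A)^*\big)\,\delta.
\end{equation*}

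Next I would bound $\widehat{C}^{cb}_{S_{A\otimes E}}(\Psi)\le \widehat{D}\,m$. Subadditivity under concatenation of $\widehat{C}^{cb}$ applied across the $M$ blocks, combined with convexity inside each block, gives
\begin{equation*}
\widehat{C}^{cb}_{S_{A\otimes E}}(\Psi) \;\le\; \sum_{j=1}^{M}\int_{\mc U_{A\otimes E}^{m_j}}\widehat{C}^{cb}_{S_{A\otimes E}}\!\big(\mc D\circ\adm_{u_1}\cdots\adm_{u_{m_j}}\circ\mc E\big)\,d\mu_j.
\end{equation*}
Assumption~\textbf{(C')}, either read directly on the product unitary $u_1\cdots u_{m_j}$ or applied after a telescoping decomposition that iteratively inserts $\mc E\circ\mc D$ between successive factors and exploits Assumption~\textbf{(O)}, bounds each integrand by $m_j\widehat{D}$. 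Summing over $j$ and combining with the continuity inequality produces the desired estimate.

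The principal obstacle is the final single-block estimate. In the unital case, subadditivity of $C^{cb}_S$ cleanly splits a depth-$m_j$ random unitary channel into $m_j$ summands each controlled by $D$; here each admissible-gate bound $\widehat{C}^{cb}(\mc D\circ\adm_u\circ\mc E)\le \widehat{D}$ already carries an encoder/decoder wrapper, so the natural subadditivity only acts across entire encoded blocks. Converting per-gate compatibility into per-block control is precisely where Assumption~\textbf{(O)} together with the product-closed reading of \textbf{(C')} must be invoked, and this is the step at which the non-unital case materially departs from the unital one.
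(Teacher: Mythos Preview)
Your approach is the same as the paper's: invoke Lemma~\ref{lem:continuity} to obtain $\widehat{C}^{cb}_{S_{A\otimes E}}(\Phi)\le \widehat{C}^{cb}_{S_{A\otimes E}}(\Psi)+\widehat{C}^{cb}_{S_{A\otimes E}}\big((E_{\rm fix}^A)^*\big)\,\delta$, then use subadditivity of $\widehat{C}^{cb}$ across the $M$ encoded blocks together with convexity inside each block, reducing to the single-block bound $\widehat{C}^{cb}_{S_{A\otimes E}}(\mc D\circ\adm_{u_1\cdots u_{m_j}}\circ\mc E)\le m_j\widehat D$. The paper asserts this last inequality without further argument.

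Your caution about that final step is justified: Assumption~\textbf{(C')} only controls single-gate sandwiches, and since $\mc E\circ\mc D\neq id_{A\otimes E}$ one cannot split a depth-$m_j$ block into $m_j$ single-gate sandwiches either by $\widehat{C}^{cb}$-subadditivity or by inserting $\mc E\circ\mc D$. What actually closes this in the paper's applications is the inequality derived in the proof of Proposition~\ref{compatibility: non-unital}, valid for \emph{any} unitary $v$ on $A\otimes E$:
\[
\widehat{C}^{cb}_{S_{A\otimes E}}(\mc D\circ\adm_v\circ\mc E)\;\le\; C^{cb}_{S_{A\otimes E}}(\adm_v)+3\,C^{cb}_{S_{A\otimes E}}(\mc E\circ\mc D).
\]
Taking $v=u_1\cdots u_{m_j}$ and using ordinary subadditivity of $C^{cb}_{S_{A\otimes E}}$ on $A\otimes E$ then gives $m_j\tau+6\le m_j(\tau+6)=m_j\widehat D$. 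Your observation that \textbf{(B')} must be read as stable under products is likewise correct, and is automatic in the concrete examples because the range of $E_{\rm fix}^A$ commutes with every $H_j$ and hence with every finite product of the $\exp(itH_j)$.
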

\begin{proof}
The proof is similar as the proof of Lemma \ref{dd}. Suppose there exist $\mu_1\cdots \mu_j \cdots \mu_M$ with $\Phi_{\mu_j} = \mb E_{\mu_j} \adm_{u_1\cdots u_{m_j}} $ such that
\begin{align*}
    \left\|\Phi - \prod_{j=1}^M \mc D\circ \Phi_{\mu_j} \circ \mc E\right\|_{\diamond} < \delta.
\end{align*}
    
Then, similar to the argument in Lemma \ref{dd}, we have 
\begin{align*}
 \widehat{C}^{cb}_{S_{A\otimes E}}(\Phi)& = \left\|(\mc E (\Phi - id_A) \mc D)^*: (\mb B(A\otimes E), \||\cdot|\|_{S_{A\otimes E}}) \to \mb B(A\otimes E)\right\|_{cb} \\
& \le \left\|(\mc E \big(\Phi - \mc D\circ \prod_{j=1}^M \Phi_{\mu_j} \circ \mc E \big) \mc D)^*: (\mb B(A\otimes E), \||\cdot|\|_{S_{A\otimes E}}) \to \mb B(A\otimes E) \right\|_{cb} \\
& + \left\|(\mc E \big( \mc D\circ \prod_{j=1}^M \Phi_{\mu_j} \circ \mc E - id_A) \mc D\big)^*: (\mb B(A\otimes E), \||\cdot|\|_{S_{A\otimes E}}) \to \mb B(A\otimes E)\right\|_{cb} \\
& \le \widehat{C}^{cb}_{S_{A\otimes E}}\left(\left(E_{\rm fix}^A\right)^*\right)\delta + \widehat{C}^{cb}_{S_{A\otimes E}}(\mc D\circ \prod_{j=1}^M \Phi_{\mu_j} \circ \mc E) \\
& \le \widehat{C}^{cb}_{S_{A\otimes E}}\left(\left(E_{\rm fix}^A\right)^*\right) \delta + \widehat{D} \sum_{j=1}^M m_j = \widehat{C}^{cb}_{S_{A\otimes E}}\left(\left(E_{\rm fix}^A\right)^*\right) \delta + \widehat{D} \widehat{l}^{\mc U_{A\otimes E}}_{\delta}(\Phi).
\end{align*}
\end{proof}
\noindent Now, we are ready to state the lower bound for $\widehat{\mm}_{\alpha,\beta}$ which can be derived similarly as Theorem \ref{main: lower bound mixing}. 
\begin{theorem} \label{lower bound: non-unital}
     Under \textbf{Assumptions (O), (B'),(C')}, suppose $\{T_t\}_{t\ge 0}$ is a non-unital semigroup acting on operators on $A$ and $E_{\rm fix}^A$ is a conditional expectation on the system $A$ such that \begin{align*}
        T_t^* \circ E_{\rm fix}^A= E_{\rm fix}^A,\quad \forall t>0.
    \end{align*}
    Define
    \begin{equation}\label{def:induced mixing time non-unital}
        t_{\rm mix}^{S_{A\otimes E}}(\varepsilon):= \inf\left\{t \ge 0: \widehat{C}^{cb}_{S_{A\otimes E}} (T_t) \ge (1-\varepsilon) \widehat{C}^{cb}_{S_{A\otimes E}}\left(\left(E_{\rm fix}^A\right)^*\right)\right\}
    \end{equation}
    If $t_{\rm mix}^{S_{A\otimes E}}(\varepsilon) < +\infty$ is finite, then we have
   \begin{equation}
      \widehat{\mm}_{\alpha,\beta} \ge C_{\alpha,\beta} \frac{\widehat{C}^{cb}_{S_{A\otimes E}}\left(\left(E_{\rm fix}^A\right)^*\right)}{\widehat D},\quad C_{\alpha,\beta}:= \min \left\{ \alpha^{-\frac{1}{\beta -1}} \left( \frac{1-\varepsilon}{3t_{\rm mix}^{S_{A\otimes E}}(\varepsilon)} \right)^{\frac{\beta}{\beta - 1}} , \frac{1-\varepsilon}{2} \right\}.
   \end{equation}
\end{theorem}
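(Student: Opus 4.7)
The plan is to mirror, step for step, the roadmap that produced Theorem \ref{main: lower bound mixing} in the unital setting, replacing $C^{cb}_S$, $\kappa^{cb}(S)$, $D$ and $\mm_{\alpha,\beta}$ by their ancilla-assisted counterparts $\widehat C^{cb}_{S_{A\otimes E}}$, $\widehat\kappa^{cb}:=\widehat C^{cb}_{S_{A\otimes E}}\bigl((E_{\rm fix}^A)^*\bigr)$, $\widehat D$ and $\widehat\mm_{\alpha,\beta}$. All structural ingredients needed for this translation are already in hand from earlier in Section \ref{sec:non-unital}: subadditivity under concatenation and convexity of $\widehat C^{cb}_{S_{A\otimes E}}$, the continuity bound of Lemma \ref{lem:continuity}, and the key inequality $\widehat C^{cb}_{S_{A\otimes E}}(\Phi)\le \widehat D\,\widehat l^{\mc U_{A\otimes E}}_\delta(\Phi)+\widehat\kappa^{cb}\,\delta$ of the preceding lemma, valid for any channel $\Phi$ fixing $E_{\rm fix}^A$.

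First I would prove the non-unital analog of Proposition \ref{ppp}. Writing $T_t=T_{t/n}^{\,n}$ and using subadditivity under concatenation together with the preceding lemma applied to each factor with error $\alpha(t/n)^\beta$, one gets
\begin{equation*}
\widehat C^{cb}_{S_{A\otimes E}}(T_t)\;\le\; n\widehat D\,\widehat\mm_{\alpha,\beta}+n^{1-\beta}\widehat\kappa^{cb}\alpha t^\beta.
\end{equation*}
The right-hand side is then optimised over $n\geq 1$ exactly as in the unital proof, via the auxiliary parameter $b=\bigl(\alpha\widehat\kappa^{cb}/(\widehat D\,\widehat\mm_{\alpha,\beta})\bigr)^{1/\beta}t$ and the two regimes $n=\lfloor b\rfloor$ versus $n=1$, producing a two-branch upper bound on $\widehat C^{cb}_{S_{A\otimes E}}(T_t)$ in terms of $\widehat\mm_{\alpha,\beta}$.

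The second step evaluates this inequality at $t=t_{\rm mix}^{S_{A\otimes E}}(\varepsilon)$, where definition \eqref{def:induced mixing time non-unital} forces $\widehat C^{cb}_{S_{A\otimes E}}(T_t)\geq (1-\varepsilon)\widehat\kappa^{cb}$. Rearranging each branch for $\widehat\mm_{\alpha,\beta}$ yields, respectively, the bounds $\alpha^{-1/(\beta-1)}\bigl((1-\varepsilon)/(3t_{\rm mix}^{S_{A\otimes E}}(\varepsilon))\bigr)^{\beta/(\beta-1)}\widehat\kappa^{cb}/\widehat D$ and $(1-\varepsilon)\widehat\kappa^{cb}/(2\widehat D)$; taking the minimum recovers the claimed constant $C_{\alpha,\beta}$.

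The main obstacle is not this optimisation, which is purely algebraic, but verifying that the preceding lemma legitimately applies to each intermediate $T_{t/n}$ and to each approximating channel. This needs the semigroup invariance $T_{t/n}^*\circ E_{\rm fix}^A=E_{\rm fix}^A$ supplied by the hypothesis, together with \textbf{Assumption (B')} which guarantees that every decoded random-unitary block $\mc D\circ\Phi_{\mu_j}\circ\mc E$ also fixes $E_{\rm fix}^A$, so that the telescoping identity $\Phi^*-\mathrm{id}=(\mathrm{id}-(E_{\rm fix}^A)^*)(\Phi^*-\Psi^*)+(\Psi^*-\mathrm{id})$ underlying Lemma \ref{lem:continuity} is actually available at every stage of the iteration.
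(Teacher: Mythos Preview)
Your proposal is correct and follows exactly the route the paper itself indicates: the text states the theorem ``can be derived similarly as Theorem \ref{main: lower bound mixing}'' and you have accurately reconstructed that derivation by porting Proposition \ref{ppp} to the ancilla-assisted setting via the preceding lemma, then specialising to $t=t_{\rm mix}^{S_{A\otimes E}}(\varepsilon)$. The only quibble is the order of composition in your telescoping identity (in the Heisenberg picture the factor $(\mathrm{id}-E_{\rm fix}^A)$ goes on the right of $\Phi^*-\Psi^*$, cf.\ the proof of Lemma \ref{lem:continuity}), but this is cosmetic and does not affect the argument.
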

With the above theorem, we have a general routine to compute a concrete lower bound of $\widehat{\mm}_{\alpha,\beta}$ for a non-unital semigroup $T_t$ with accessible unitary $\mc U_{A\otimes E}$:
\begin{itemize}
\item We first find the fixed point set of $T_t^*$. Then one can choose $E_{\rm fix}^A $ as the conditional expectation onto the fixed point algebra of $T_t^*$. This ensures
\begin{equation}
    T_t^* \circ E_{\rm fix}^A = E_{\rm fix}^A.
\end{equation}
\item Based on the structure of $\mc U_{A\otimes E}$, we can choose a certain compatible resource set $S_{A\otimes E}$ which satisfies Assumption (B').

\item Finally, applying Theorem \ref{lower bound: non-unital}, we can get a lower bound estimate of $\widehat \mm_{\alpha,\beta}$ if we have an upper bound of $t_{\rm mix}^{S_{A\otimes E}}$ and lower estimate of $\widehat{C}^{cb}_{S_{A\otimes E}}\left(\left(E_{\rm fix}^A\right)^*\right)$.

\noindent One upper bound of $t_{\rm mix}^{S_{A\otimes E}}$ is given by the usual mixing time, as stated in the following proposition.
\end{itemize}
\begin{prop}\label{upper bound: mixing time}
   Under the assumption of Theorem \ref{lower bound: non-unital}, define the usual mixing time of $T_t$ by
   \begin{equation}
    t_{\rm mix}(\varepsilon):= \inf\{t \ge 0: \left\|T_t - \left(E_{\rm fix}^A\right)^*\right\|_{\diamond} \le \varepsilon\}
\end{equation} 
Then $t_{\rm mix}$ is an upper bound of $t_{\rm mix}^{S_{A\otimes E}}$, i.e., 
\begin{equation}
    t_{\rm mix}^{S_{A\otimes E}}(\varepsilon) \le t_{\rm mix}(\varepsilon),\quad \forall \varepsilon \in (0,1).
\end{equation}
\end{prop}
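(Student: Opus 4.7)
The plan is to reduce the inequality $t_{\rm mix}^{S_{A\otimes E}}(\varepsilon) \le t_{\rm mix}(\varepsilon)$ to a direct application of the Lipschitz continuity lemma (Lemma~\ref{lem:continuity}). More precisely, I would show that every time $t$ with $\|T_t - (E_{\rm fix}^A)^*\|_{\diamond} \le \varepsilon$ automatically lies in the set whose infimum defines $t_{\rm mix}^{S_{A\otimes E}}(\varepsilon)$. Taking the infimum over the former set then yields the desired inequality.

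First I would verify that the hypotheses of Lemma~\ref{lem:continuity} are satisfied for the pair $\Phi = T_t$ and $\Psi = (E_{\rm fix}^A)^*$. The condition $T_t^* \circ E_{\rm fix}^A = E_{\rm fix}^A$ is given in the hypothesis of Theorem~\ref{lower bound: non-unital}. For the second channel, note that $\bigl((E_{\rm fix}^A)^*\bigr)^* = E_{\rm fix}^A$, and since conditional expectations are idempotent, $E_{\rm fix}^A \circ E_{\rm fix}^A = E_{\rm fix}^A$. Thus both channels preserve $E_{\rm fix}^A$ under precomposition by their duals, and Lemma~\ref{lem:continuity} applies.

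Now fix any $t \ge 0$ with $\|T_t - (E_{\rm fix}^A)^*\|_{\diamond} \le \varepsilon$. By Lemma~\ref{lem:continuity},
\begin{equation*}
\widehat{C}^{cb}_{S_{A\otimes E}}\bigl((E_{\rm fix}^A)^*\bigr) - \widehat{C}^{cb}_{S_{A\otimes E}}(T_t) \le \widehat{C}^{cb}_{S_{A\otimes E}}\bigl((E_{\rm fix}^A)^*\bigr) \cdot \|T_t - (E_{\rm fix}^A)^*\|_{\diamond} \le \varepsilon\, \widehat{C}^{cb}_{S_{A\otimes E}}\bigl((E_{\rm fix}^A)^*\bigr),
\end{equation*}
which rearranges to $\widehat{C}^{cb}_{S_{A\otimes E}}(T_t) \ge (1-\varepsilon)\, \widehat{C}^{cb}_{S_{A\otimes E}}\bigl((E_{\rm fix}^A)^*\bigr)$. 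By the definition~\eqref{def:induced mixing time non-unital}, this forces $t \ge t_{\rm mix}^{S_{A\otimes E}}(\varepsilon)$. Taking the infimum over all such $t$ yields $t_{\rm mix}(\varepsilon) \ge t_{\rm mix}^{S_{A\otimes E}}(\varepsilon)$, as desired.

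There is essentially no obstacle here: the result is a one-shot consequence of the fact that $\widehat{C}^{cb}_{S_{A\otimes E}}$ is Lipschitz continuous in the diamond norm with Lipschitz constant equal to $\widehat{C}^{cb}_{S_{A\otimes E}}\bigl((E_{\rm fix}^A)^*\bigr)$ on the subset of channels fixing $E_{\rm fix}^A$. The only minor point to double-check is the idempotence of the conditional expectation ensuring the second hypothesis of Lemma~\ref{lem:continuity}.
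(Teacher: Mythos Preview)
Your proof is correct and follows essentially the same route as the paper: both apply Lemma~\ref{lem:continuity} with $\Phi = T_t$ and $\Psi = (E_{\rm fix}^A)^*$, then rearrange the resulting Lipschitz inequality to obtain $\widehat{C}^{cb}_{S_{A\otimes E}}(T_t) \ge (1-\varepsilon)\,\widehat{C}^{cb}_{S_{A\otimes E}}\bigl((E_{\rm fix}^A)^*\bigr)$ whenever $\|T_t - (E_{\rm fix}^A)^*\|_{\diamond} \le \varepsilon$. Your explicit check that $(E_{\rm fix}^A)^*$ satisfies the hypothesis of the lemma via idempotence is a nice clarification that the paper leaves implicit.
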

\begin{proof}
    Note that $T_t^* \circ E_{\rm fix}^A = E_{\rm fix}^A,\quad \forall t>0$, which means we can say that 
    \begin{align*}
        (T_t^* - E_{\rm fix}^A)\circ(id_A - E_{\rm fix}^A) = T_t^* - E_{\rm fix}^A,\quad \forall t>0.
    \end{align*}
    Then apply Lemma \ref{lem:continuity}, by choosing $\Phi = T_t, \Psi = (E_{\rm fix}^A)^*$
    \begin{align*}
        \left|\widehat{C}^{cb}_{S_{A\otimes E}}(T_t)-\widehat{C}^{cb}_{S_{A\otimes E}}((E_{\rm fix}^A)^*)\right|\le \widehat{C}^{cb}_{S_{A\otimes E}}\left((E_{\rm fix}^A)^* \right)\|T_t - (E_{\rm fix}^A)^*\|_{\diamond}.
    \end{align*}
    Therefore, \begin{align*}
        \widehat{C}^{cb}_{S_{A\otimes E}}(T_t) \ge (1-\varepsilon) \widehat{C}^{cb}_{S_{A\otimes E}}((E_{\rm fix}^A)^*)
    \end{align*}
    if $\|T_t - \left(E_{\rm fix}^A\right)^*\|_{\diamond} \le \varepsilon$, thus proving that $t_{\rm mix}^{S_{A\otimes E}}(\varepsilon) \le t_{\rm mix}(\varepsilon)$.
\end{proof}
Before we discuss illustrating examples, we remark that for fixed time and error, we have a similar lower bound as in Section \ref{subsec:lower bound fixed}:
\begin{theorem}\label{main: lower bound non-unital fixed time}
Under the assumption of Theorem \ref{lower bound: non-unital}, we have
    \begin{equation}
        \widehat{l}^{\mc U_{A\otimes E}}_{\alpha \tau^{\beta}}(T_{\tau}) \ge \tau \frac{\widehat{C}^{cb}_{S_{A\otimes E}}\left(\left(E_{\rm fix}^A\right)^*\right)}{8\widehat D t_{\rm mix}^{S_{A\otimes E}}}
    \end{equation}
    for any $\alpha>0,\beta>1, \tau>0$ such that $t_{\rm mix}^{S_{A\otimes E}} \alpha \tau^{\beta - 1} < \frac{1}{8}$, where $t_{\rm mix}^{S_{A\otimes E}}$ is defined as $t_{\rm mix}^{S_{A\otimes E}}(\frac{1}{2})$ in \eqref{def:induced mixing time non-unital}.    
\end{theorem}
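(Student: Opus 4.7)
The plan is to carry over the argument of Theorem \ref{main result: lower bound fixed time} to the environment-assisted setting, replacing every ingredient by its hatted analogue. The necessary tools are already in place: subadditivity of $\widehat{C}^{cb}_{S_{A\otimes E}}$ under concatenation, the key lemma bounding $\widehat{C}^{cb}_{S_{A\otimes E}}(\Phi)$ in terms of $\widehat{l}^{\mc U_{A\otimes E}}_\delta(\Phi)$ and $\delta$, and the hypothesis $T_t^*\circ E_{\rm fix}^A=E_{\rm fix}^A$. I therefore expect no genuine obstacle; the technicalities of the non-unital setup were absorbed into Lemma~\ref{lem:continuity} and the $\widehat D$-compatibility framework.

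First, I would combine the semigroup property $T_{m\tau}=T_\tau^m$ with subadditivity of $\widehat{C}^{cb}_{S_{A\otimes E}}$ under concatenation to get
\begin{equation*}
\widehat{C}^{cb}_{S_{A\otimes E}}(T_{m\tau})\le m\, \widehat{C}^{cb}_{S_{A\otimes E}}(T_\tau),\qquad m\ge 1.
\end{equation*}
Note that the hypothesis on $E_{\rm fix}^A$ transfers to every $T_{m\tau}$ by iteration, so the key lemma preceding Theorem~\ref{lower bound: non-unital} applies to $T_\tau$ with $\delta=\alpha\tau^\beta$, giving
\begin{equation*}
\widehat{C}^{cb}_{S_{A\otimes E}}(T_\tau)\le \widehat{D}\,\widehat{l}^{\mc U_{A\otimes E}}_{\alpha\tau^\beta}(T_\tau) + \widehat{C}^{cb}_{S_{A\otimes E}}\bigl((E_{\rm fix}^A)^*\bigr)\alpha\tau^\beta.
\end{equation*}
Chaining these and isolating the length yields, for every $m\ge 1$,
\begin{equation*}
\widehat{l}^{\mc U_{A\otimes E}}_{\alpha\tau^\beta}(T_\tau)\ge \frac{\widehat{C}^{cb}_{S_{A\otimes E}}(T_{m\tau})-m\,\widehat{C}^{cb}_{S_{A\otimes E}}\bigl((E_{\rm fix}^A)^*\bigr)\alpha\tau^\beta}{m\widehat{D}}.
\end{equation*}

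Next, I would optimize over $m$ by picking the smallest integer $m$ with $m\tau\ge t_{\rm mix}^{S_{A\otimes E}}$, so that $t_{\rm mix}^{S_{A\otimes E}}\le m\tau\le 2t_{\rm mix}^{S_{A\otimes E}}$ (the case $\tau>t_{\rm mix}^{S_{A\otimes E}}$ is degenerate and can be handled separately; in the regime of interest dictated by $t_{\rm mix}^{S_{A\otimes E}}\alpha\tau^{\beta-1}<\tfrac18$ the claim is nontrivial only when $\tau$ is comparatively small). By the definition of the complexity-induced mixing time with $\varepsilon=\tfrac12$, the numerator is bounded below by $\tfrac12\widehat{C}^{cb}_{S_{A\otimes E}}((E_{\rm fix}^A)^*)-m\alpha\tau^\beta\widehat{C}^{cb}_{S_{A\otimes E}}((E_{\rm fix}^A)^*)$, and $m\alpha\tau^\beta=(m\tau)\alpha\tau^{\beta-1}\le 2t_{\rm mix}^{S_{A\otimes E}}\alpha\tau^{\beta-1}$.

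Finally, I would feed in the smallness assumption $t_{\rm mix}^{S_{A\otimes E}}\alpha\tau^{\beta-1}<\tfrac18$, which makes the subtracted term at most $\tfrac14\widehat{C}^{cb}_{S_{A\otimes E}}((E_{\rm fix}^A)^*)$ and hence leaves at least $\tfrac14\widehat{C}^{cb}_{S_{A\otimes E}}((E_{\rm fix}^A)^*)$ in the numerator. Dividing by $m\widehat{D}\le 2\widehat{D}t_{\rm mix}^{S_{A\otimes E}}/\tau$ produces exactly
\begin{equation*}
\widehat{l}^{\mc U_{A\otimes E}}_{\alpha\tau^\beta}(T_\tau)\ge \frac{\tau\,\widehat{C}^{cb}_{S_{A\otimes E}}\bigl((E_{\rm fix}^A)^*\bigr)}{8\widehat{D}\,t_{\rm mix}^{S_{A\otimes E}}},
\end{equation*}
as claimed. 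The only delicate book-keeping is ensuring that $m$ can be chosen as above and that all the constants line up with the threshold $1/8$; beyond that it is a line-by-line transliteration of the unital proof.
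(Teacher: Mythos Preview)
Your proposal is correct and is exactly the approach the paper intends: the paper does not give a separate proof of this theorem but simply remarks that it follows as in Section~\ref{subsec:lower bound fixed}, and your argument is a line-by-line transliteration of the proof of Theorem~\ref{main result: lower bound fixed time} with every object replaced by its environment-assisted analogue. The choice of $m$ with $t_{\rm mix}^{S_{A\otimes E}}\le m\tau\le 2t_{\rm mix}^{S_{A\otimes E}}$ and the resulting constant tracking match the paper's computation verbatim.
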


\subsection{Noise generated by non-symmetric and non-unitary jump operators}\label{sec:non-sym-non-uni}
In Sections \ref{sec:symm_case} and \ref{sec:nonsymm_case}, we studied the simulation of symmetric or unitary jump operators using unitary gates. Now, we consider a more general case, involving non-symmetric and non-unitary jump operators. Suppose the non-unital quantum dynamics $T_t = \exp(tL)$ is purely dissipative, i.e., 
    \begin{equation}\label{model: non-symmetric and non-unitary}
        L(x):= \sum_{j=1}^m L_{V_j} = \sum_{j=1}^m \left[V_j x V_j^* - \frac{1}{2}(V_j^*V_jx + x V_j^*V_j)\right],
    \end{equation}
where the jump operators $V_j$ are non-selfadjoint and non-unitary. Furthermore, we assume that the fixed point set of $T_t^*$ for any $t>0$ is given by the von Neumann algebra:
\begin{equation} \label{fixed point set: commutator}
    \mc N_{\rm fix} = \{V_j,V_j^*: 1\le j \le m\}' = \{x \in \mb B(\mc H): xV_j = V_j x, x V_j^* = V_j^*x\}.
\end{equation}
Note that \eqref{fixed point set: commutator} can be ensured by the assumption of Proposition \ref{characterization: fixed point}.

\subsubsection{A general construction of accessible unitaries.}
Motivated by \cite{cleve_et_al:LIPIcs.ICALP.2017.17}, a natural choice of accessible unitary with environment $E = \mb C^2 = \text{span}\{\ket{0}_E, \ket{1}_E\}$ is given by
\begin{equation} \label{accessible unitaries: with environment}
        \mc U_{A\otimes E} = \left\{\exp(it H_j): |t|\le \tau,\; H_j = V_j^* \otimes \ketbra{0}{1}_E + V_j \otimes \ketbra{1}{0}_E\right\}.
    \end{equation}
Note that $H_j$ is Hermitian by construction. Similar to Theorems \ref{main: symmetric} and \ref{main: Poisson}, we have the following theorem for the upper bound on the maximum uniform length $\widehat{\mm}_{\alpha,\beta}$:
\begin{prop}
   Suppose we have the encoding scheme $\mc E: \rho \mapsto \rho \otimes \ketbra{0}_E$, and decoding scheme via the partial trace $\mc D: \wt \rho \mapsto \Tr_E(\wt \rho)$, the maximum uniform length of order 2 is finite:
    \begin{equation}
        \widehat{\mm}_{\alpha,2} < \infty,\quad \forall \alpha >0.
    \end{equation}
\end{prop}

\begin{proof}

Note that $H_j$ is hermitian and by Lemma \ref{symmetric: approximation}, for each $1\le j \le m$, there exists $\Phi_t^j = \int_{\mc U_{A\otimes E}^{m_j}} \adm_{u_{m_j}\cdots u_1}d\mu_j(u_{m_j}\cdots u_1) $ with length $m_j$ such that
\begin{equation}
    \left\|\exp(tL_{H_j}) - \Phi_t^j\right\|_{\diamond} = O\left(\left\|V_j\right\|^6 t^3\right).
\end{equation}
Furthermore, by direct calculation, we get
\begin{equation}
    \mc D \circ L_{H_j} \circ \mc E = L_{V_j}.
\end{equation}
Thus via first order expansion $\|\exp(tL)-id_{\mb B(\mc H)} - tL\|_{\diamond} = O(\|L\|^2_{\diamond} t^2)$ (see Appendix of \cite{cleve_et_al:LIPIcs.ICALP.2017.17}), we have
\begin{equation}\label{eqn:exp_L}
    \left\|\exp(tL_{V_j}) - \mc D \circ \exp(tL_{H_j}) \circ \mc E\right\|_{\diamond} = O(\|V_j\|^4 \,t^2).
\end{equation}
Combining the above two equalities, we have
\begin{equation}
    \left\|\exp(tL_{V_j}) - \mc D \circ \Phi_t^j \circ \mc E\right\|_{\diamond} = O(\|V_j\|^4 \, t^2).
\end{equation}
Applying first-order Suzuki-Trotter formula, we have
\begin{equation}
      \left\|\exp\left(t\sum_j L_{V_j}\right) - \prod_j \mc D \circ \Phi_t^j \circ \mc E\right\|_{\diamond} = O\left(\left(\sum_j \|V_j\|^4\right)t^2\right)\,.
\end{equation}
This implies that $\prod_j \mc D \circ \Phi_t^j \circ \mc E$ is a second-order approximation scheme for $\exp\left(t\sum_j L_{V_j}\right)$ with finite circuit depth. The remaining steps are same as the proof of Theorem \ref{main: symmetric} in the unital case and we omit it.
\end{proof}

We set $E_{\rm fix}^A$ as the conditional expectation onto the fixed point algebra of $T_t^*$. To derive a lower bound on the simulation length, following the framework introduced in the previous section, we need to check the following three conditions:
\begin{enumerate}
    \item $T_t^* \circ E_{\rm fix}^A = E_{\rm fix}^A$ for some conditional expectation.
    \item $(\mc D \circ \adm_u \circ \mc E)^* \circ E_{\rm fix}^A  = E_{\rm fix}^A$ for any $u \in \mc U_{A\otimes E}$.
    \item $\mc U_{A\otimes E}$ and $S_{A\otimes E}$ are compatible, i.e., there exists $\widehat{D}>0$ such that for any $u \in \mc U_{A\otimes E}$,
    \begin{equation}
        \widehat{C}_{S_{A\otimes E}}^{cb}(\mc D \circ \adm_u \circ \mc E) \le \widehat{D}.
    \end{equation}
\end{enumerate}
We check them one by one:
\begin{enumerate}
\item This condition 1 is straightforward from the definition of $E^A_{\rm fix}$.

\item For the second condition, for any $u\in \mc U_{A\otimes E}$, we denote it as $$u = \exp(is (V_j^* \otimes \ketbra{0}{1}_E + V_j \otimes \ketbra{1}{0}_E)), \quad |s| \le \tau.$$
Recall that the encoding and decoding schemes are defined by 
$\mc E: \rho \mapsto \rho \otimes \ketbra{0}_E$, and $\mc D: \wt \rho \mapsto \Tr_E(\wt \rho)$. Hence by definition, the dual map is given explicitly as
\begin{equation}\label{dual: encode and decode}
   \begin{aligned}
    & \mc E^* : \mb B(AE) \to \mb B(A): X_{AE} \mapsto \bra{0}_E X_{AE} \ket{0}_E, \\
    & \mc D^* : \mb B(A) \to \mb B(AE):  X_A \mapsto X_A \otimes I_E.  
\end{aligned} 
\end{equation}
Then for any $X_A \in \mb B(A)$, we have 
\begin{align*}
    & (\mc D \circ \adm_u \circ \mc E)^* \circ E_{\rm fix}^A (X_A) = \mc E^* \circ \adm_{u^*} \circ \mc D^* (E_{\rm fix}^A(X_A)) \\
    & = \mc E^* \left(\exp(-is (V_j^* \otimes \ketbra{0}{1}_E + V_j \otimes \ketbra{1}{0}_E))  (E_{\rm fix}^A(X_A) \otimes I_E) \exp(is (V_j^* \otimes \ketbra{0}{1}_E + V_j \otimes \ketbra{1}{0}_E))\right) \\
& = \mc E^*(E_{\rm fix}^A(X_A) \otimes I_E)=  E_{\rm fix}^A(X_A).
\end{align*}
Note that apart from the definition, the main property we used is 
\begin{align*}
& \exp(is (V_j^* \otimes \ketbra{0}{1}_E + V_j \otimes \ketbra{1}{0}_E))  (E_{\rm fix}^A(X_A) \otimes I_E)\\
& = (E_{\rm fix}^A(X_A) \otimes I_E) \exp(is (V_j^* \otimes \ketbra{0}{1}_E + V_j \otimes \ketbra{1}{0}_E)).
\end{align*}
To see why the above is true, note that $E_{\rm fix}^A(X_A)$ is in the fixed point algebra of $T_t^*$ and thus commutes with $V_j,V_j^*$. Then using Taylor expansion and the fact that 
\begin{align*}
(V_j^* \otimes \ketbra{0}{1}_E + V_j \otimes \ketbra{1}{0}_E) (E_{\rm fix}^A(X_A) \otimes I_E) = (E_{\rm fix}^A(X_A) \otimes I_E)(V_j^* \otimes \ketbra{0}{1}_E + V_j \otimes \ketbra{1}{0}_E).
\end{align*}
we get the desired condition $(\mc D \circ \adm_u \circ \mc E)^* \circ E_{\rm fix}^A  = E_{\rm fix}^A$. 
\item Finally, to show that the last condition holds, we choose our resource set $S_{A\otimes E}$ as 
\begin{equation} \label{environment assisted resource set}
    S_{A\otimes E} = \{X_E \otimes I_A, Y_E \otimes I_A, H_j = V_j^* \otimes \ketbra{0}{1}_E + V_j \otimes \ketbra{1}{0}_E: 1\le j \le m\}.
\end{equation}
Then our claim is as follows:
\end{enumerate}
\begin{prop}\label{compatibility: non-unital}
    Suppose $S_{A\otimes E}$ is given as \eqref{environment assisted resource set}, then for any $u \in \mc U_{A\otimes E}$ with $\mc U_{A\otimes E}$ given in \eqref{accessible unitaries: with environment}, we have 
    \begin{equation} \label{D hat estimate}
        \widehat{C}_{S_{A\otimes E}}^{cb}(\mc D \circ \adm_u \circ \mc E) \le \tau + 6=: \widehat D
    \end{equation}
\end{prop}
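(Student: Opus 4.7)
The plan is to first reduce the expression under $\widehat{C}^{cb}_{S_{A\otimes E}}$ to a purely operator-norm estimate, then isolate the Hamiltonian evolution on $A\otimes E$ via a triangle split, and finally control the residual ``replacer defect'' using the Pauli-on-$E$ generators $I_A\otimes X_E, I_A\otimes Y_E$ already present in the resource set. Using \textbf{Assumption (O)}, one checks that
\[
(\mc E\circ(\mc D\circ\adm_u\circ\mc E - id_A)\circ\mc D)^* = \mc R^*\circ(\adm_{u^*}-id_{A\otimes E})\circ\mc R^*,
\]
where $\mc R^* = \mc D^*\circ\mc E^*$ sends $X_{AE}$ to $X_{00}\otimes I_E$ with $X_{00}:=\bra{0}_E X_{AE}\ket{0}_E$. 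Evaluating on $X_{AE}$, the result has the form $\bigl(\bra{0}_E(u^*(X_{00}\otimes I_E)u - X_{00}\otimes I_E)\ket{0}_E\bigr)\otimes I_E$, whose operator norm is at most $\|u^*(X_{00}\otimes I_E)u - X_{00}\otimes I_E\|_\infty$ since the compression by $\bra{0}_E,\ket{0}_E$ and the amplification by $I_E$ are both contractive.

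The key estimate comes from the triangle split
\[
u^*(X_{00}\otimes I_E)u - (X_{00}\otimes I_E) = (u^*X_{AE}u - X_{AE}) + \bigl(u^*(X_{00}\otimes I_E - X_{AE})u - (X_{00}\otimes I_E - X_{AE})\bigr).
\]
Writing $u=\exp(itH_j)$ with $|t|\le\tau$ and using the Duhamel identity, the first bracket is bounded by $\tau\|[H_j,X_{AE}]\|_\infty \le \tau\||X_{AE}|\|_{S_{A\otimes E}}$ because $H_j\in S_{A\otimes E}$; the second bracket is bounded by $2\|X_{00}\otimes I_E - X_{AE}\|_\infty$ by unitary invariance of the operator norm.

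It remains to estimate the replacer defect $\|X_{00}\otimes I_E - X_{AE}\|_\infty$ against the Lipschitz seminorm. Writing $X_{AE}$ as a $2\times 2$ block matrix with entries $X_{ij}:=\bra{i}_E X_{AE}\ket{j}_E$, the defect only carries entries $-X_{01}, -X_{10}$ and $X_{00}-X_{11}$, so triangle inequality on the block structure gives the bound $\|X_{01}\|_\infty + \|X_{10}\|_\infty + \|X_{00}-X_{11}\|_\infty$. A block-wise computation reveals that the $(0,0)$ entries of $[I_A\otimes X_E,X_{AE}]$ and $[I_A\otimes Y_E,X_{AE}]$ are $X_{10}-X_{01}$ and $-i(X_{10}+X_{01})$, while the $(0,1)$ entry of the first commutator is $X_{11}-X_{00}$; since each commutator has norm at most $\||X_{AE}|\|_{S_{A\otimes E}}$, solving for $X_{01},X_{10}$ via half-sums yields $\|X_{01}\|_\infty, \|X_{10}\|_\infty, \|X_{00}-X_{11}\|_\infty \le \||X_{AE}|\|_{S_{A\otimes E}}$, so the defect is at most $3\||X_{AE}|\|_{S_{A\otimes E}}$. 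Combining with the previous paragraph yields the claimed bound $\tau + 6$. The cb version is identical after tensoring with $id_n$, since $\mc R^*\otimes id_n$, the triangle split, and the block-entry estimates all lift verbatim. The main obstacle is really the replacer-defect step: because $S_{A\otimes E}$ does not contain the individual jump operators $V_j$, a direct expansion of $[V_j,X_{00}]$ cannot be closed against $\||X_{AE}|\|_{S_{A\otimes E}}$; the triangle trick avoids this by trading the dangerous commutator for a pure operator-norm correction which the Pauli-on-$E$ generators are designed to absorb.
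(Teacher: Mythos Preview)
Your proof is correct and follows the same underlying strategy as the paper's: split off the Hamiltonian-evolution contribution (bounded by $\tau$ because $H_j\in S_{A\otimes E}$) from a ``replacer defect'' contribution on the environment, and control the latter using only the Pauli generators $I_A\otimes X_E,\,I_A\otimes Y_E$.

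The organization is slightly different, however. The paper works abstractly: it rewrites the target as $(\mc E\circ\mc D)^*\adm_{u^*}(\mc E\circ\mc D)^*-(\mc E\circ\mc D)^*$, then adds and subtracts $id_{A\otimes E}$ to interpret everything as ordinary Lipschitz complexities $C^{cb}_{S_{A\otimes E}}$ on $A\otimes E$, and invokes subadditivity to obtain $C^{cb}_{S_{A\otimes E}}(\adm_u)+3\,C^{cb}_{S_{A\otimes E}}(\mc E\circ\mc D)\le \tau+3\cdot 2$, the last $2$ coming from the already-proven single-qubit replacer bound (Example~\ref{example: replacer}). You instead work directly at the operator level: you add and subtract $X_{AE}$ inside $u^*(\,\cdot\,)u$, which yields $\tau$ plus $2\|X_{00}\otimes I_E-X_{AE}\|_\infty$, and then bound the defect by a hands-on block computation giving $3\||X_{AE}|\|_{S_{A\otimes E}}$, hence $\tau+2\cdot 3$. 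So the paper gets $3\times 2$ while you get $2\times 3$; both equal $\tau+6$. Your route is more self-contained (it does not appeal to Example~\ref{example: replacer} or to the subadditivity machinery), while the paper's route is more modular and makes the reduction to the standard $C^{cb}_{S_{A\otimes E}}$ framework explicit.
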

\begin{proof}
For any $u \in \mc U_{A\otimes E}$, suppose $u = \exp(is H_j)$, then by definition
    \begin{align*}
& \widehat{C}^{cb}_{S_{A\otimes E}}(\mc D \circ \adm_u \circ \mc E)= \left\|(\mc E (\mc D \circ \adm_u \circ \mc E - id_A) \mc D)^*: (\mb B(A\otimes E), \||\cdot|\|_{S_{A\otimes E}}) \to \mb B(A\otimes E)\right\|_{cb} \\
& =\left\|(\mc E \circ \mc D)^*\, \adm_u^* \, (\mc E \circ \mc D)^* - (\mc E \circ \mc D)^* : (\mb B(A\otimes E), \||\cdot|\|_{S_{A\otimes E}}) \to \mb B(A\otimes E) \right\|_{cb} \\
& \le \left\|(\mc E \circ \mc D)^* \,\adm_u^*\, (\mc E \circ \mc D)^* - id: (\mb B(A\otimes E), \||\cdot|\|_{S_{A\otimes E}}) \to \mb B(A\otimes E)\right\|_{cb} \\
& +\left\|(\mc E \circ \mc D)^* - id: (\mb B(A\otimes E), \||\cdot|\|_{S_{A\otimes E}}) \to \mb B(A\otimes E)\right\|_{cb} \\
& = C_{S_{A\otimes E}}^{cb}((\mc E \circ \mc D) \circ \adm_u \circ (\mc E \circ \mc D)) + C_{S_{A\otimes E}}^{cb}(\mc E \circ \mc D) \\
& \le C_{S_{A\otimes E}}^{cb}(\adm_u) + 3 C_{S_{A\otimes E}}^{cb}(\mc E \circ \mc D),
\end{align*}
where we used the standard definition of $C_{S_{A\otimes E}}(\Phi)$ for a quantum channel $\Phi$ acting on operators on $A\otimes E$. Note that 
\begin{equation}
    C_{S_{A\otimes E}}^{cb}(\mc E \circ \mc D) = C_{S_{A\otimes E}}^{cb}(id_A \otimes E_0^{E*}) \le C_{S_E}^{cb}(E_0^{E*}),\quad S_E = \{X_E, Y_E\}
\end{equation}
where $E_0^E: \mb B(E) \to \mb B(E), X_E \mapsto \bra{0}_E X_E \ket{0}_E I_E$ and $E_0^{E*}$ is the replacer channel defined by $E_0^{E*}(\rho_E) = \Tr(\rho_E) \ketbra{0}_E$. Then using Example \ref{example: replacer}, we have $C_{S_E}^{cb}(E_0^{E*}) \le 2$ and using a similar argument as in Proposition \ref{Assumption: Trotter example}, we have $C_{S_{A\otimes E}}^{cb}(\adm_u) \le \tau$ thus we have $ \widehat{C}_{S_{A\otimes E}}^{cb}(\mc D \circ \adm_u \circ \mc E) \le \tau + 6=: \widehat D$.
\end{proof}

\begin{rmk}
Given a Lindbladian as in \eqref{model: non-symmetric and non-unitary}, the choice of the resource set is not unique. Another general way to construct $S_{A\otimes E}$ is given as follows:
    \[ S_A\lel \{V_j|1\le j\le m\}\cup \{V_j^*|1\le j\le m\} \pl , \]
 $S_E=\{X_E,Y_E\}$ which leads to the combined resource set 
  \[ S_{A\otimes E}\lel S_A\ten I_E\cup I_A\ten  S_E \pl. \]
It is clear that the resource set $S_A$ on the system $A$ has the property that 
\[  S_A' = \mc N_{\rm fix} \implies E_{S_A'}   \lel  E_{\rm fix}^A\] 
which is consistent with the assumption in Section \ref{sec:lower_bound}. Moreover, we notice that,
\begin{align*}
     H_j =  V_j^* \otimes \ketbra{0}{1}_E + V_j \otimes \ketbra{1}{0}_E = V_j \ten \frac{X-iY}{2} + V_j \ten \frac{X+iY}{2}
\end{align*}
Similar to the proof of Example \ref{example: replacer}, we can show that if one resource set can be represented as linear combinations of products of the other resource set, then they are comparable. This implies $\widehat{C}_{S_{A\otimes E}}^{cb}$ are comparable for the above two choices of the resourse set. 
\end{rmk}

\noindent The summary of the lower bound for the model with Lindbladian as in \eqref{model: non-symmetric and non-unitary} is stated as follows:
\begin{theorem}\label{main: lower bound non-unital}
    Suppose $T_t = \exp(tL)$ with $L$ given by \eqref{model: non-symmetric and non-unitary}. Additionally, we assume $E_{\rm fix}^A$ is the conditional expectation onto the fixed point algebra of $T_t^*$. Then for $E = \mb C^2$, the set of accessible unitaries
    \begin{equation}
        \mc U_{A\otimes E} = \{\exp(it H_j): |t|\le \tau,\; H_j = V_j^* \otimes \ketbra{0}{1}_E + V_j \otimes \ketbra{1}{0}_E\}.
    \end{equation}
    and the encoding and decoding schemes given by $\mc E: \rho \mapsto \rho \otimes \ketbra{0}_E$, $\mc D: \wt \rho \mapsto \Tr_E(\wt \rho)$, we have the following lower bound: 
 
   \begin{equation}
      \widehat{\mm}_{\alpha,\beta} \ge C_{\alpha,\beta} \frac{\widehat{C}^{cb}_{S_{A\otimes E}}\left(\left(E_{\rm fix}^A\right)^*\right)}{\widehat D},\quad C_{\alpha,\beta}:= \min \left\{ \alpha^{-\frac{1}{\beta -1}} \left( \frac{1-\varepsilon}{3t_{\rm mix}^{S_{A\otimes E}}(\varepsilon)} \right)^{\frac{\beta}{\beta - 1}} , \frac{1-\varepsilon}{2} \right\}.
   \end{equation}
    where $\widehat{D}$ can be given by $\tau+6$ in \eqref{D hat estimate}.
\end{theorem}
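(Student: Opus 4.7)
The strategy is to recognize that Theorem \ref{main: lower bound non-unital} is essentially an application of the abstract lower bound Theorem \ref{lower bound: non-unital} to the concrete Lindbladian \eqref{model: non-symmetric and non-unitary}, its associated canonical unitary set \eqref{accessible unitaries: with environment}, and the canonical encoding/decoding $\mathcal{E}(\rho) = \rho \otimes \ketbra{0}_E$, $\mathcal{D}(\widetilde{\rho}) = \Tr_E(\widetilde{\rho})$. So the proof reduces to checking the three structural hypotheses \textbf{(O)}, \textbf{(B')}, \textbf{(C')} together with the semigroup fixed-point condition $T_t^* \circ E_{\rm fix}^A = E_{\rm fix}^A$. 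Each of these has already been established in the discussion preceding the theorem.

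First, \textbf{(O)} is immediate: $\mathcal{D} \circ \mathcal{E}(\rho) = \Tr_E(\rho \otimes \ketbra{0}_E) = \rho$. The fixed-point compatibility $T_t^* \circ E_{\rm fix}^A = E_{\rm fix}^A$ holds by construction, since we chose $E_{\rm fix}^A$ as the conditional expectation onto the fixed-point algebra of $T_t^*$, which under the assumption \eqref{fixed point set: commutator} is a genuine von Neumann algebra and thus admits such a conditional expectation. Next, for \textbf{(B')}, I would invoke the computation in item 2 of the list preceding Proposition \ref{compatibility: non-unital}: for any $u = \exp(isH_j) \in \mathcal{U}_{A\otimes E}$ and any $X_A \in \mathbb{B}(A)$, the image $E_{\rm fix}^A(X_A)$ lies in $\{V_j,V_j^*\}'$, hence $(E_{\rm fix}^A(X_A)\otimes I_E)$ commutes with $H_j$, and therefore with $u$; the dual maps $\mathcal{E}^*, \mathcal{D}^*$ from \eqref{dual: encode and decode} then give $(\mathcal{D}\circ\adm_u\circ\mathcal{E})^* \circ E_{\rm fix}^A = E_{\rm fix}^A$.

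The condition \textbf{(C')} is precisely Proposition \ref{compatibility: non-unital}, which yields $\widehat{C}_{S_{A\otimes E}}^{cb}(\mathcal{D}\circ\adm_u\circ\mathcal{E}) \le \tau + 6 =: \widehat{D}$ for the resource set \eqref{environment assisted resource set}. With all four ingredients in place, the conclusion of Theorem \ref{lower bound: non-unital} applies verbatim: provided $t_{\rm mix}^{S_{A\otimes E}}(\varepsilon) < \infty$, one obtains
\[
\widehat{\mm}_{\alpha,\beta} \geq C_{\alpha,\beta}\,\frac{\widehat{C}_{S_{A\otimes E}}^{cb}((E_{\rm fix}^A)^*)}{\widehat{D}},
\]
with $C_{\alpha,\beta}$ as stated.

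The only real content of the argument is the verification of \textbf{(B')}, and the main subtlety there is ensuring that elements of the fixed-point algebra of $T_t^*$ commute with every $H_j$ on the enlarged Hilbert space; this is where the assumption \eqref{fixed point set: commutator} (which by Proposition \ref{characterization: fixed point} holds e.g.\ when $T_t$ has a faithful invariant state or a unique fixed state) is essential, since otherwise the fixed-point set need not even be an algebra, let alone be contained in $\{V_j,V_j^*\}'$. Everything else is bookkeeping: chaining together the previously-established auxiliary results, recording the value $\widehat{D} = \tau + 6$ from \eqref{D hat estimate}, and invoking Theorem \ref{lower bound: non-unital}. I would expect the writeup to be short — essentially a checklist of the hypotheses with pointers to the lemmas that verify them.
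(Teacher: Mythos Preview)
Your proposal is correct and matches the paper's approach exactly: the paper presents Theorem \ref{main: lower bound non-unital} as a summary statement following the verification of conditions 1--3 (corresponding to \textbf{(O)}, \textbf{(B')}, \textbf{(C')} plus $T_t^* \circ E_{\rm fix}^A = E_{\rm fix}^A$) carried out in the discussion of Section~5.3.1, and then applies Theorem \ref{lower bound: non-unital}. Your checklist of hypotheses, including the pointer to Proposition \ref{compatibility: non-unital} for the value $\widehat{D} = \tau + 6$ and the observation that assumption \eqref{fixed point set: commutator} is what makes the commutation argument in \textbf{(B')} work, is precisely the content preceding the theorem in the paper.
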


\begin{rmk} \label{iterate} We have seen in \eqref{ff} that if  $\frac{C_{S}^{cb}(E_{\infty})}{t_{\rm mix}}$ is large, our estimates are meaningful. For the tensor product model, we are able to show that this can occur indeed. This trick works for unital and non-unital quantum dynamics and we consider the non-unital case for generality. For example, consider the new generator is given by $L^{(n)}=\sum_{j=1}^n \pi_j(L)$ where $\pi_j(L)$ is acting on the $j$-th register and $L$ is a fixed Lindblad generator. Then we have (see the details in Appendix \ref{appendix: tensor product})
 \[  \frac{\widehat{C}^{cb}_{S_{A^{\otimes n}\otimes E}}\left(\left(E_{\rm fix}^A\right)^{* \otimes n}\right)}{t_{\rm mix}(L^{(n)})} \gl \Omega \left(n \frac{\widehat{C}^{cb}_{S_{A\otimes E}}\left(\left(E_{\rm fix}^A\right)^*\right)}{t_{\rm mix}(L) }\right) \pl.   \]
Thus, we obtain a lower bound of order $n$ for the simulation cost of tensor product models. 
\end{rmk}

\subsubsection{A different construction: $n$-qubit amplitude damping noise}\label{sec:nqadn}
In the previous subsection, we present a general framework that works for a large class of non-unitary and non-Hermitian jump operators. In this subsection, we present a different construction of accessible unitaries which work for $n$-qubit amplitude damping noise. 
The model is defined on $n$-qubit system $A^{\otimes n} = (\mb C^2)^{\otimes n}$. Denote $a_j$ as the operator $a = \begin{pmatrix}
    0 & 1 \\
    0 & 0
\end{pmatrix}$ acting on the $j$-th system, i.e., 
\begin{align*}
    a_j = I_2 \otimes \cdots I_2 \otimes \underbrace{a}_\text{j-th system} \otimes I_2 \cdots \otimes I_2.
\end{align*}
Denote \begin{align*}
    L(x) = \sum_{j=1}^n L_{a_j}(x) = \sum_{j=1}^n \big(a_jxa_j^* - \frac{1}{2}(a_j^*a_j x + xa_j^*a_j)\big), \quad x \in \mb B(A),
\end{align*}
we have by the commuting property,
\begin{equation}
    T_t = \exp(tL) = \mc N_t^{\otimes n}, 
\end{equation}
where $\mc N_t = \exp(t \mc L_a)$ is acting on a single qubit system $A = \mb C^2$. The accessible unitary here is given by
\begin{equation}
    \mc U_{A^n\otimes E} = \{\exp(it (Y_j\otimes X_E)), \exp(it ( X_j\otimes Y_E)): |t |\le \tau, 1\le j \le n\}.
\end{equation}
and the standard encoding and decoding schemes given by $\mc E: \rho \mapsto \rho \otimes \ketbra{0}_E$, $\mc D: \wt \rho \mapsto \Tr_E(\wt \rho)$. Then, we have 
\begin{equation}\label{Simulation: amplitude damping}
    \mc N_t^{\otimes n} = \prod_{j=1}^n \mc D \circ \adm_{\exp(\frac{i \arccos (e^{-t})}{2}Y_A \otimes X_E )} \adm_{\exp(\frac{-i \arccos (e^{-t})}{2} X_A \otimes Y_{E} )}\circ \mc E,\quad \forall t \ge 0.
\end{equation}
Details on this elementary calculation can be found in  appendix~\ref{app:hamsim-for-ampdamp-noise}. 
Note that $0\le \arccos (e^{-t}) \le \frac{\pi}{2}$, which allows us to obtain the upper bound 
\begin{equation}
   \widehat{\mm}_{\alpha,\beta} \le \frac{\pi n}{\tau},\quad \forall \alpha>0,\beta >1.
\end{equation}
To show the lower bound, we choose the resource set  
\begin{equation}
    S_{A^nE} = \{I_{A^{\otimes n}} \otimes X_E, I_{A^{\otimes n}} \otimes Y_E,  X_j\otimes Y_E, Y_j\otimes X_E: 1\le j \le n\}
\end{equation}
and the conditional expectation 
\begin{equation}
    E_{\rm fix}^{A^n}: \mb B(A^{\otimes n}) \to \mb B(A^{\otimes n}), \rho_n \mapsto \bra{0^n} \rho_n \ket{0^n} I_{A^{\otimes n}}.
\end{equation}
Our goal is to show the following conditions. 
\begin{enumerate}
    \item $T_t^* \circ E_{\rm fix}^{A^n} = E_{\rm fix}^{A^n}$ for some conditional expectation.
    \item $(\mc D \circ \adm_u \circ \mc E)^* \circ E_{\rm fix}^{A^n}  = E_{\rm fix}^{A^n}$ for any $u \in \mc U_{{A^n}\otimes E}$.
    \item $\mc U_{{A^n}\otimes E}$ and $S_{{A^n}\otimes E}$ are compatible, i.e., there exists $\widehat{D}>0$ such that for any $u \in \mc U_{{A^n}\otimes E}$,
    \begin{equation}
        \widehat{C}_{S_{{A^n}\otimes E}}^{cb}(\mc D \circ \adm_u \circ \mc E) \le \widehat{D}.
    \end{equation}
\end{enumerate}
The first and second condition are obvious because $T_t^*$ and $(\mc D \circ \adm_u \circ \mc E)^*$ are unital. For the last condition, using the same argument as Proposition \ref{compatibility: non-unital}, we have 
\begin{equation}
    \widehat{C}_{S_{A\otimes E}}^{cb}(\mc D \circ \adm_u \circ \mc E) \le \tau + 6=: \widehat{D},\quad \forall u\in \mc U_{A\otimes E}.
\end{equation}
Applying Theorem \ref{lower bound: non-unital}, we can get a lower bound for $\widehat{\mm}_{\alpha,\beta}$: 
\begin{equation}
 \widehat{\mm}_{\alpha,\beta} \ge C_{\alpha,\beta} \frac{\widehat{C}^{cb}_{S_{A\otimes E}}\left(\left(E_{\rm fix}^{A^n}\right)^*\right)}{\widehat D},\quad C_{\alpha,\beta}:= \min \left\{ \alpha^{-\frac{1}{\beta -1}} \left( \frac{1-\varepsilon}{3t_{\rm mix}^{S_{A\otimes E}}(\varepsilon)} \right)^{\frac{\beta}{\beta - 1}} , \frac{1-\varepsilon}{2} \right\}.
\end{equation}
Finally, we provide estimates of $\widehat{C}^{cb}_{S_{A\otimes E}}\left( \left(E_{\rm fix}^{A^n}\right)^*\right)$ and $t_{\rm mix}^{S_{A\otimes E}}(\varepsilon)$. In fact, we choose the certificate operator $\sum_{j=1}^n X_j \otimes I_E$ and recall the definition of $\widehat{C}_{S_{A\otimes E}}$:
\begin{align*}
    \widehat{C}_{S_{A\otimes E}}\left( \left(E_{\rm fix}^{A^n}\right)^*\right) & = \sup_{X_{AE} \in \mb B(AE), \||X_{AE}|\|_{S_{A\otimes E}} \neq 0} \frac{\|(\mc E ((E_{\rm fix}^{A^n})^* - id_{A^n}) \mc D)^*(X_{AE})\|_{\infty}}{\||X_{AE}|\|_{S_{A\otimes E}}} \\
    & = \sup_{X_{AE} \in \mb B(AE), \||X_{AE}|\|_{S_{A\otimes E}} \neq 0} \frac{\|\mc D^* (E_{\rm fix}^{A^n} - id_{A^n}) \mc E^*(X_{AE})\|_{\infty}}{\||X_{AE}|\|_{S_{A\otimes E}}} \\
    & \ge \frac{\|\mc D^* (E_{\rm fix}^{A^n} - id_{A^n})(\sum_{j=1}^n X_j)\|}{\||\sum_{j=1}^n X_j \otimes I_E|\|_{S_{A\otimes E}}} = \frac{\|\sum_{j=1}^n X_j\|_{\infty}}{\sup_k \|[Y_k,\sum_{j=1}^n X_j]\otimes X_E\|_{\infty}} = \frac{n}{2}.
\end{align*}
This directly implies $\widehat{C}^{cb}_{S_{A\otimes E}}\left( \left(E_{\rm fix}^{A^n}\right)^*\right)\geq \widehat{C}_{S_{A\otimes E}}\left( \left(E_{\rm fix}^{A^n}\right)^*\right)\geq \frac{n}{2}$. Similarly, we can check that 
\begin{align*}
    \widehat{C}^{cb}_{S_{A\otimes E}}(T_t)&\geq \widehat{C}_{S_{A\otimes E}}(T_t) = \sup_{X_{AE} \in \mb B(AE), \||X_{AE}|\|_{S_{A\otimes E}} \neq 0} \frac{\|(\mc E (T_t - id_{A^n}) \mc D)^*(X_{AE})\|_{\infty}}{\||X_{AE}|\|_{S_{A\otimes E}}} \\
    & \ge \frac{\|\mc D^* (T_t^* - id_{A^n})(\sum_{j=1}^n X_j)\|}{\||\sum_{j=1}^n X_j \otimes I_E|\|_{S_{A\otimes E}}} = (1-e^{-t})\frac{\|\sum_{j=1}^n X_j\|}{\||\sum_{j=1}^n X_j \otimes I_E|\|_{S_{A\otimes E}}} = \frac{(1-e^{-t})n}{2}, 
\end{align*}
thus we can show that $t_{\rm mix}^{S_{A\otimes E}}$ is bounded by a universal constant. Using Theorem \ref{lower bound: non-unital}, we find a tight lower bound for $\widehat{\mm}_{\alpha,\beta}$ in terms of the number of qubits $n$:
\begin{equation}
    C_{\alpha,\beta} \frac{n}{\tau+6}\le \widehat{\mm}_{\alpha,\beta} \le \frac{\pi n}{\tau}.
\end{equation}
\begin{rmk}
    Using Theorem \ref{main: lower bound non-unital fixed time}, we can get the desirable behavior of linear dependency$$\widehat{l}_{\delta}^{\mc U_{A \ten E}}(T_t) = \Theta(nt).$$
\end{rmk}

\section{Conclusion}\label{sec:conclusion}
Our technique for the lower bound of simulation cost of open quantum systems is based on the convex cost function defined in Section \ref{sec:lower_bound}. The linear in time behavior of the lower bounds that already have been observed in \cite{berry2015hamiltonian, childs2016efficient} is captured in our Theorem \ref{main result: lower bound fixed time}. Moreover, our dimension factor $\frac{\kappa^{cb}(S)}{t_{\rm{mix}}^S}$ exhibits a new phenomenon, which works for a more general class of generators than presented in previous works. Explicit estimates of the dimension factor are given for $n$ qubit Pauli-type and amplitude damping noise and the tightness of our lower bound is illustrated by showing that the upper and lower bounds coincide in those two examples. We leave the concrete analysis for more complicated generators for future research.

\section{Acknowledgements}
The authors thank Tyler Kharazi for helpful discussions. Z.D., M.J. and P.S. thank the Institute for Pure and Applied Mathematics (IPAM) for its hospitality in hosting them as long-term visitors during the semester-long program “Mathematical and Computational Challenges in  Quantum Computing” in Fall 2023. 

\begin{appendices}
    
\section{Simulation scheme for tensor product model}\label{appendix: tensor product}
In the appendix, we discuss the limitation of a natural simulation scheme for $n$ copies of a fixed quantum Markov semigroup. To be more specific, if we have an accessible unitary $\mc U_{A\otimes E}$ together with an encoding $\mc E: \rho \mapsto \rho \otimes \ketbra{0^{k_E}}_E$ and a decoding $\mc D = \Tr_E$ which can efficiently simulate $\mc T_t = e^{tL}$, then we use $\mc U_{A^{\otimes n} \otimes E} = \{\pi_j(u_{AE}): u_{AE} \in \mc U_{A\otimes E}, 1\le j \le n\}$ where $\pi_j(u_{AE})$ is an operator acting on $A^{\otimes n} \otimes E$, only nontrivially on $j$-th register of $A$ and $E$ by $u_{AE}$, together with an encoding $\mc E_n: \rho_n \mapsto \rho_n \otimes \ketbra{0^{k_E}}_E$ and a decoding $\mc D_n = \Tr_E$, to simulate $\mc T_t^{\otimes n}$. Our framework in Section \ref{sec:non-unital} can confirm our naive intuition that we need at least $\Omega(n)$ gate length to simulate $\mc T_t^{\otimes n}$. 
\begin{figure}[h!]
    \centering
    \includegraphics[width=.5\textwidth]{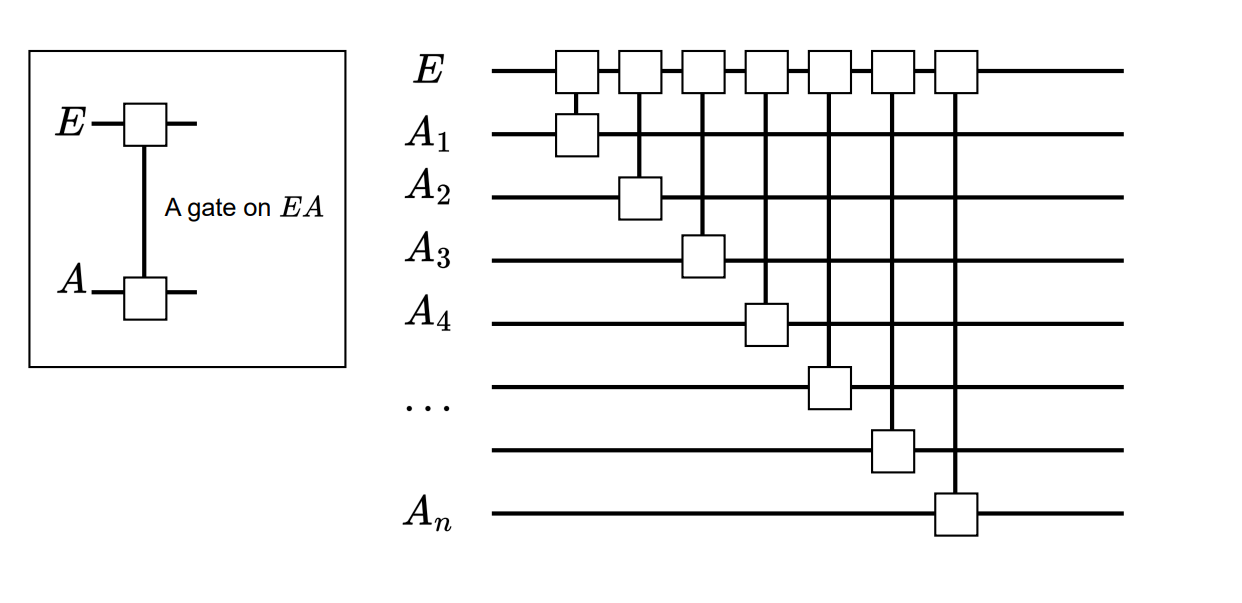}
    \caption{Using a natural amplified accessible gate set $\mc U_{A^{\otimes n} \otimes E} = \{\pi_j(u_{AE}): u_{AE} \in \mc U_{A\otimes E}, 1\le j \le n\}$ to approximate $\mc T_t^{\otimes n}$.}
\end{figure}

In fact, suppose we have a resource set $S_{A\otimes E}$ and conditional expectation $E_{\rm{fix}}^A$ onto the fixed point algebra of $\mc T_t^*$, such that the assumption of Theorem \ref{main: lower bound non-unital fixed time} holds. Then applying the same theorem with the resource set $S_{A^{\otimes n}\otimes E} = \{\pi_j(s_{AE}): s_{AE} \in S_{A\otimes E}, 1\le j \le n\}$, conditional expectation $(E_{\rm{fix}}^A)^{\otimes n}$ we can get a lower bound on the simulation cost of $\mc T_t^{\otimes n}$ as follows:
\begin{align*}
    \Omega \bigg(t \frac{\widehat{C}^{cb}_{S_{A^{\otimes n}\otimes E}}\left( \left( E_{\rm fix}^{A*}\right)^{\otimes n} \right)}{t_{\rm mix}^{S_{A^{\otimes n}\otimes E}}} \bigg).
\end{align*}
First, note that by additivity of $\widehat{C}^{cb}_{S_{A^{\otimes n}\otimes E}}$, we get $\widehat{C}^{cb}_{S_{A^{\otimes n}\otimes E}}\left( \left( E_{\rm fix}^{A*}\right)^{\otimes n} \right) = n \widehat{C}^{cb}_{S_{A\otimes E}}\left( E_{\rm fix}^{A*}\right)$. Moreover, we claim that for $\mc T_t^{\otimes n} = e^{t L^{(n)}}$,
\begin{align*}
    & t_{\rm mix}^{S_{A^{\otimes n}\otimes E}}(\varepsilon,L^{(n)}):= \inf\{t \ge 0: \widehat{C}^{cb}_{S_{A^{\otimes n}\otimes E}}\left( e^{t L^{(n)}} \right) \ge (1-\varepsilon)\widehat{C}^{cb}_{S_{A^{\otimes n}\otimes E}}\left( \left( E_{\rm fix}^{A*}\right)^{\otimes n} \right)\} \\
    & \le t_{\rm mix}^{S_{A\otimes E}}(\varepsilon,L) = \inf\{t\ge 0: \widehat{C}^{cb}_{S_{A\otimes E}}\left( \mc T_t\right) \ge (1-\varepsilon) \widehat{C}^{cb}_{S_{A\otimes E}}\left( E_{\rm fix}^{A*}\right)\}.
\end{align*}
In fact, suppose $t_0$ is such that $\widehat{C}^{cb}_{S_{A\otimes E}}\left( \mc T_{t_0}\right) \ge (1-\varepsilon) \widehat{C}^{cb}_{S_{A\otimes E}}\left( E_{\rm fix}^{A*}\right)$, then for $\delta >0$ arbitrary, we choose a certificate operator $W_{VAE}$ with $\||W_{VAE}|\|_{I_V \otimes S_{A\otimes E}} \le 1$ acting on some ancillary system $V$ and the composite system $A\otimes E$ such that 
\begin{align*}
    \left\|id_{\mb B(V)} \otimes \left(\mc E \circ (\mc T_{t_0} - id_{\mb B(A)}) \circ \mc D \right)^*(W_{VAE})\right \| \ge (1-\varepsilon) \widehat{C}^{cb}_{S_{A\otimes E}}\left( E_{\rm fix}^{A*}\right) - \delta.
\end{align*}
The certificate operator acting on the whole system is chosen as 
\begin{align*}
    W_{V^{\otimes n}A^{\otimes n}E} = \sum_{j=1}^n W_{V_j A_j E}.
\end{align*}
\begin{figure}[h!]
    \centering
    \includegraphics[width=.6\textwidth]{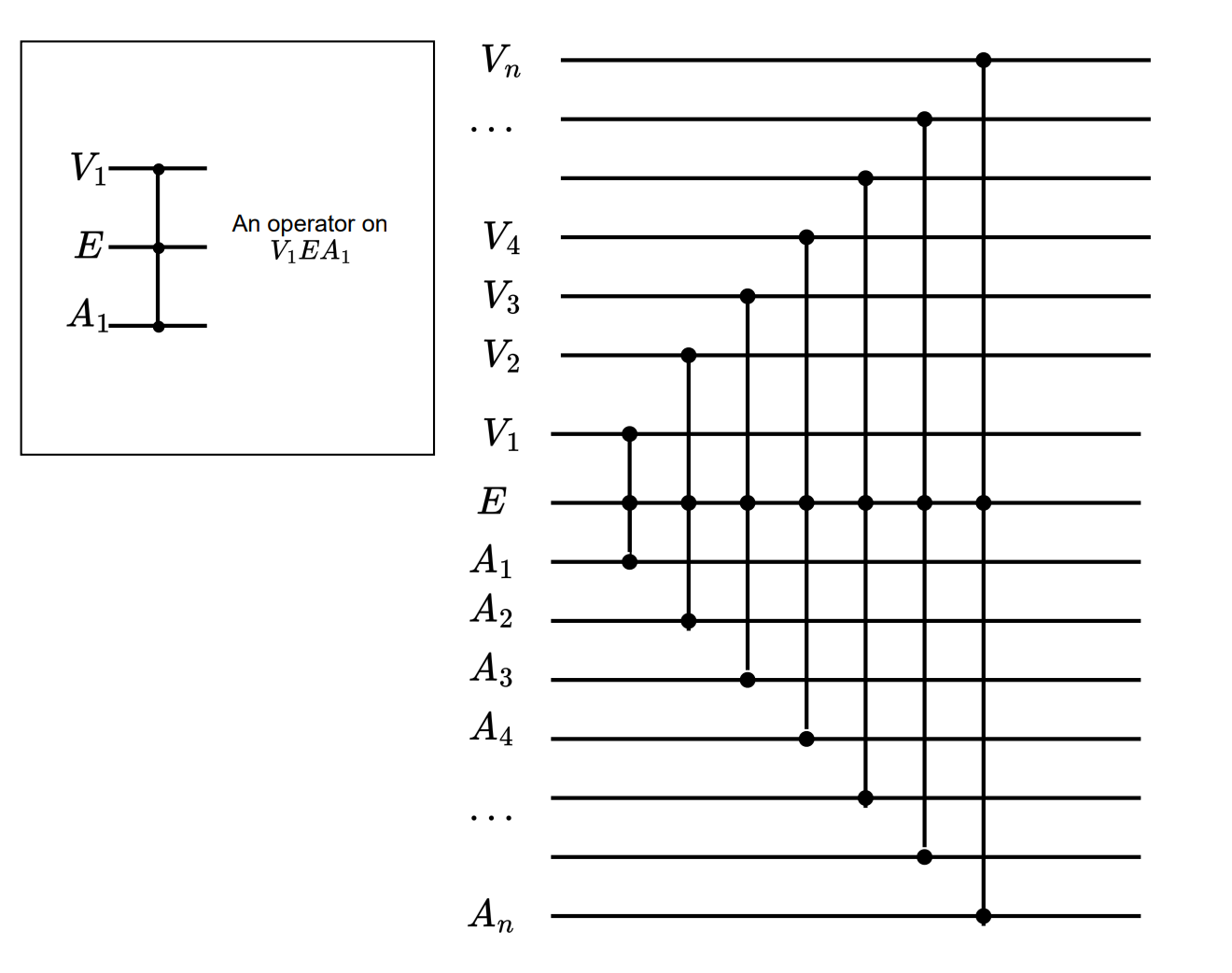}
    \caption{Certificate operator on the whole system is given by the sum of the certificate operator on $VAE$.}
\end{figure}

\noindent Then we have 
\begin{align*}
    & \left\|id_{\mb B(V^{\otimes n})} \otimes \left(\mc E_n \circ (\mc T_{t_0}^{\otimes n} - id_{\mb B(A^{\otimes})}) \circ \mc D_n \right)^*(W_{V^{\otimes n}A^{\otimes n}E})\right\| \\
    & = \left\|\sum_{j=1}^n \bigg(id_{\mb B(V)} \otimes \left(\mc E \circ (\mc T_{t_0} - id_{\mb B(A)}) \circ \mc D \right)^*(W_{VAE})\bigg)_j\right\| \\
    & = n \left\|id_{\mb B(V)} \otimes \left(\mc E \circ (\mc T_{t_0} - id_{\mb B(A)}) \circ \mc D \right)^*(W_{VAE})\right\| \ge (1-\varepsilon) n\widehat{C}^{cb}_{S_{A\otimes E}}\left( (E_{\rm fix}^{A})^*\right) - n\delta.
\end{align*}
Since $\delta >0$ is arbitrary, we have 
\begin{align*}
    \widehat{C}^{cb}_{S_{A^{\otimes n}\otimes E}}\left( \mc T_{t_0}^{\otimes n} \right) \ge (1-\varepsilon)\widehat{C}^{cb}_{S_{A^{\otimes n}\otimes E}}\left( \left( E_{\rm fix}^{A*}\right)^{\otimes n} \right),
\end{align*}
which shows $t_{\rm mix}^{S_{A^{\otimes n}\otimes E}}(\varepsilon,L^{(n)}) \le t_0 \to t_{\rm mix}^{S_{A\otimes E}}(\varepsilon,L)$. In summary, our lower bound is given by
\begin{align*}
    \Omega \bigg(nt\frac{\widehat{C}^{cb}_{S_{A\otimes E}}\left( (E_{\rm fix}^{A})^*\right)}{t_{\rm mix}^{S_{A\otimes E}}} \bigg),
\end{align*}
which is given by $n$ times the original lower bound for simulating $\mc T_t$. We remark here that $$t_{\rm mix}^{S_{A^{\otimes n}\otimes E}}(\varepsilon,L^{(n)}) \le t_{\rm mix}^{S_{A\otimes E}}(\varepsilon,L)$$
is a new feature that the classical mixing time induced by the diamond norm does not have. It is shown in \cite{gao2022complete} that in a lot of cases, we have $t_{\rm{mix}}(\varepsilon, L^{(n)}) \le \mc O(\log n) \cdot t_{\rm{mix}}(\varepsilon, L)$.

\subsection{Hamiltonian simulation for n-qubit amplitude damping noise}\label{app:hamsim-for-ampdamp-noise}
In this section, we present the missing calculation about the amplitude damping channel $\mc N_t$ in \eqref{Simulation: amplitude damping}. To be more specific, We need to find the explicit expression of $U_t$ such that $\mc N_t(\rho) = \Tr_E(U_t ( \rho \otimes \ketbra{0}_E) U_t^*)$, where the amplitude damping noise is given by
\begin{equation}\label{def:amplitude damping noise}
    \mc N_t\begin{pmatrix}
        \rho_{00} & \rho_{01} \\
        \rho_{10} & \rho_{11}
    \end{pmatrix} = \begin{pmatrix}
        \rho_{00} + (1-e^{-2t}) \rho_{11} & e^{-t}\rho_{01} \\
        e^{-t}\rho_{10} & e^{-2t}\rho_{11}
    \end{pmatrix}.
\end{equation}
In terms of Stinespring dilation, $\mc N_t$ can be viewed as a coupling with a zero-temperature external bath $\ket{0}_E$, i.e., $$\mc N_t(\rho) = \Tr_E(V_t ( \rho \otimes \ketbra{0}_E) V_t^*), $$
with the partial isometry $V_t$ so that
\begin{equation}
    \begin{aligned}
        V_t:\;\; & \ket{0}_A \otimes \ket{0}_E \mapsto \ket{0}_A \otimes \ket{0}_E, \\
        & \ket{1}_A \otimes \ket{0}_E \mapsto \sqrt{1-e^{-2t}}\ket{0}_A \otimes \ket{1}_E + e^{-t} \ket{1}_A \otimes \ket{0}_E.
    \end{aligned}
\end{equation}
To find a suitable Hamiltonian to simulate $V_t$, we properly extend $V_t$ to a full unitary $U_t$ acting on the joint system $A \otimes E$. A specific choice is given as 
\begin{equation}\label{dilation unitary: amplitude damping}
    \begin{aligned}
        U_t:\;\; & \ket{0}_A \otimes \ket{0}_E \mapsto \ket{0}_A \otimes \ket{0}_E, \\
        & \ket{1}_A \otimes \ket{0}_E \mapsto \sqrt{1-e^{-2t}}\ket{0}_A \otimes \ket{1}_E + e^{-t} \ket{1}_A \otimes \ket{0}_E, \\
        & \ket{0}_A \otimes \ket{1}_E \mapsto e^{-t}\ket{0}_A \otimes \ket{1}_E - \sqrt{1-e^{-2t}} \ket{1}_A \otimes \ket{0}_E, \\
        & \ket{1}_A \otimes \ket{1}_E \mapsto \ket{1}_A \otimes \ket{1}_E.
    \end{aligned}
\end{equation}
We claim that we have
\begin{equation}
    U_t = \exp(-i \arccos (e^{-t}) H), \quad  H = \frac{1}{2}(X_A \otimes Y_{E} - Y_A \otimes X_E),
\end{equation}
where $X_A,Y_A,X_E,Y_E$ are the Pauli-$X,Y$ operators on the corresponding system. One can check directly that for any real number $\theta$,
\begin{align*}
  i\theta H(\ket{1}_A \otimes \ket{0}_E)  = \theta \ket{0}_A \otimes \ket{1}_E, \quad i\theta H(\ket{0}_A \otimes \ket{1}_E)  = -\theta \ket{1}_A \otimes \ket{0}_E
\end{align*}
Then by induction, we get for any $k\ge 0$ and real number $\theta$, 
\begin{equation}
    \begin{aligned}
        (i\theta H)^{2k}: & \ket{1}_A \otimes \ket{0}_E \mapsto \theta^{2k}(-1)^k \ket{1}_A \otimes \ket{0}_E, \\
        & \ket{0}_A \otimes \ket{1}_E \mapsto \theta^{2k}(-1)^k \ket{0}_A \otimes \ket{1}_E.\\
        (i\theta H)^{2k+1}: & \ket{1}_A \otimes \ket{0}_E \mapsto \theta^{2k+1}(-1)^{k+1} \ket{0}_A \otimes \ket{1}_E,\\
        & \ket{0}_A \otimes \ket{1}_E \mapsto \theta^{2k+1}(-1)^k \ket{1}_A \otimes \ket{0}_E.
    \end{aligned}
\end{equation}
Also note that $(i\theta H)^{k}(\ket{0}_A \otimes \ket{0}_E ) = (i\theta H)^{k}(\ket{1}_A \otimes \ket{1}_E ) = \textbf{0}$(zero vector) for $k\ge 1$. By Taylor expansion, we have 
\begin{equation}
    \begin{aligned}
       \exp(i\theta H): & \ket{0}_A \otimes \ket{0}_E \mapsto \ket{0}_A \otimes \ket{0}_E, \\
        & \ket{1}_A \otimes \ket{0}_E \mapsto \sum_{k\ge 0} \frac{\theta^{2k+1} (-1)^{k+1}}{(2k+1)!}\ket{0}_A \otimes \ket{1}_E + \sum_{k\ge 0} \frac{\theta^{2k} (-1)^k}{(2k)!} \ket{1}_A \otimes \ket{0}_E, \\
        & \ket{0}_A \otimes \ket{1}_E \mapsto \sum_{k\ge 0} \frac{\theta^{2k} (-1)^k}{(2k)!}\ket{0}_A \otimes \ket{1}_E + \sum_{k\ge 0} \frac{\theta^{2k+1} (-1)^{k}}{(2k+1)!} \ket{1}_A \otimes \ket{0}_E, \\
        & \ket{1}_A \otimes \ket{1}_E \mapsto \ket{1}_A \otimes \ket{1}_E.
    \end{aligned}
\end{equation}
Recall that 
\begin{equation}
    \cos \theta = \sum_{k\ge 0} \frac{\theta^{2k} (-1)^k}{(2k)!},\quad \sin \theta = \sum_{k\ge 0} \frac{\theta^{2k+1} (-1)^{k}}{(2k+1)!}, 
\end{equation}
we have 
\begin{align*}
    & \exp(i\theta H) (\ket{1}_A \otimes \ket{0}_E) =- \sin \theta \ket{0}_A \otimes \ket{1}_E + \cos \theta \ket{1}_A \otimes \ket{0}_E , \\
    &  \exp(i\theta H) (\ket{0}_A \otimes \ket{1}_E) = \cos \theta \ket{0}_A \otimes \ket{1}_E + \sin \theta \ket{1}_A \otimes \ket{0}_E.
\end{align*}
Comparing the expression in \eqref{dilation unitary: amplitude damping}, we choose $\theta = -\arccos (e^{-t}) \in [-\frac{\pi}{2}, 0]$, and we get 
\begin{equation}
    U_t = \exp(-i \arccos (e^{-t}) H).
\end{equation}
Finally note that $[X_A \otimes Y_{E}, Y_A \otimes X_E]=0$, we have 
\begin{equation}
    \adm_{U_t} = \adm_{\exp(\frac{i \arccos (e^{-t})}{2}Y_A \otimes X_E )} \adm_{\exp(\frac{-i \arccos (e^{-t})}{2} X_A \otimes Y_{E} )}.
\end{equation}
\subsection{Hamiltonian simulation for n-qubit Pauli noise using one ancilla qubit}\label{app:hamsim-for-pauli-noise}
In this section, we show that using the calculation of the previous section, we can construction a perfect simulation of the Pauli-noise on the qubit system $A = \mc H_2$: 
\begin{equation}
    \mc T_t = \exp(t (L_X + L_Y)),\quad (L_X+L_Y)(\rho):= X\rho X + Y\rho Y - 2\rho.
\end{equation}
In fact, denote $\rho = \begin{pmatrix}
    \rho_{00} & \rho_{01} \\
    \rho_{10} & \rho_{11}
\end{pmatrix}$, a calculation based on Taylor expansion shows that 
\begin{equation}
\begin{aligned}
   \mc T_t\begin{pmatrix}
    \rho_{00} & \rho_{01} \\
    \rho_{10} & \rho_{11}
\end{pmatrix}& = \begin{pmatrix}
    \frac{1+ e^{-4t}}{2} \rho_{00} + \frac{1- e^{-4t}}{2} \rho_{11} & e^{-2t}\rho_{01} \\
    e^{-2t}\rho_{10} & \frac{1- e^{-4t}}{2} \rho_{00} + \frac{1+ e^{-4t}}{2} \rho_{11}
\end{pmatrix} \\
& = \frac{1}{2} \begin{pmatrix}
     \rho_{00} + (1- e^{-4t}) \rho_{11} & e^{-2t}\rho_{01} \\
    e^{-2t}\rho_{10} & e^{-4t} \rho_{11}
\end{pmatrix} + \frac{1}{2}\begin{pmatrix}
     e^{-4t}\rho_{00}  & e^{-2t}\rho_{01} \\
    e^{-2t}\rho_{10} & (1- e^{-4t}) \rho_{00} + \rho_{11}
\end{pmatrix} \\
& = \frac{1}{2}(\mc N_{2t}(\rho) + \widehat{\mc N}_{2t}(\rho)),
\end{aligned}
\end{equation}
where $\mc N_t$ is the amplitude damping noise defined in \eqref{def:amplitude damping noise} and $\widehat{\mc N}_t$ is defined by 
\begin{equation}
    \widehat{\mc N}_t\begin{pmatrix}
        \rho_{00} & \rho_{01} \\
        \rho_{10} & \rho_{11}
    \end{pmatrix} = \begin{pmatrix}
        e^{-2t}\rho_{00} & e^{-t}\rho_{01} \\
        e^{-t}\rho_{10} & (1-e^{-2t})\rho_{00} + \rho_{11}
    \end{pmatrix}.
\end{equation}
Using exactly the same argument as the previous section, we have 
\begin{equation}
    \widehat{\mc N}_t(\rho) = \Tr_E(\widehat{U}_t ( \rho \otimes \ketbra{0}_E) \widehat{U}_t^*),\quad \widehat{U}_t = \exp(-i \frac{\arccos(e^{-t})}{2} (X_A \otimes Y_E + Y_A \otimes X_E)).
\end{equation}
Therefore, for the encoding $\mc E: \rho \mapsto \rho \otimes \ketbra{0}_E$ and decoding $\mc D: \rho_{AE} \mapsto \rho_A$, we have 
\begin{equation}
    \mc T_t = \mc D \circ \frac{1}{2}\left(\adm_{U_1(2t)} \adm_{U_2^*(2t)} + \adm_{U_1^*(2t)} \adm_{U_2^*(2t)}\right) \circ \mc E,
\end{equation}
where \begin{equation}
    U_1(t) = \exp(\frac{i \arccos (e^{-t})}{2}Y_A \otimes X_E ), \quad U_2(t)= \exp(\frac{i \arccos (e^{-t})}{2}X_A \otimes Y_E ).
\end{equation}
\end{appendices}
\bibliography{sim}
\end{document}